\newcounter{parentnumber}
\newtheorem{proposition}{Proposition}
\newtheorem{assumption}{Assumption}
\newtheorem{remark}{\bf Remark}
\def\phi{\varphi}
\def\l{\left}
\def\r{\right}
\def\({\left(}
\def\){\right)}
\def\bp{{\mathbf{p}}}
\def\b0{{\mathbf{0}}}
\def\bR{{\mathbf{R}}}
\newcommand{\nn}{\nonumber}
\setlist[itemize]{leftmargin=*}
\begin{document}
%\textheight = 25.3cm

%%%%%%%%%%%%%
% TITLE
\title{\LARGE Hidden Vehicle Sensing via Asynchronous V2V Transmission: A Multi-Path-Geometry Approach}
%\author{\vspace{-3pt}Kaifeng Han$^\dag$, Seung-Woo Ko$^\dag$, Hyukjin Chae$^\ddag$, Byoung-Hoon Kim$^\ddag$, and Kaibin Huang$^\dag$\\
%\vspace{-3pt} {\small $\dag$ Dept. of EEE, The  University of  Hong Kong, Hong Kong }\\{\small $\ddag$ LG Electronics, S. Korea}\\ \vspace{-10pt} {\small Email: huangkb@eee.hku.hk}}
\author{ Kaifeng Han, Seung-Woo Ko, Hyukjin Chae, Byoung-Hoon Kim, and \\ Kaibin Huang
%%\thanks{\IEEEauthorrefmark{2}These authors contributed equally to this work.}
\thanks{ K. Han and K. Huang are with The  University of  Hong Kong, Hong Kong. S.-W Ko is with Korea Maritime and Ocean University, S. Korea. H. Chae and B.-H Kim are with LG Electronics, S. Korea (email:
huangkb@eee.hku.hk). Part of this work will be presented in IEEE VTC 2018-Fall.}}
\maketitle

\vspace{-55pt}
\begin{abstract}
\vspace{-15pt}
Accurate vehicular sensing is a basic and important operation in autonomous driving. Unfortunately, the existing techniques have their own limitations. For instance,  the communication-based approach (e.g., transmission of GPS information) has high latency and low reliability while the reflection-based approach (e.g., RADAR) is incapable of detecting \emph{hidden vehicles} (HVs) without line-of-sight. This is arguably the reason behind some recent fatal accidents involving autonomous vehicles. To address this issue, this paper presents a novel HV-sensing technology that exploits multi-path transmission from a HV to a \emph{sensing vehicle} (SV). The powerful  technology enables the SV  to detect multiple HV-state parameters including position, orientation of driving direction, and size. Its implementation does not even require \emph{transmitter-receiver synchronization} like  conventional mobile positioning techniques. Our design approach leverages  estimated  information on multi-path [namely their \emph{angles-of-arrival} (AoA), \emph{angles-of-departure} (AoD), and \emph{time-of-arrival} (ToA)] and their geometric relations. As a result, a complex system of equations or  optimization problems, where   the desired HV-state parameters are  variables,  can be formulated for different channel-noise conditions.   The development of intelligent  solution methods ranging from least-square estimator to disk/box minimization yields a set of practical HV-sensing techniques. We  study their feasibility conditions in terms of the required number of paths. Furthermore, practical solutions, including sequential path combining and random directional beamforming, are proposed to enable HV-sensing given  insufficient paths. Last, realistic simulation of driving in both  highway and rural scenarios demonstrates the effectiveness of the proposed techniques. In summary, the proposed technique will enhance the capabilities of existing vehicular sensing technologies (e.g., RADAR and LIDAR) by enabling HV-sensing.
\end{abstract}
\vspace{-20pt}
%\begin{IEEEkeywords}
%Autonomous driving, vehicular positioning and sensing, vehicle-to-vehicle (V2V) transmission, hidden vehicles detection.
%\end{IEEEkeywords}

\section{Introduction}

\emph{Autonomous driving} (auto-driving)  aims at   reducing  car accidents, traffic congestion, and greenhouse gas emissions by automating the  transportation process \cite{choi2016millimeter2}. The potential  impact of the cross-disciplinary technology has attracted heavy R\&D investments not only by leading car manufacturers (e.g., Tesla) but also Internet companies (e.g., Google). One primary operation  of auto-driving is vehicular positioning \cite{cui2016vehicle, cohda2016, balico2018localization}, namely positioning nearby vehicles and tracking  other parameters such as sizes and trajectories. The information then serves as inputs for computing and control tasks such as navigation and accidence avoidance. There exist diversified approaches for vehicular positioning but they all have their own drawbacks.  One conventional approach  is to exchange the absolute location information among a group of nearby vehicles by using \emph{vehicle-to-vehicle} (V2V) transmission \cite{yao2011improving2}. Its  main drawbacks include high latency due to the  data exchanging process and low reliability arising from inaccurate  \emph{Global Positioning System} (GPS) information in e.g., dense urban areas  or tunnels  \cite{choi2016millimeter2}. Another approach is to deploy sensors ranging from    \emph{LIght Detection and Ranging} (LIDAR) to \emph{RAdio Detection And Ranging} (RADAR). As illustrated in  Fig.~\ref{v2v_network}, they are  capable of  sensing the \emph{line-of-sight} (LoS) vehicles,
but cannot ``see through" a large solid object (e.g.,  a truck) to detect \emph{hidden vehicles} (HVs). A comprehensive discussion of existing approaches is given in the sequel.
Motivated by their drawbacks,  this paper presents a novel technology  for accurately sensing a HV including detecting its  position,  orientation of driving direction, and size by exploiting the multi-path geometry of asynchronous V2V transmission. %The proposed technique can assist the vehicular overtaking as well as intersection passing, and further complement simutaneoSLAM via constructing a more complete map covering both adjacent/LoS and hidden/NLoS vehicles.

\begin{figure}[t]
\centering
\includegraphics[width=8cm]{./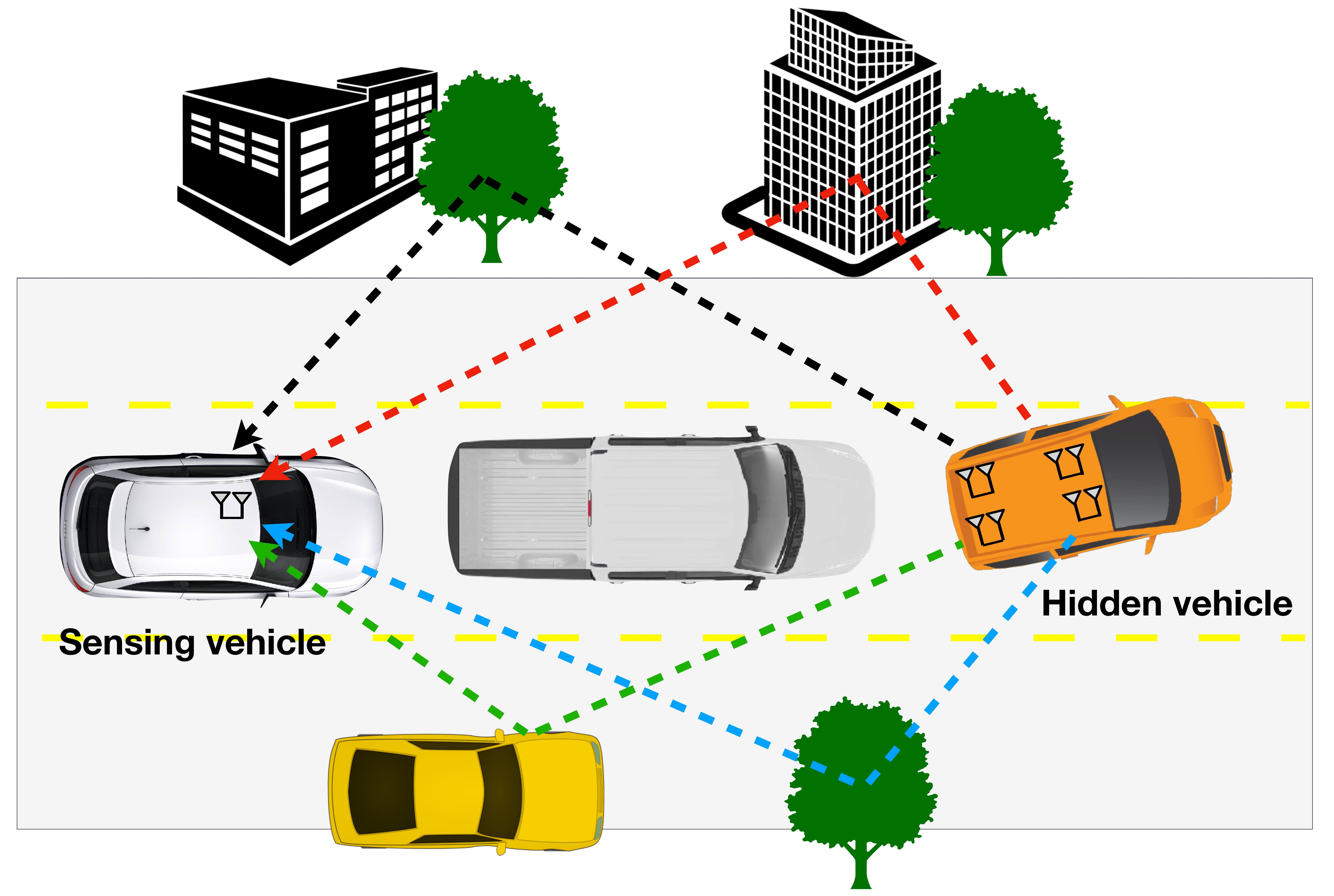}
\vspace{-10pt}
\caption{Hidden vehicle scenario with multi-path NLoS channels.}\label{v2v_network}
\vspace{-30pt}
\end{figure}

\vspace{-20pt}
\subsection{Wireless Transmission and Vehicular Positioning}\label{Sec:Intro_VehiPosi}
\vspace{-5pt}
Wireless transmission underpins different approaches for vehicular positioning  as~follows.

\subsubsection{Communication-Based Vehicular Positioning}  Nearby vehicles can position each other by  \emph{vehicle-to-everything} (V2X) communication to exchange GPS information   \cite{yao2011improving2}. The state-of-the-art protocol for connecting vehicles is the low-latency \emph{dedicated short-range communication} (DSRC) \cite{kenney2011dedicated},  a variant of the IEEE 802.11 Wi-Fi random-access protocol. The effectiveness of the protocol  has been examined in the scenario of  two-lane rural highway \cite{motro2016vehicular}. Recently, \emph{the 3rd Generation Partner Project} (3GPP) has initiated a standardization process  to realize   \emph{cellular-based V2X} (C-V2X) communication supporting  higher data  rate and larger coverage than DSRC~\cite{seo2016lte}.
To boost data rate, the technology aims at implementation in the \emph{millimetre-wave} (mmWave) spectrum where abundant of bandwidth is available  \cite{heath2016overview}.
A key challenge faced by  mmWave V2X communication is the high overhead arising from frequent  ultra-sharp  beam training and  alignment to cope with fast fading in channels between  high-speed vehicles.     %\cite{va2017impact}.
To reduce this overhead, a fast beam-alignment scheme is proposed  in \cite{perfecto2017millimeter} that leverages  matching theory and swarm intelligence to efficiently pair  vehicles. An alternative technique for accelerating beam alignment  is to leverage information generated by  either onboard RADAR \cite{gonzalez2016radar} or GPS \cite{va2018inverse} to  deduce  useful channel information.
%It is also possible to use   location specific information stored in the  cloud \cite{jiang2018task} to help  beam training, alignment, and tracking.

The main challenge faced by the communication-based approach is that their reliability and latency may not meet the requirements of mission-critical scenarios such as accident avoidance at high speeds. Consider the reliability issue. The GPS information exchanged between vehicles can be highly inaccurate due to the blockage of GPS signals from satellites in urban areas or tunnels. Next, consider the latency issue. The implementation of V2X communication in a dense vehicular network incurs high communication overhead due to many practical factors including complex protocols, packet loss and retransmission, source and channel coding, limited link life time, etc. Due to the above issues, the application of the communication-based  approach is limited to  pedestrian positioning but not yet suitable    for  a latency-sensitive task like  vehicular~positioning.

\subsubsection{Reflection-Based Vehicular Positioning}
Auto-driving vehicles  are typically equipped with    RADAR and LIDAR among other sensing devices.   The two sensing technologies both adopt  the reflection-based approach, namely that the sensors detect  the reflections from vehicles and objects in the environment but using different mediums, microwaves and  laser light, respectively.

LIDAR  steers  ultra-sharp  laser beams to scan the surrounding environment and generates a dynamic high-resolution \emph{three-dimensional} (3D) map for  navigation  \cite{schwarz2010lidar}. The main drawback of LIDAR is its ineffectiveness under hostile weather conditions due to the difficulty of light in penetrating fog, snow, and  rain \cite{choi2016millimeter2}. In addition, LIDAR is currently too expensive to be practical and  the huge amount of generated data is challenging to be processed within the ultra-low latency required for safe driving.

RADAR  can locate a target   object as well as estimate its velocity  via sending a designed  waveform [e.g., \emph{frequency modulated continuous waveform} (FMCW)] and analyzing its reflection by the  object \cite{daniels2017forward}. Particularly, the metal surfaces of vehicles are  capable of reflecting microwaves with negligible absorption. For the reason,   RADAR is    popularly used  for long-range  sensing. Recent breakthroughs in mmWave RADAR  make it feasible to  deploy large-scale but high impact arrays for  sharp beamforming to achieve a high positioning accuracy \cite{menzel2012antenna}.  Compared with LIDAR,  RADAR can provide longer  sensing ranges (up to hundreds-of-meter) and retain the effectiveness  under hostile weather conditions or in an environment with poor lighting. One disadvantage of  RADAR is that it may incorrectly recognize a harmless small metal object  as a much larger object due to scattering, leading to false alarms \cite{kong2017millimeter}.

In the context of auto-driving, perhaps the most critical limitation of LIDAR, RADAR and other reflection-based technologies is  that they can  detect only vehicles with LoS since neither microwaves nor laser light can penetrate a large solid object such as a truck or a building. However, detecting HVs with \emph{non-LoS} (NLoS) is crucial for  collision avoidance  in complex scenarios such as overtaking and cross junctions as illustrated in   Fig.~\ref{v2v_network}.

\vspace{-20pt}
\subsection{Localization by Synchronous Transmission}
\vspace{-2pt}
Besides to vehicular positioning, there exists an active research area to estimate the positions of mobile devices in cellular networks, a. k. a. \emph{localization} \cite{kuutti2018survey2}, most of which rely on 
\emph{synchronous} transmissions. 
For example, a receiver (e.g., a mobile device) estimates the relative position of a transmitter [e.g., \emph{base station} (BS)] from the prior knowledge of the transmitted waveform (see e.g., \cite{kuutti2018survey2, miao2007positioning}).  
The effectiveness of this approach hinges on the key assumption of perfect {synchronization between the transmitter and receiver}. The reason is that the \emph{distance} of each propagation path can be directly converted from its  \emph{propagation delay}, i.e., ToA or TDoA (for pulse transmission)  \cite{guvenc2009survey} or \emph{frequency variation} (for FMCW transmission) \cite{feger201377}, which can be easily measured given the said synchronization. 
Localization via asynchronous transmission is possible when \emph{received-signal-strength} (RSS) information is given \cite{liu2007survey}, but the resultant accuracy is far from that via synchronous transmission. 
On the other hand, spatial parameters of individual paths, including the AoA and AoD,  can be estimated if the transmitter and receiver are equipped with antenna arrays to perform spatial filtering \cite{seow2008non}. Combining the distances and spatial parameters of multi-paths and exploiting their geometric relations enable the  localization despite the lack of LoS, given the perfect \emph{synchronization among multiple transmitters}～\cite{guvenc2009survey, wei2011aod, shikur2014tdoa}.  

The assumption of perfect transmitter-receiver or transmitter-transmitter synchronizations is reasonable for localization in cellular networks since the BSs and mobiles are relatively stationary.
However, in the auto-driving scenario, the transmitter and receiver are  a HV and a \emph{sensing vehicle} (SV), respectively,  and their synchronization is impractical especially at high speeds \footnote{The synchronization error for wireless communications is typically around $\pm0.39\mu\sec$ \cite{3GPPR36133}, which is negligible compared with the size of cyclic prefix ($4.7\mu\sec$). However, from positioning perspective, this synchronization error will result in $\pm117$m distance error, which is unacceptable and thus calls for developing a positioning technique without transmitter-receiver synchronization. }. 
 This renders  the approach based on synchronous transmission unsuitable for vehicular positioning.  
In view of the prior work, sensing a HV remains largely an open problem and tackling it is the theme of this work.

\begin{table*}
\centering
\caption{Comparison of different vehicular positioning approaches}
\vspace{-10pt}
\setlength{\tabcolsep}{2pt}
\footnotesize
\begin{tabular}{c|c|c|c|c}
\toprule
\textbf{Approach} &\textbf{Require LoS?}  & \textbf{Require Sync.?} & \textbf{Size Detection} &\textbf{Reliability} \\
\midrule
\textbf{Communication-Based} & No  & Yes & $\times$ & Low (GPS dependent)\\
\textbf{Reflection-Based} & Yes  & Yes & LIDAR: $\checkmark$ RADAR $\times$ & Low (environment dependent)\\
\textbf{Synchronous Transmission} &  No  & Yes & $\times$ & Low (need sync.) \\
\textbf{HV-Sensing (proposed)} &  No  & No  & $\checkmark$  & High \\
\bottomrule
\end{tabular}
\vspace{-30pt}
\label{Table:Compare}
\end{table*}

\vspace{-20pt}
\subsection{Main Contributions}
\vspace{-5pt}

In this work, we aim at tackling the open challenges faced by existing vehicular positioning techniques  as summarized below.

\noindent 1) \textbf{No LoS and lack of synchronization:} None of the existing techniques (see Table \ref{Table:Compare}) can be effective for sensing a HV that  has \emph{neither LoS nor synchronization} (with the SV).

\noindent 2) \textbf{Simultaneous detection of HV's position, orientation, and size:} Except for LIDAR, other existing vehicular positioning approaches are incapable of detecting the size and orientation of a HV, which are important safety information required in auto-driving.  For auto-driving safety,  all of the vehicle's safety information (position, orientation, and size) should be given simultaneously but the existing technologies cannot guarantee to achieve the goal due to various practical factors, e.g., GPS-denied environments, high mobility, and the perfect alignment among them.

\noindent 3)  \textbf{Insufficient multi-path:} Similar to the approach based on synchronous transmission, the current technology positions a HV by exploiting multi-path propagation. Such approaches  may not be feasible in the scenario with  sparse scattering and hence  insufficient paths. Tackling the challenge is important for making the technologies robust.

To tackle the above challenges, we propose a novel technology, called HV-sensing, to enable a SV to simultaneously sense the position, orientation of driving direction, and size of a HV without requiring SV-HV synchronization. The SV leverages the information of multi-path signals [including AoA, AoD and \emph{time-of-arrival} (ToA)] as well the derived  geometry relations between the paths so as to construct tractable systems of equations or optimization problems,  where the HV position, orientation of driving direction, and size are unknown variables to be found.  A set of HV-sensing techniques is designed for operation in different practical  settings ranging from low to high \emph{signal-to-noise ratios} (SNRs), single-cluster to multi-cluster HV arrays, and small to large waveform sets. The differences between the proposed HV-sensing technology  and conventional approaches are summarized in Table~\ref{Table:Compare}.

The proposed vehicular sensing technique can be used to assist and enhance the current positioning approaches by overcoming their limitations as well as enabling the vehicle to sense the state of a HV including detecting its position, orientation, and size. This technique can be integrated with current vehicular positioning system to further improving safety of auto-driving. The main contributions of this work are summarized as follows.

  {\noindent 1) \textbf{Sensing HV  position and orientation by using single cluster array:} Consider the case that the HV array contains a cluster of collocated antennas. The goal is to simultaneously estimate HV's position and orientation without SV-HV synchronization.} The HV transmits orthogonal waveforms over different antennas, enabling the SV to estimate the  multi-path information (AoA, AoD, and ToA). Given the information and when noise is negligible, a complex system of equations is constructed and solved in sequential steps at SV to obtain the desired HV-state  parameters. On the other hand, when noise is present, the sensing problem is reformulated as   \emph{least-square} (LS) estimation also solved in a sequential procedure. For the HV sensing to be feasible, the required numbers of paths are at least  4 in 2D propagation (see Proposition \ref{pro:minNumPath}) and 3 in 3D propagation (see Proposition \ref{pro:minNumPath3D}).

  {\noindent 2)  \textbf{Sensing HV  size by using multiple clusters array:} Consider the case that the HV array contains multiple clusters of  antennas that are distributed over the vehicular body. In this case, the goal is to further estimate the HV size along with its position and orientation.} Two specific schemes  are presented.  The first assumes the transmission of multiple orthogonal waveform sets so that the SV can group the paths according to their originating HV antenna clusters. Then the scheme can build on the preceding design (as shown in the point \textbf{1}) to estimate the HV size via  efficiently positioning the HV antenna clusters, which also yields the HV position and orientation. The second assumes the transmission of an identical waveform set such that the said path-grouping is infeasible. Then alternative size detection techniques are proposed based on efficient  disk or box minimization under the constraint that the disk or box encloses the HV array. The required numbers of paths for the first sensing scheme is found to be $6$ and that for the second scheme is $4$. Nevertheless, when both of two schemes are feasible, the former outperforms the latter as multiple orthogonal waveform sets help to improve sensing accuracy.

\noindent 3) \textbf{Coping with insufficient multi-path:} To make the proposed HV-sensing techniques more reliable, we further propose practical solutions for increasing the number of available paths in the case where there are insufficient for meeting the feasibility requirements of  the above HV-sensing techniques. The first solution is to \emph{combine paths} exploited in multiple time instants  and the second is to apply random directional beamforming for uncovering \emph{hidden paths} invisible in the case of isotropic HV transmission. The solutions are complementary and can be jointly implemented to maximize the number of significant paths for enhancing the sensing accuracy.

\noindent 4) \textbf{Realistic simulation:} The proposed HV-sensing techniques are evaluated using practical simulation models of highway and rural scenarios  and found to be effective.

%The remainder of the paper is organized as follows. Section \ref{sec:SigModel} introduces the system model and problem formulation for HV-sensing. Sections~\ref{sec:singleCluster} and \ref{Sec:Multi-Cluster}  present the HV-sensing techniques for the cases of single-cluster and multi-cluster HV arrays, respectively. The solutions for the practical issue of  insufficient multi-path for HV-sensing are developed in Section \ref{sec:practicalIssue}. Simulation results are presented  in Section \ref{sec:simulation}, followed by concluding remarks in Section \ref{sec:conclusion}.

\vspace{-10pt}
\section{System Model}\label{sec:SigModel}
\vspace{-5pt}
We consider a two-vehicle system where a SV attempts to detect the (relative) position, size, and orientation  of a HV blocked by obstacles such as trucks or buildings as illustrated in Fig.~\ref{v2v_network}.  An antenna cluster refers to a set of \emph{collocated} antennas where the half-wavelength antenna spacing is negligible compared with vehicle sizes and propagation distances. An array can comprises a single or multiple antenna clusters, referred to as a \emph{single-cluster} and a \emph{multi-cluster} array, respectively. The deployment of a single-cluster array at the HV enables  the SV to detect the HV's position and orientation. On the other hand, a multi-cluster HV array can further make it possible for the SV  to estimate the HV size. For the purpose of  exposition, in the case of multi-cluster HV array, we consider $4$ clusters located at the vertices of a rectangle representing  the vehicle. The principle of HV-sensing design in the sequel is based on the efficient detection of the clusters' positions and thus  can be  straightforwardly extended  to other clusters' topologies with irregular clusters' distributions. For HV-sensing in both scenarios, the SV requires only a single-cluster array. Signal propagation is assumed to be contained in the 2D plane and the  results are subsequently extended  to the 3D propagation. The channel model, V2V transmission, and sensing problem are described in the following sub-sections. The notations used frequently throughout the paper are summarized in Table \ref{Table:Notation}.

\begin{table}[t]
\centering
\caption{Summary of Notation}
\vspace{-10pt}
\setlength{\tabcolsep}{2pt}
\footnotesize
\begin{tabular}{c|c}
\toprule
\textbf{Notation} &\textbf{Meaning} \\
\midrule
AoA, AoD, ToA, orientation of driving direction & $\theta$, $\varphi$, $\lambda$, $\omega$  \\
%Orientation of driving direction& $\omega$  \\
%Relative velocity of HV to SV & $v$ \\
Propagation distance, propagation distance after reflection & $d$, $\nu$  \\
Propagation delay of signal,  synchronization gap between HV and SV & $\tau$, $\Gamma$ \\
%Clock-synchronization gap between HV and SV & $\Gamma$ \\
Number of transmit, receive antennas per cluster & $M_t$, $M_r$ \\
Carrier frequency and bandwidth of signal & $f_c$, $B_s$ \\
Transmitted, received signal (waveform) vector & $\mathbf{s}(t)$, $\mathbf{r}(t)$ \\
Response vector of transmit, receive antennas per cluster & $\mathbf{a}(\varphi)$, $\mathbf{b}(\theta)$ \\
Channel coherence time, signal waveform duration & $T_c$, $T_w$ \\
Number of observed paths and its index & $P$, $p$ \\
Azimuth and elevation angles of AoA, AoD, orientation in 3D & $\{\theta, \vartheta\}$, $\{\varphi, \psi\}$, $\{\omega, \varrho\}$\\
Length and width of vehicle & $L$, $W$ \\
Number of repetitive signal transmissions and its interval & $Q$, $\Delta$\\
\bottomrule
\end{tabular}
\vspace{-30pt}
\label{Table:Notation}
\end{table}

\vspace{-20pt}
\subsection{Multi-Path NLoS Channel}\label{subsec:ScenarioDescription}
\vspace{-5pt}

The NLoS channel between SV and HV contains multi-paths reflected by a set of scatterers. Consider the characteristics of V2V channel \cite{karedal2009geometry}. We make the following assumption.
\vspace{-10pt}
\begin{assumption}[Single-Bounce Scattering]\label{assump:SingleBounce}\emph{The single-bounce scattering is used to model the V2V propagations that the NLoS signals are assumed to have only one reflection due to~scatterers.}
\end{assumption}
\vspace{-10pt}

 {Assumption \ref{assump:SingleBounce} is widely used in the literature of localization via NLoS paths \cite{miao2007positioning, seow2008non}.
Various methods have been proposed to detect single-bounce paths among multiple-bounce ones such a proximity detection \cite{seow2008non}, and a joint ToA-and-signal strength based detection \cite{yu2009statistical}, making sense to use Assumption \ref{assump:SingleBounce} in practice.
 Based on Assumption \ref{assump:SingleBounce}, a 2D Cartesian coordinate system is considered as illustrated in Fig.~\ref{2DSignalModel}  where the  SV array  is located at the origin and  the $X$-axis is aligned with the orientation of SV. Further consider a typical antenna cluster  at the HV.  Each NLoS signal path from the HV antenna cluster  to the SV array can be characterized by the following five parameters: the AoA at the SV denoted by $\theta$; the AoD at the HV denoted by $\varphi$; the orientation of the HV denoted by $\omega$; and the propagation distance denoted by $d$ which is divided into   the propagation distance after reflection, denoted by $\nu$, and the remaining distance $(d - \nu)$. The AoD and AoA are defined as azimuth angles relative to orientations of HV and SV, respectively. Fig.~\ref{2DSignalModel} graphically shows the definitions of the above parameters. %{\color{blue}It is worth mentioning that the scatterers usually exist in most of the driving scenarios such as on the road surface, buildings, infrastructures (e.g., guardrail), or other vehicles' metal bodies. Besides, these scatterers provide mirror-like reflections of signals with low attenuation \cite{sato1997measurements}, which help accurate signal estimation.}

\vspace{-20pt}
\subsection{Hidden Vehicle Transmission}
\vspace{-5pt}
To enable sensing at the SV, the HV transmits a set of waveforms defined as follows. Each antenna cluster at HV  has $M_t$ antennas with at least half-wavelength spacing between adjacent antennas. Consider a typical antenna cluster. A set of   orthogonal waveforms are  transmitted over different antennas\footnote{It is implicitly assumed that enough orthogonal waveforms are given for HV's multi-antenna transmission a.k.a. \emph{orthogonal multiple access} (OMA). It is interesting to extend the current design to \emph{non-orthogonal multiple access} (NOMA) \cite{Ding2017} against case of insufficient orthogonal waveforms, which is outside the scope of current work. }. Let $s_m(t)$ be the finite-duration  baseband waveform in $t\in [0, T_s]$ assigned to the $m$-th HV antenna with the bandwidth $B_s$. Then the waveform orthogonality is specified by  $\int_0^{T_s} s_{m_1}(t) s^*_{m_2}(t)dt=\delta(m_1-m_2)$ with the delta  function  $\delta(x)=1$ if $x=0$ and $0$ otherwise. The transmitted waveform vector  for the $k$-th  HV antenna cluster is  $\mathbf{s}^{(k)}(t) = [s^{(k)}_1(t), \cdots, s^{(k)}_{M_t}(t)]^{\mathrm{T}}$. In the case of multi-cluster HV array, the waveform sets for different clusters are either \emph{identical} or \emph{orthogonal} with each other. The use of orthogonal waveform sets allows SV to group the detected paths according to their originating antenna clusters as elaborated in the sequel, and hence this case is referred to as \emph{decoupled clusters}. Then the other case is called \emph{coupled clusters}. With the prior knowledge of transmitted waveforms, the SV with $M_r$ antennas can scan   the received signal due to the HV transmission to resolve multi-path as discussed in the next sub-section.

\begin{figure}[t]
\centering
\includegraphics[width=7.5cm]{./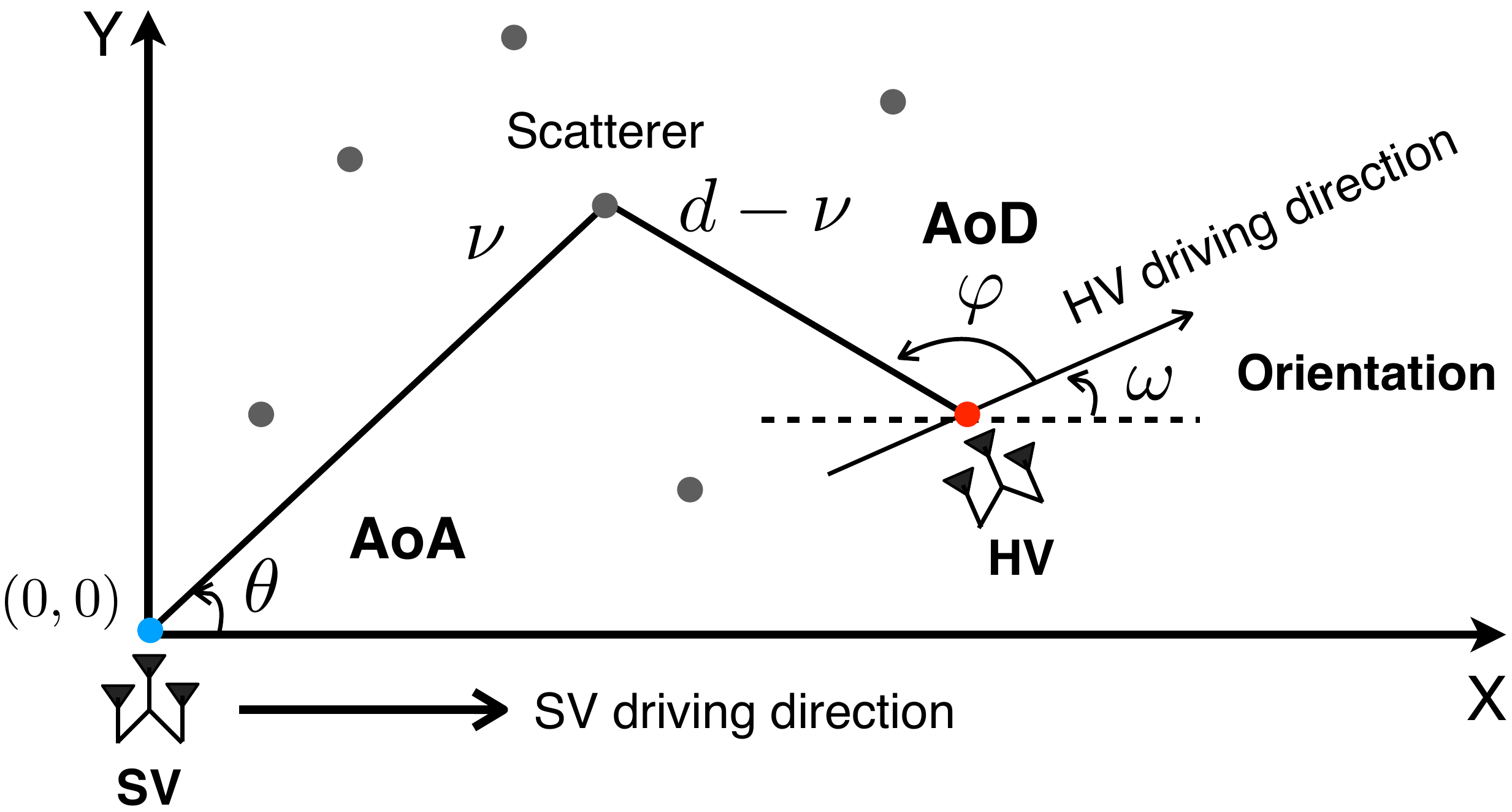}
\vspace{-10pt}
\caption{The geometry of a 2D propagation path and the definitions of  parameters.}\label{2DSignalModel}
\vspace{-30pt}
\end{figure}

The expression of the received signal is obtained as follows.  Consider a typical HV antenna cluster.  Based on the far-field propagation model, the cluster response vector is  a function of the AoD $\varphi$ defined as
\vspace{-15pt}
\begin{align}\label{HVsteeringVector}
\mathbf{a}(\varphi) = [ \exp(j 2\pi f_c \alpha_{1}(\varphi)), \cdots,\exp(j 2\pi f_c \alpha_{M_t}(\varphi))   ]^{\mathrm{T}},
\end{align}
where $f_c$ denotes the carrier frequency and $\alpha_m(\varphi)$ is the propagation time difference between received signals departing from the HV's  $m$-th antenna and the first antenna, i.e., $\alpha_1(\varphi)=0$.
The response  vector of the SV array is written  in terms of AoA  $\theta$ as
\vspace{-10pt}
\begin{align}\label{SVsteeringVector}
\mathbf{b}(\theta) = [ \exp(j 2\pi f_c \beta_{1}(\theta)), \cdots,\exp(j 2\pi f_c \beta_{M_r}(\theta))   ]^{\mathrm{T}},
\end{align}
where $\beta_m(\theta)$ refers to the propagation time difference between received signals arriving at the SV's $m$-th antenna and the first antenna. We assume that SV has prior knowledge of both the HV-and-SV array configurations and thereby the response functions $\mathbf{a}(\varphi)$ and $\mathbf{b}(\theta)$. This is feasible by standardizing  the vehicular arrays' topology.   In addition, the Doppler effect is ignored based on the assumption that the Doppler frequency shift is much smaller than the waveform bandwidth and thus does not affect waveform orthogonality. Let $k$ denote the  index of HV antenna cluster and $P^{(k)}$ denote the number of received paths originating  from the $k$-th HV antenna cluster. The total number of  paths arriving  at SV is  $P = \sum_{k=1}^K P^{(k)}$. We represent  the received signal vector at SV as $\mathbf{r}(t) = [r_1(t), \cdots, r_{M_r}(t)]^{\mathrm{T}}$ that can be written in  terms of $\mathbf{s}(t)$, $\mathbf{a}(\varphi)$ and $\mathbf{b}(\theta)$  as
\vspace{-5pt}
\begin{align}%\label{SVSignal}
\mathbf{r}(t) = \sum_{k=1}^{K} \sum_{p=1}^{P^{(k)}} \gamma_{p}^{(k)} \mathbf{b}\l(\theta_{p}^{(k)}\r) \mathbf{a}^{\mathrm{T}}\l(\varphi_{p}^{(k)}\r)\mathbf{s}\l(t - \lambda_p^{(k)} \r) + \mathbf{n}(t), 
\end{align}
where $\gamma_{p}^{(k)}$ and  ${\lambda_p^{(k)}}$ denote the complex channel coefficient and
 ToA of path $p$ originating  from the  $k$-th HV antenna cluster, respectively,
and  $\mathbf{n}(t)$ represents channel noise. With \emph{no synchronization} between  the HV and SV,  the SV has no information of HV's transmission timing. Therefore, it is important to note that $\lambda_{p}^{(k)}$ differs from the corresponding propagation delay, denoted by ${\tau_p^{(k)}}$, with ${\tau_p^{(k)}}=\frac{d_p^{(k)}}{c}$ and $d_p^{(k)}$ being the propagation distance. Given an unknown clock-synchronization gap between the HV and SV denoted as  $\Gamma$,  $\tau_{p}^{(k)}=\lambda_{p}^{(k)}-\Gamma$.

\begin{figure}[t]
\centering
   \includegraphics[width=12cm]{./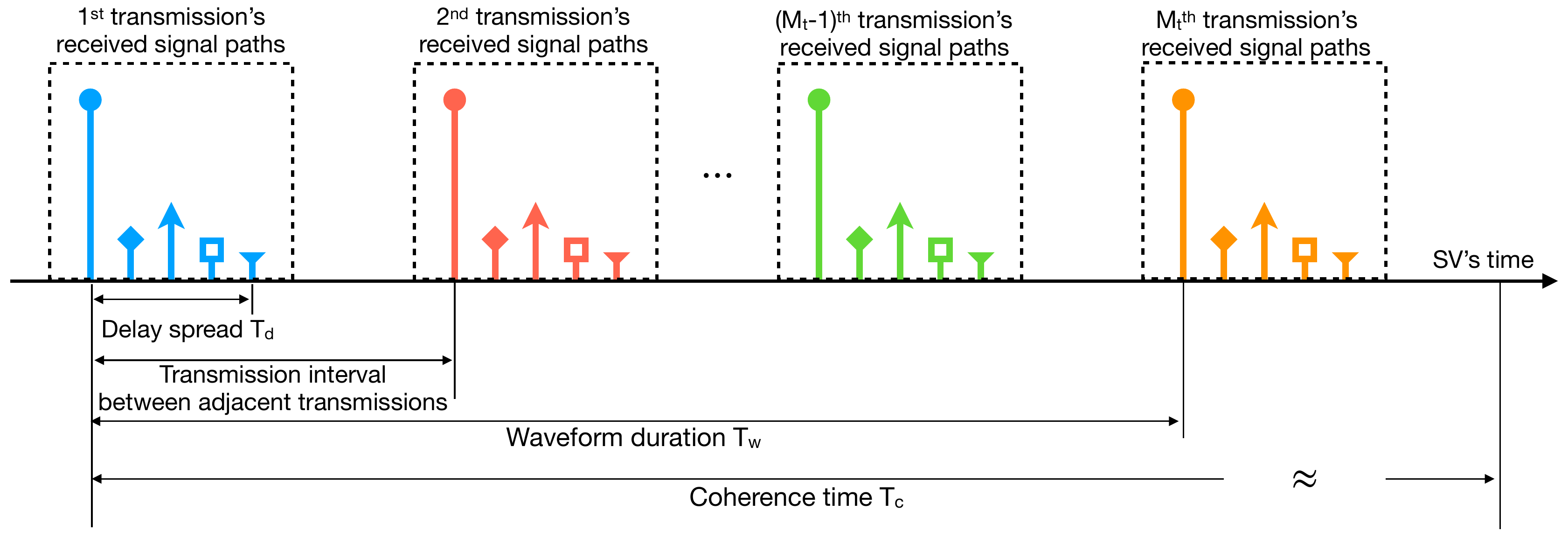}
\vspace{-10pt}
\caption{The relation among delay spread, waveform duration, and coherence time. A Fourier matrix is selected as a reference signal of the waveform of which the length is the number of transmit antenna $M_t$.}\label{fig:Waveform}
\vspace{-30pt}
\end{figure}

\vspace{-10pt}
\begin{remark}[Extension to Multi-Vehicle Auto-Driving Scenario]\label{remark:MultiVehicle}
\emph{The current two-vehicle auto-driving scenario can be extended to multi-vehicle scenario straightforwardly by jointly using the following two approaches. 
\textbf{1) Waveform pool \& sensing}: Interference-free multi-vehicle sensing is possible if multiple orthogonal waveforms are given, called a \emph{waveform pool}. The list of waveforms in the pool is periodically broadcast by a \emph{road side unit} (RSU), and each vehicle knows the list. Before selecting a waveform, the usages of all waveforms in the list are sensed. It is analogy to a carrier sensing mechanism in WiFi communication, and it is thus called a \emph{waveform sensing}. To avoid waveform collision, one waveform is selected in the list except the ones being used by nearby vehicles.  
\noindent \textbf{2) Geo-zoning}: In spite of using waveform sensing, more than two vehicles could select the same waveform simultaneously if they are out of their sensing coverage, named as  \emph{hidden vehicle problem} (HVP).  To avoid HVP, a spatial isolation should be created where vehicles in different locations would be limited to select the resources for transmission from a certain time-frequency set, based on their absolute geographical location. In 3GPP specification, it is called geo-zoning or zone-based resource allocation \cite{3GPPR1}.}
\end{remark}
\vspace{-20pt}

\begin{remark}[Waveform Duration vs. Channel Coherent Time]\label{remark:waveform}
\emph{It is essential to study the condition on channel dynamics for accurate and stable positioning.
Specifically, the assumption of invariant ToA, AoA and AoD holds if the V2V channel is largely unchanged in the waveform duration:
\vspace{-10pt}
\begin{align}\label{Eq:SlowFadingCond}
T_c\gg T_w,
\end{align}
where $T_c$ and $T_w$ denote the channel coherence time and waveform duration, respectively. To verify the condition, we analyze the two metrics as follows.
\begin{itemize}
\item \emph{Coherence time:} It is well-known that $T_c$ is inversely proportional to the maximum Doppler frequency $f_\Delta$ as
$T_c\approx \frac{1}{f_\Delta}=\frac{c}{f_c v}$
with the relative velocity of HV to SV $v$ (HV and SV are driving in in opposite directions) and the carrier frequency $f_c$. When $v=200$ km/h and $f_c=5.9$ GHz,  $T_c$ approximately becomes $0.9153$ ms.
\item \emph{Waveform duration:} We select a $M_t$ by $M_t$ Fourier matrix as a codeword of the orthogonal waveform  \cite{rossi2014spatial}.
To guarantee the waveform's orthogonality, the interval between adjacent code transmissions should be larger than the channel's delay-spread, denoted by $T_d$, enabling to avoid inter-code interference (see Fig. \ref{fig:Waveform}). In other words, we should satisfy
$T_w
\geq M_t T_d$,
where $M_t$ refers to the length of the codeword.
Note the coverage of V2V channel with $f_c=5.9$ GHz is less than $1$ km \cite{Flament2018}, which is
translated to the maximum delay spread $\frac{10^3}{c}\approx 3.33$ $\mu$s.
Given $M_t=20$, we obtain $T_w\geq 0.0667$ ms.
\end{itemize}
As a result, the condition in \eqref{Eq:SlowFadingCond} is satisfied. It is worth mentioning that the technique proposed in the sequel is based on this
\emph{one-way} transmission of which the entire duration is significantly shorter than
conventional two-way positioning approaches (e.g., \emph{round-trip-time} (RTT) based), enabling accurate and stable positioning  in highly dynamic~environments.}
\end{remark}

\vspace{-25pt}
\subsection{Estimations of AoA, AoD, and ToA}\label{Subsec:EstAoAAoDToA}
The sensing techniques in the sequel assume that the SV has the  knowledge of AoA, AoD, and ToA of each receive NLoS signal path, say path $p$, {denoted by $\l\{\theta_{p}, \varphi_{p}, \lambda_{p}\r\}$ where $p \in \mathcal{P} = \{1, 2, \cdots, P\}$}. The knowledge can be acquired by applying  classical parametric estimation techniques briefly described  as follows.

 \noindent \textbf{1) Sampling}: The received analog signal $\mathbf{r}(t)$ and the waveform vector $\mathbf{s}(t)$ are sampled at the Nyquist rate $B_s$ and converted to the digital signal vectors $\mathbf{r}[n]$ and $\mathbf{s}\l[n\r]$, respectively.

 \noindent \textbf{2) Matched Filter}: The sequence of $\mathbf{r}[n]$ is matched-filtered by $\mathbf{s}\l[n\r]$. The resultant $M_r\times M_t$ coefficient matrix $\mathbf{Y}[z]$ is given by $\mathbf{Y}[z]=\sum_n \mathbf{r}[n]\mathbf{s}^H\l[n - z\r]$.
    The sequence of ToAs $\{\lambda_{p}\}$ can be estimated by detecting peak points of the matrix norm  $\|\mathbf{Y}[z]\|$, denoted by $\{z_p\}$, which can be converted into time by multiplication with  the time resolution $\frac{1}{B_s}$. One peak point can contain multiple signal paths if the signals arrive  within the same  sampling interval. %$\{\lambda_{\ell}\}$ and $\{z_i\}$ have different labels $\ell$ and $i$ respectively

\noindent \textbf{3) Multi-Path Estimation}: Given $\{\mathbf{Y}[z_p]\}$, AoAs and AoDs are jointly estimated by scanning two-dimensions angular domains by leveraging the prior knowledge of $\mathbf{a}(\varphi)$ in \eqref{HVsteeringVector} and $\mathbf{b}(\theta)$ in \eqref{SVsteeringVector}. It is called a 2D-\emph{multiple signal classification} (MUSIC) algorithm, which is the most widely used subspace-based detection method. The estimated AoA $\{\theta_p\}$ and AoD $\{\varphi_p\}$ are paired with the corresponding estimated ToA $\{\lambda_p\}$, which jointly characterize path $p$.

It is worth mentioning that only a small portion of signal paths can be observed among the entire ones because some of them experience severe signal attenuations due to path-loss and small-scale fading and their received signal strengths are not enough to estimate AoD/AoA accurately.

\vspace{-15pt}
\subsection{Hidden Vehicle Sensing Problem}\label{Subsec:TechnicalChallenging}
\vspace{-5pt}

The SV attempts to sense the HV's position, size, and orientation, which can be obtained by using parameters of AoA $\theta$, AoD $\varphi$, orientation $\omega$, distances $d$ and $\nu$, and location of multi-cluster HV array \footnote{The proposed sensing technique is to exploit the geometry relation between a few channel parameters of a signal path i.e., TDoA/AoA/AoD, which is analogy to transmitting a reference signal for conventional wireless channel estimation (e.g., \cite{ozdemir2007channel}).} Noting the first two parameters are obtained based on the estimations in Section~\ref{Subsec:EstAoAAoDToA} and the goal is to estimate the remaining parameters.

\vspace{-10pt}
\section{Sensing Hidden Vehicle with a Single-Cluster Array}\label{sec:singleCluster}
\vspace{-5pt}
In this section, we consider the scenario that a single-cluster array is deployed at HV. Then this section focuses on designing the sensing techniques for SV to detect 1) the HV position (i.e., position of the single-cluster array), specified by the coordinate $\bp = (x, y)$,  and 2) the HV orientation, specified by $\omega$ (see Fig.~\ref{2DSignalModel}). Based on the multi-path-geometry, $\bp$ is described as
\vspace{-5pt}
\begin{align}\label{TXposition2D}
\begin{cases}
x = \nu_{p} \cos(\theta_{p}) - (d_{p} - \nu_{p}) \cos(\varphi_{p} + \omega),  \\
y = \nu_{p} \sin(\theta_{p}) - (d_{p} - \nu_{p}) \sin(\varphi_{p} + \omega),
\end{cases}
~~\forall p \in \mathcal{P}.
\end{align}
The prior knowledge that  SV has for sensing is the parameters of $P$ NLoS paths estimated as described in Section \ref{Subsec:EstAoAAoDToA}. Each path, say path $p$,  is determined  by the parametric set $\{\theta_p, \phi_p, \lambda_p\}$ and orientation $\omega$ as \eqref{TXposition2D} shows. Then given the equations in \eqref{Eq:ProbForm1}, the \textbf{sensing problem} for the current scenario reduces to
\vspace{-10pt}
\begin{equation}\label{Eq:ProbForm1}
\bigcup_{p \in \mathcal{P}} \{\theta_p, \phi_p, \lambda_p\} \Rightarrow \{\bp, \omega\}.
\end{equation}
%The problem is solved in the following subsections.

\vspace{-20pt}
\subsection{Sensing Feasibility Condition}
In this subsection, it is shown that for the HV-sensing to be feasible, there should exist at least \emph{four} NLoS paths. To this end, by using \eqref{TXposition2D} and multi-path-geometry, we can obtain the following system of equations:
\vspace{-10pt}
\begin{align}\label{FGrelation}\tag{E1}\!\!\!\!\!\!
\begin{cases}
x_{p} = \nu_{p} \cos(\theta_{p}) - (d_{p} - \nu_{p}) \cos(\varphi_{p} + \omega) = \nu_{1} \cos(\theta_{1}) - (d_{1} - \nu_{1}) \cos(\varphi_{1} + \omega),  \\
y_{p} = \nu_{p} \sin(\theta_{p}) - (d_{p} - \nu_{p}) \sin(\varphi_{p} + \omega)  =\nu_{1} \sin(\theta_{1}) - (d_{1} - \nu_{1}) \sin(\varphi_{1} + \omega),
\end{cases}
 p  \in \mathcal{P},
\end{align}
where $(x_{p}, y_{p})$ denotes the coordinate characterized via  path $p$.  The number of  equations in \eqref{FGrelation} is $2(P-1)$, and the above system of equations has a unique solution when the dimensions of unknown variables are less than $2(P-1)$.
Since the   AoAs $\{\theta_{p}\}$ and AoDs $\{\varphi_{p}\}$ are known,  the number of unknowns is  $(2P+1)$ including the propagation distances $\{d_{p}\}$, $\{\nu_{p}\}$, and orientation $\omega$. To further reduce the number  of unknowns, we use the propagation time difference between signal paths also known as TDoAs, denoted by $\{\rho_{p}\}$, which can be obtained from the difference of ToAs as
$\rho_{p}=\lambda_{p}-\lambda_{1}$
where   $\rho_{1}=0$.
The propagation distance of signal path $p$, say $d_{p}$, is then expressed in terms of $d_{1}$ and $\rho_{p}$ as
\vspace{-10pt}
\begin{align}\label{Eq:TDoAandPropagationDist}
d_{p}=c(\lambda_{p}-\Gamma)
=c(\lambda_1-\Gamma)+c(\lambda_{p}-\lambda_{1})=d_{1}+c\rho_{p},
\end{align}
for  $p \in \{2, \cdots, P\}$. Substituting the above $(P-1)$ equations into   \eqref{FGrelation} eliminates the unknowns $\{d_2, \cdots, d_P\}$ and hence reduces  the number  of unknowns from $(2P+1)$ to $(P+2)$. As a result, \eqref{FGrelation} has a unique solution when $2(P-1)\geq P+2$.

\vspace{-10pt}
\begin{proposition}[Sensing Feasibility Condition]\label{pro:minNumPath}
\emph{To sense the position and orientation of a HV equipped with a single-cluster array, at least four NLoS signal paths are required: $P\geq 4$.}
\end{proposition}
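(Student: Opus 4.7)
The plan is to prove the feasibility bound by a dimension-counting argument on the system of equations \eqref{FGrelation}, which already encodes the multi-path geometry relative to the reference path $p=1$. Since the SV's sensing problem \eqref{Eq:ProbForm1} must yield a unique pair $\{\mathbf{p},\omega\}$, I will argue that the number of independent scalar equations available from the $P$ observed paths must be no less than the number of residual unknowns after incorporating all prior knowledge (the estimated $\{\theta_p,\varphi_p,\lambda_p\}$ and the TDoA constraint \eqref{Eq:TDoAandPropagationDist}). The target inequality will then be linear in $P$, and solving it yields $P\geq 4$.

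The steps in order are as follows. First, I would count equations: for each $p\in\{2,\dots,P\}$, the geometric identities in \eqref{FGrelation} contribute two scalar equations (one for $x_p=x_1$ and one for $y_p=y_1$), giving $2(P-1)$ scalar equations in total, since $p=1$ is the reference. Second, I would enumerate the unknowns appearing in \eqref{FGrelation}: the propagation distances $\{d_p\}_{p=1}^{P}$, the post-reflection distances $\{\nu_p\}_{p=1}^{P}$, and the orientation $\omega$, for a total of $2P+1$ unknowns (the AoAs and AoDs are already estimated, hence known). Third, I would invoke \eqref{Eq:TDoAandPropagationDist}, which writes $d_p=d_1+c\rho_p$ with the observable TDoA $\rho_p=\lambda_p-\lambda_1$; this removes $P-1$ unknowns (all $d_p$ for $p\geq 2$ become deterministic functions of the single unknown $d_1$), reducing the total to $P+2$ residual unknowns. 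Finally, imposing that the system admits a unique solution requires
\begin{equation*}
2(P-1)\;\geq\;P+2,
\end{equation*}
which rearranges to $P\geq 4$, establishing the proposition.

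The main obstacle I anticipate is not the counting itself but the implicit claim of independence: a pure dimension count gives only a necessary condition, since degenerate geometries (e.g., collinear scatterers, coincident AoAs, or pathological symmetries) could drop the effective rank of the system and leave the solution non-unique even when $P\geq 4$. To address this within the scope of a feasibility statement, I would note that Proposition~\ref{pro:minNumPath} is stated as a minimum requirement on $P$ (necessary condition), and that the generic-configuration case is what the sensing techniques developed in the sequel will exploit; any residual ambiguity in degenerate scatterer configurations is handled downstream (e.g., by the LS reformulation under noise or the path-augmentation solutions described in the contributions). With that caveat noted, the proof reduces cleanly to the equation/unknown balance above.
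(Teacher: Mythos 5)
Your proposal matches the paper's own argument exactly: the paper derives Proposition~\ref{pro:minNumPath} by counting the $2(P-1)$ equations in \eqref{FGrelation} against the $(2P+1)$ unknowns $\{d_p\}$, $\{\nu_p\}$, $\omega$, using \eqref{Eq:TDoAandPropagationDist} to eliminate $\{d_2,\dots,d_P\}$ and reduce the count to $P+2$, and then imposing $2(P-1)\geq P+2$. Your added caveat about generic versus degenerate scatterer configurations is a fair observation, but it goes beyond what the paper itself addresses; the core argument is the same.
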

\vspace{-10pt}

\vspace{-10pt}
\begin{remark}[Asynchronization]\emph{Recall that one sensing challenge is asynchronization between HV and SV represented by  $\Gamma$, which is a latent variable we cannot observe explicitly. In our proposed approach, based on the fact that all NLoS paths experience the same synchronization gap, we do not need to consider $\Gamma$ in the positioning procedure.}
%Considering TDoA helps solve  the  problem by avoiding the need of considering $\Gamma$ via exploiting the fact that all NLoS paths experience the same synchronization gap.
\end{remark}
%\vspace{-5pt}

\vspace{-25pt}
\subsection{Hidden Vehicle Sensing  without Noise}\label{Sec:PositionSingleMIMO}
\vspace{-10pt}

Consider the case of a high receive \emph{signal-to-noise ratio} (SNR) where noise can be neglected, i.e., the estimations of AoA/AoD/ToA $\{\theta_p, \phi_p, \lambda_p\}$ are perfect. Then the HV-sensing problem in \eqref{Eq:ProbForm1} is translated to solve the system of equations in  \eqref{FGrelation}. One challenge is that the unknown orientation $\omega$ introduces  nonlinear relations, namely  $\cos(\varphi_p + \omega)$ and $\sin(\varphi_p + \omega)$, in the equations. To overcome the difficulty, we adopt the following two-step approach: 1) estimate the correct orientation $\omega^*$ via its discriminant introduced in the sequel; 2) given $\omega^*$, the equation becomes linear and thus can be solved via  LS estimator, giving the position $\bp^*$.
To this end, the equations in  \eqref{FGrelation} can be  arranged in a matrix form as
\vspace{-10pt}
\begin{align}\label{ax=b}\tag{E2}
\mathbf{A}(\omega) \mathbf{z}= \mathbf{B}(\omega),
\end{align}
where $\mathbf{z}=(\mathbf{v}, d_1)^{\mathrm{T}} \in \mathds{R}^{(P+1)\times 1}$ and  $\mathbf{v} = \{\nu_1, \cdots,   \nu_P\}$. For matrix $\mathbf{A}(\omega)$, we have
\vspace{-5pt}
\begin{align}\label{matrixA}
\mathbf{A}(\omega) &=
\begin{bmatrix}
  \mathbf{A}^{(\cos)}(\omega) \\ \mathbf{A}^{(\sin)}(\omega)
\end{bmatrix}
\in \mathds{R}^{2(P-1) \times (P+1)},
\end{align}
where $\mathbf{A}^{(\cos)}(\omega)$ is
\begin{align}\label{A}
\begin{bmatrix}
    a_{1}^{(\cos)} & -a_{2}^{(\cos)}  & 0 & \cdots &0 & a_{1,2}^{(\cos)}  \\
    a_{1}^{(\cos)}   & 0 & -a_{3}^{(\cos)}  & \cdots& 0 & a_{1,3}^{(\cos)} \\
    \vdots  & \vdots & \vdots & \ddots & \vdots & \vdots \\
    a_{1}^{(\cos)}   & 0 & 0 &\cdots &-a_{P}^{(\cos)}  & a_{1,P}^{(\cos)} \\
\end{bmatrix}
%\in \mathds{R}^{(P-1) \times (P+1)},
\end{align}
with  $a_{p}^{(\cos)}  = \cos(\theta_p) + \cos(\varphi_p + \omega)$ and $a_{1,p}^{(\cos)} =  \cos(\varphi_p +\omega) - \cos(\varphi_1 + \omega)$, and $\mathbf{A}^{(\sin)}(\omega)$  is obtained by replacing all $\cos$ operations in \eqref{A} with $\sin$ operations. Next,
\vspace{-5pt}
\begin{align}\label{matrixB}
\mathbf{B}(\omega) =
\begin{bmatrix}
  \mathbf{B}^{(\cos)}(\omega) \\ \mathbf{B}^{(\sin)}(\omega)
\end{bmatrix}
\in \mathds{R}^{2(P-1)\times 1},
\end{align}
where $\mathbf{B}^{(\cos)}(\omega) =[
    c \rho_2 \cos(\varphi_2 + \omega), c \rho_3 \cos(\varphi_3 + \omega), \cdots, c \rho_P  \cos(\varphi_P + \omega)
]^{\mathrm{T}}$, and $\mathbf{B}^{(\sin)}(\omega)$ is obtained by replacing all $\cos$ in $\mathbf{B}^{(\cos)}(\omega)$ with $\sin$.

\noindent \underline{1) Computing $\omega^*$:} Note that \ref{ax=b} becomes an \emph{over-determined} linear system of equations if $P \geq 4$ (see Proposition \ref{pro:minNumPath}), providing the following discriminant of orientation $\omega$. Since the equations in \eqref{matrixA} are based on the geometry of multi-path propagation and HV orientation as illustrated in Fig.~\ref{2DSignalModel}, there exists a unique solution for  the equations. Then we can obtain from  \eqref{matrixA} the following  result, which is useful for computing $\omega^*$.

\vspace{-15pt}
\begin{proposition}[Discriminant of Orientation]\label{pro:discriminant}
\emph{With $P \geq 4$, the unique $\omega^*$ exists when $\mathbf{B}(\omega^*)$ is orthogonal to the null column space of $\mathbf{A}(\omega^*)$ denoted by $\mathsf{null}(\mathbf{A}(\omega^*)^{\mathrm{T}})\in \mathds{R}^{2(P-1)\times (P-3)}$:
\vspace{-10pt}
\begin{align}\label{Eq:OrthogonalNullSpace}
\mathsf{null}(\mathbf{A}(\omega^*)^{\mathrm{T}})^{\mathrm{T}}\mathbf{B}(\omega^*)=\mathbf{0}.
\end{align}}
\end{proposition}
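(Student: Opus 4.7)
The plan is to reduce the proposition to a standard consequence of the Fundamental Theorem of Linear Algebra applied to the over-determined system \ref{ax=b}, and then to argue that the geometric construction of \eqref{FGrelation} forces the consistent orientation to be unique. Concretely, for any candidate orientation $\omega$, the system $\mathbf{A}(\omega)\mathbf{z}=\mathbf{B}(\omega)$ consists of $2(P-1)$ equations in $P+1$ unknowns $\mathbf{z}=(\boldsymbol{\nu},d_1)$, so for $P\geq 4$ it is over-determined with at least $P-3$ redundant rows.

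First, I would establish the rank of $\mathbf{A}(\omega)$. Inspecting \eqref{A}, the last $P-1$ columns form a diagonal-plus-zero block whose nonzero entries are $-a_p^{(\cos)}$ and $-a_p^{(\sin)}$; generically (i.e., whenever the multi-path geometry is non-degenerate, so that at each $p$ at least one of $a_p^{(\cos)}$, $a_p^{(\sin)}$ is nonzero), these columns are linearly independent, and the first column together with the last column can be shown to be independent of this block using the definitions $a_p^{(\cos)}=\cos\theta_p+\cos(\varphi_p+\omega)$ and $a_{1,p}^{(\cos)}=\cos(\varphi_p+\omega)-\cos(\varphi_1+\omega)$. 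Thus $\mathrm{rank}(\mathbf{A}(\omega))=P+1$, so the left null space $\mathsf{null}(\mathbf{A}(\omega)^{\mathrm{T}})$ has dimension $2(P-1)-(P+1)=P-3$, matching the dimension stated in the proposition.

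Next, I would invoke the standard characterization of solvability: the system $\mathbf{A}(\omega)\mathbf{z}=\mathbf{B}(\omega)$ has a solution $\mathbf{z}$ if and only if $\mathbf{B}(\omega)$ lies in the column space of $\mathbf{A}(\omega)$, equivalently $\mathbf{B}(\omega)\perp\mathsf{null}(\mathbf{A}(\omega)^{\mathrm{T}})$, which is exactly the orthogonality condition \eqref{Eq:OrthogonalNullSpace}. For the true orientation $\omega^{*}$, the equations in \eqref{FGrelation} admit by construction a common geometric solution $\mathbf{z}^{*}=(\boldsymbol{\nu}^{*},d_1^{*})$, so \eqref{Eq:OrthogonalNullSpace} is automatically satisfied at $\omega^{*}$.

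The delicate part is \emph{uniqueness}: one must rule out the possibility that a different $\omega\neq\omega^{*}$ also makes $\mathbf{B}(\omega)$ orthogonal to $\mathsf{null}(\mathbf{A}(\omega)^{\mathrm{T}})$. My plan here is to appeal to the generic non-degeneracy of the multi-path geometry (distinct AoAs, AoDs, and TDoAs for $P\geq 4$ paths coming from a common transmitter): any spurious orientation would yield a second geometric configuration of four distinct reflection points consistent with the same $\{\theta_p,\varphi_p,\rho_p\}_{p=1}^{P}$, which over-constrains the scene and fails with probability one. This is the main obstacle, since strictly speaking one must exclude a measure-zero set of pathological geometries; for the purposes of the proposition it suffices to state that under the non-degeneracy assumed throughout Section~\ref{sec:singleCluster}, \eqref{Eq:OrthogonalNullSpace} is satisfied uniquely at $\omega=\omega^{*}$, so scanning $\omega$ over $[0,2\pi)$ and selecting the value at which the left-hand side of \eqref{Eq:OrthogonalNullSpace} vanishes recovers $\omega^{*}$, after which $\bp^{*}$ is obtained by back-substitution into \ref{ax=b}.
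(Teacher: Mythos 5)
Your proposal is correct and follows essentially the same route as the paper: the paper's own justification is simply that for $P\geq 4$ the system \ref{ax=b} is over-determined while the multi-path geometry guarantees a consistent solution at the true orientation, so $\mathbf{B}(\omega^*)$ must lie in the column space of $\mathbf{A}(\omega^*)$, equivalently be orthogonal to $\mathsf{null}(\mathbf{A}(\omega^*)^{\mathrm{T}})$. You additionally supply the rank count giving the left null space dimension $P-3$ and an explicit genericity-based uniqueness argument, both of which the paper leaves implicit, but the underlying idea is identical.
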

\vspace{-10pt}

Given  this discriminant, a simple 1D search can be performed over  $[0, 2\pi]$ to find  $\omega^*$.

\noindent \underline{2) Computing $\bp^*$:}
Given the $\omega^*$, (\ref{ax=b}) can be  solved by
\vspace{-10pt}
\begin{align}\label{LSresult}
\mathbf{z}^* = \l[ \mathbf{A}(\omega^*)^{\mathrm{T}} \mathbf{A}(\omega^*)  \r]^{-1} \mathbf{A}(\omega^*)^{\mathrm{T}} \mathbf{B}(\omega^*).
\end{align}
Then  the HV position   $\bp^{*}$ can be computed by substituting \eqref{Eq:OrthogonalNullSpace} and \eqref{LSresult} into \eqref{Eq:TDoAandPropagationDist} and \eqref{FGrelation}.

%\begin{figure}[t]
%\centering
%\includegraphics[width=10cm]{./Fig/oneAnteBox.pdf}
%\caption{Positioning estimation of one MIMO panel case.}\label{oneAnteBox}
%\end{figure}

\vspace{-15pt}
\subsection{Hidden Vehicle Sensing  with Noise}\label{subsection:NoiseHVSensing}
\vspace{-5pt}
In the presence of significant channel noise,  the estimated  AoAs/AoDs/ToAs contain errors. Consequently, HV-sensing is based on  the noisy versions of matrix $\mathbf{A}(\omega)$ and $\mathbf{B}(\omega)$, denoted by $\tilde{\mathbf{A}}(\omega)$ and $\tilde{\mathbf{B}}(\omega)$, which do not satisfy the  equations  in (\ref{ax=b}) and \eqref{Eq:OrthogonalNullSpace}. To overcome the difficulty, we develop a sensing technique by converting the equations into   minimization problems whose solutions are robust against noise.

\noindent \underline{1) Computing $\omega^*$:} Based on \eqref{Eq:OrthogonalNullSpace}, we formulate  the following problem to find  the orientation $\omega$:
\vspace{-25pt}
\begin{align}\label{NullSearching_Noisy}
\omega^* = \arg\min_{\omega} \l[\mathsf{null}(\tilde{\mathbf{A}}(\omega)^{\mathrm{T}})^{\mathrm{T}}\tilde{\mathbf{B}}(\omega)\r],
\end{align}
Solving the problem relies on a 1D search over $[0, 2\pi]$.

\noindent \underline{2) Computing $\bp^*$:} Next, given $\omega^*$, the optimal $\mathbf{z}^*$ can be derived by
using the  LS estimator that minimizes the squared Euclidean distance as
\vspace{-10pt}
\begin{align}%\tag{E3}
\mathbf{z}^*  &= \arg \min_{\mathbf{z}} \| \tilde{\mathbf{A}}(\omega^*) \mathbf{z} - \tilde{\mathbf{B}}(\omega^*) \|^2 =\l[ \tilde{\mathbf{A}}(\omega^*)^{\mathrm{T}} \tilde{\mathbf{A}}(\omega^*)  \r]^{-1} \tilde{\mathbf{A}}(\omega^*)^{\mathrm{T}} \tilde{\mathbf{B}}(\omega^*), \label{LSEstimator_Noisy}
\end{align}
where $\|\cdot \|$ is a Euclidean norm. It is shown that \eqref{LSEstimator_Noisy} has the same structure as \eqref{LSresult}. Last, the origins of all paths $\{(x_p, y_p)\}$ can be  computed  using the parameters $\{\mathbf{z}^* , \omega^*\}$ as illustrated in \eqref{FGrelation}. Averaging these origins gives the estimate of the HV position $\bp^*=(x^*, y^*)$ with $x^* = \frac{1}{P}\sum_{p=1}^{P} x_p$ and $y^* = \frac{1}{P} \sum_{p=1}^{P} y_p$.
\vspace{-10pt}
\begin{remark}[Feasibility to LoS Vehicle Sensing]\label{remark:LOS}
\emph{The proposed technique is also feasible to sense a vehicle in LoS since the LoS path is a special case of NLoS path considering that one virtual scatterer is located on the LoS path. The accuracy of LoS vehicle sensing is much better than that of HV sensing because the received signal power of LoS is much larger that of NLoS, providing more accurate AoA/AoD/ToA estimations. The detailed comparison is given in  Section \ref{sec:simulation}.}
\end{remark}
\vspace{-15pt}
%\begin{remark}[Positions of scatterers]\emph{It is worth mentioning that the proposed positioning approach also implicitly estimates the positions of scatterers, which can be calculated by jointly using $\mathbf{v}^{*}$ and AoAs information.}
%\end{remark}

\begin{figure}[t]
\centering
\includegraphics[width=7.5cm]{./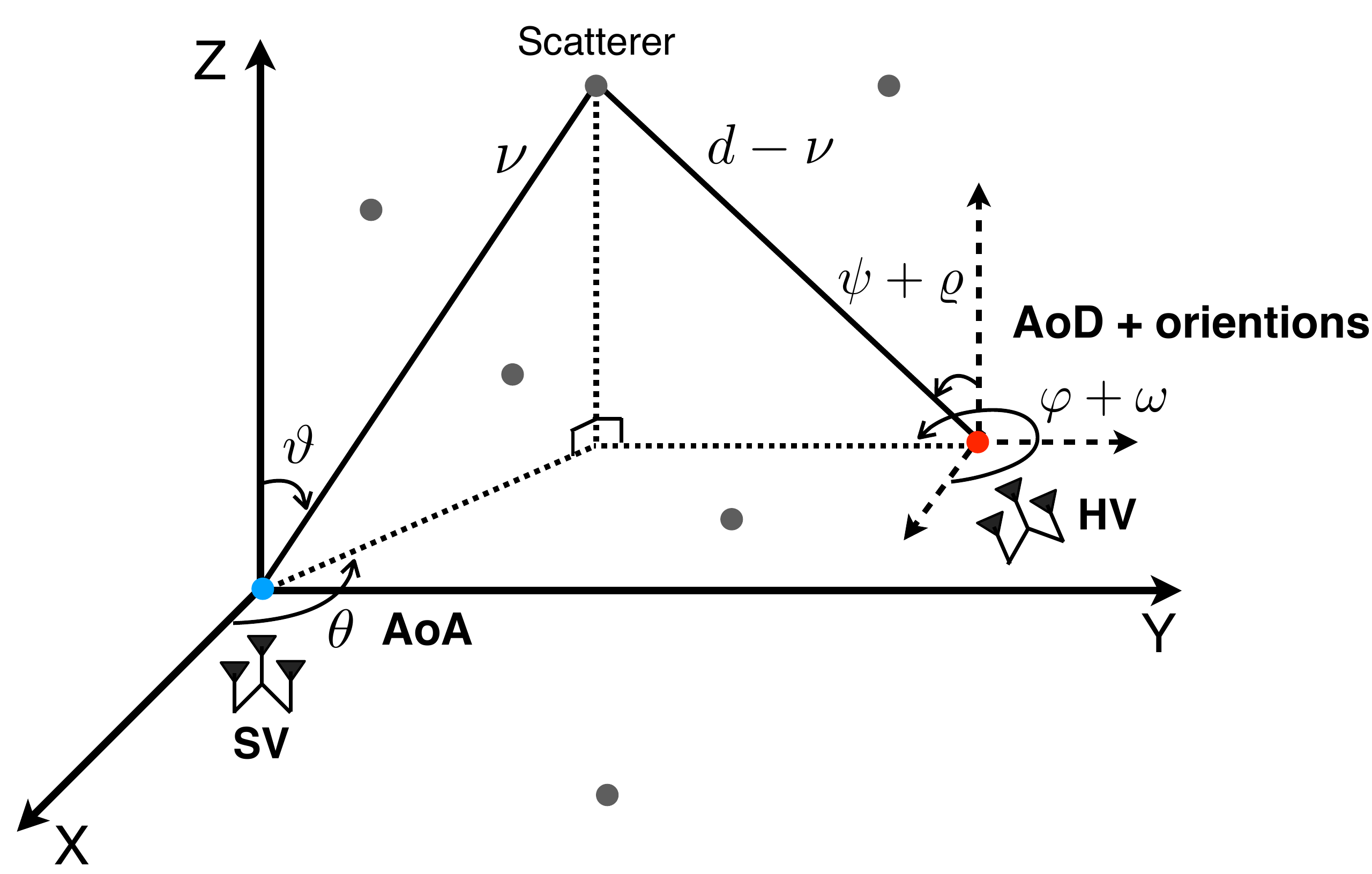}
\vspace{-20pt}
\caption{3D propagation model.}\label{3DSignalModel}
\vspace{-30pt}
\end{figure}

\vspace{-20pt}
\subsection{Extension to 3D Propagation}\label{Sec:3DSignalModel}
\vspace{-7pt}

Consider the scenario that propagation paths lie in the 3D Euclidean space instead of the 2D plane previously assumed.    As shown in Fig.~\ref{3DSignalModel}, the main differences  are the elevation angles added to the  AoAs, AoDs, and HV orientation. Specially, the AoA includes two angles: $\theta$ (azimuth) and $\vartheta$ (elevation) and AoD consists $\varphi$ (azimuth) and $\psi$ (elevation).  The estimations of AoAs and AoDs in the 3D  model can be jointly estimated via various approaches, e.g., MUSIC algorithm for 3D signal detection (see e.g., \cite{ping20093d}). The HV orientation also includes two unknowns: $\omega$ (azimuth) and $\varrho$ (elevation). The coordinates of HV, denoted by ${\mathbf{p}} = ({x}, {y}, {z})^{\mathrm{T}}$, are given as
\vspace{-10pt}
\begin{align}
\begin{cases}
{x} = \nu_{p} \sin(\vartheta_{p}) \cos(\theta_{p}) - (d_{p} - \nu_{p}) \sin(\psi_{p}+ \varrho) \cos(\varphi_{p} + \omega),  \\
{y} = \nu_{p} \sin(\vartheta_{p}) \sin(\theta_{p}) - (d_{p} - \nu_{p}) \sin(\psi_{p}+\varrho) \sin(\varphi_{p} + \omega), \\
{z} = \nu_{p} \cos(\vartheta_{p}) - (d_{p} - \nu_{p}) \cos(\psi_{p}+\varrho),
\end{cases}
\end{align}
where $\forall p \in \mathcal{P}$.

%We focus on the case that sensing via $1$-cluster array at HV as an example. Extension to multiple cluster arrays is straightforward and discussed in the sequel.
 Similar to \eqref{FGrelation}, the following system of equations is constructed for 3D propagation:
 \vspace{-10pt}
\begin{align}\label{3DFGrelation}\tag{E3}
\begin{cases}
\nu_{p} \sin(\vartheta_{p}) \cos(\theta_{p}) - (d_{p} - \nu_{p}) \sin(\psi_{p}+ \varrho) \cos(\varphi_{p} + \omega)\\
= \nu_{1} \sin(\vartheta_{1}) \cos(\theta_{1}) - (d_{1} - \nu_{1}) \sin(\psi_{1}+ \varrho) \cos(\varphi_{1} + \omega),  \\
\nu_{p} \sin(\vartheta_{p}) \sin(\theta_{p}) - (d_{p} - \nu_{p}) \sin(\psi_{p}+\varrho) \sin(\varphi_{p} + \omega)\\
= \nu_{1} \sin(\vartheta_{1}) \sin(\theta_{1}) - (d_{1} - \nu_{1}) \sin(\psi_{1}+\varrho) \sin(\varphi_{1} + \omega), \\
\nu_{p} \cos(\vartheta_{p}) - (d_{p} - \nu_{p}) \cos(\psi_{p}+\varrho)
= \nu_{1} \cos(\vartheta_{1}) - (d_{1} - \nu_{1}) \cos(\psi_{1}+\varrho),
\end{cases}
\end{align}
where $\forall p \in \mathcal{P}$.

It is shown that the number of equations and the number of unknown variables are $3(P-1)$ and $(P+3)$, respectively.
For  the HV-sensing  problem to be  solvable, we require $3(P-1) \geq P+3$, which leads to  the following proposition.
\vspace{-10pt}
\begin{proposition}[Sensing Feasibility Condition for 3D]\label{pro:minNumPath3D}
\emph{Consider the  3D propagation model. To sense the position and orientation of HV provisioned with  a single-cluster array, at least three NLoS signal paths are required, i.e., $P\geq 3$.}
\end{proposition}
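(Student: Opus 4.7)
The plan is to mirror the degree-of-freedom argument that established Proposition \ref{pro:minNumPath} in the 2D setting, now applied to the 3D system \eqref{3DFGrelation}. The HV-sensing problem is feasible when the number of independent scalar equations in \eqref{3DFGrelation} is at least as large as the number of unknowns remaining after the TDoA substitution; hence I would proceed by counting the two sides.

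First, I would count equations. Each index $p \in \{2,\ldots,P\}$ contributes three scalar equalities, one each for the $x$-, $y$-, and $z$-coordinate. Thus \eqref{3DFGrelation} furnishes exactly $3(P-1)$ equations. Next, I would count unknowns. Since the AoA pairs $\{(\theta_p,\vartheta_p)\}$ and AoD pairs $\{(\varphi_p,\psi_p)\}$ are produced by the MUSIC-type estimator described in Section \ref{Subsec:EstAoAAoDToA}, they are treated as known. The remaining unknowns are the per-path reflection-leg lengths $\{\nu_p\}_{p=1}^{P}$, the per-path total propagation distances $\{d_p\}_{p=1}^{P}$, and the two orientation angles $\omega$ (azimuth) and $\varrho$ (elevation), giving $2P+2$ unknowns in total.

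Third, as in the 2D derivation, the TDoAs $\rho_p=\lambda_p-\lambda_1$ are known and satisfy $d_p=d_1+c\rho_p$ for every $p\geq 2$; cf.\ \eqref{Eq:TDoAandPropagationDist}. Substituting these $P-1$ relations into \eqref{3DFGrelation} eliminates the unknowns $\{d_2,\ldots,d_P\}$ and leaves $(2P+2)-(P-1)=P+3$ effective unknowns. Requiring the resulting system to be at least determined yields $3(P-1)\geq P+3$, equivalently $P\geq 3$, which is the claimed bound.

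The main subtlety I would flag is that this counting establishes only a necessary condition: it shows that with fewer than three paths the system is underdetermined and the sensing problem cannot be uniquely resolved, but it does not by itself certify that $P=3$ suffices for a unique real solution in every geometric configuration. A fully rigorous sufficiency argument would require checking that the Jacobian of the nonlinear map in \eqref{3DFGrelation}, after the TDoA substitution, has full column rank at a generic scatterer configuration, playing the same role here that Proposition \ref{pro:discriminant} plays in 2D. Within the framing already adopted in Proposition \ref{pro:minNumPath}, however, this dimension-counting is exactly what is meant by the minimum-path feasibility condition, and the $P\geq 3$ bound is immediate from the inequality above.
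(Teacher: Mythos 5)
Your proposal is correct and follows essentially the same route as the paper: count the $3(P-1)$ equations in \eqref{3DFGrelation}, use the TDoA relations $d_p=d_1+c\rho_p$ to reduce the unknowns $\{\nu_p\},\{d_p\},\omega,\varrho$ from $2P+2$ to $P+3$, and require $3(P-1)\geq P+3$, giving $P\geq 3$. Your closing caveat that this is a necessary (degree-of-freedom) condition rather than a certified sufficiency guarantee is fair, but the paper's own argument is exactly this dimension count and goes no further.
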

\vspace{-10pt}

Compared with 2D propagation, the minimal number of required signal paths is reduced  because extra information can be extracted from one  additional dimension (i.e., elevation angles information of AoAs, AoDs) of each signal path. A similar methodology described in Sections \ref{Sec:PositionSingleMIMO} and \ref{subsection:NoiseHVSensing} can be easily modified for 3D propagation by applying  a 2D search based discriminant to find $\omega$ and $\varrho$ over $[0, 2\pi]$ and $[0, \pi]$, respectively.

\vspace{-10pt}
\section{Sensing Hidden Vehicle  with a Multi-Cluster Array}\label{Sec:Multi-Cluster}
\vspace{-10pt}
The preceding section targets the scenario that the HV is provisioned with  a single-cluster array, allowing the SV to sense the HV position and orientation. In this section, we consider the scenario where a multi-cluster array is deployed at HV so that SV can sense HV's array size (approximating the HV size) in addition to its  position and orientation. Sensing techniques are designed separately for two cases, namely \emph{decoupled} and \emph{coupled} HV antenna clusters, in the following sub-sections.

\vspace{-20pt}
\subsection{Case 1: Decoupled HV Antenna Clusters} \label{sec:KarraysDifferent}
\vspace{-5pt}

Consider the case of decoupled HV antenna clusters via transmission of orthogonal waveform sets over different clusters. As a result, the SV is capable of  grouping  detected paths according to their originating clusters. This  simplifies the HV-sensing in the sequel by building on the techniques in the preceding section.

Recall that  four HV antenna clusters are  located at the vertices of a rectangle with length $L$ and width $W$ that represents the HV shape (see Fig. \ref{fourAnteLS}). The vertex locations are represented as $\{\bp^{(k)} = (x^{(k)}, y^{(k)})^{\mathrm{T}}\}_{k=1}^K$.
Different orthogonal waveform set is assigned to each cluster, allowing SV with prior knowledge on the waveform sets to differentiate the signals transmitted by different clusters. The more challenging case where all clusters are assigned an identical waveform set is studied in the next section.  Let each path be ordered based on HV array index such that $\mathcal{P} = \{\mathcal{P} ^{(1)}, \mathcal{P} ^{(2)}, \mathcal{P} ^{(3)}, \mathcal{P} ^{(4)}\}$ where
$\mathcal{P} ^{(k)}$ represents the set of received signals from the $k$-th HV antenna cluster.
Note that the vertices determines  the HV size and their centroid that gives the HV position. Therefore, the \textbf{sensing problem} can be  represented as
\vspace{-10pt}
\begin{equation}\label{Eq:ProbForm4}
\begin{aligned}
\bigcup\nolimits_{k =1 }^4 \bigcup\nolimits_{p \in \mathcal{P}^{(k)} } \{\theta_p, \phi_p, \lambda_p\} \Rightarrow \{\{\bp^{(k)}\}_{k=1}^4,  \omega\}.
\end{aligned}
\end{equation}
A naive  sensing approach  is to exploit the orthogonality of multiple waveform sets to  decompose the sensing problem into separate positioning of HV antenna clusters using the technique designed in the preceding section.  In the following subsection, we propose a more efficient sensing technique exploiting the prior knowledge of the HV clusters' rectangular topology.

\begin{figure}[t]
%\vspace{-20pt}
\centering
\includegraphics[width=7.5cm]{./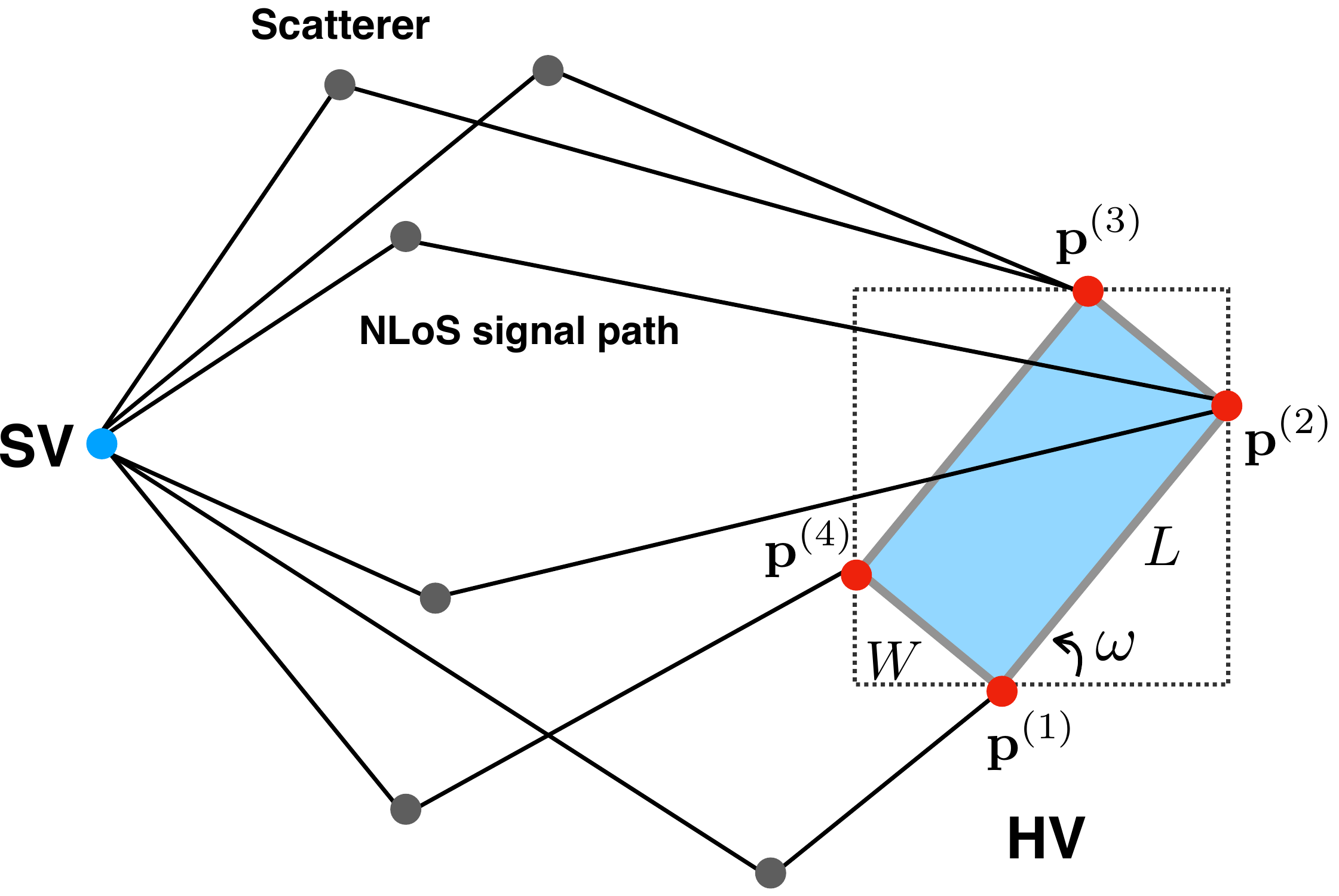}
\vspace{-10pt}
\caption{Rectangular configuration   of  a $4$-cluster HV array and the corresponding multi-path propagations.}\label{fourAnteLS}
\vspace{-30pt}
\end{figure}

\subsubsection{Sensing Feasibility Condition}
Without loss of generality,   assume that the received signal from the first HV antenna cluster, indexed by the set
$\mathcal{P}^{(1)}$,  is not empty and $1\in \mathcal{P}^{(1)}$.
Based on  the rectangular configuration  of $\{\mathbf{p}^{(k)}\}_{k=1}^4$ (see Fig.~\ref{fourAnteLS}),
a  system of equations is formed:
\vspace{-10pt}
\begin{align}\label{FGrelation_4MIMO}\tag{E4}\!\!\!\!\!\!\!\!
\begin{cases}
\nu_{p} \cos(\theta_{p}) -\! (d_{p} - \nu_{p}) \cos(\varphi_{p} + \omega)+\eta_p(\omega,L, W) = \nu_{1} \cos(\theta_{1}) - (d_{1} - \nu_{1}) \cos(\varphi_{1} + \omega),  \\
\nu_{p} \sin(\theta_{p}) - \!(d_{p} - \nu_{p}) \sin(\varphi_{p} + \omega)+\zeta_p(\omega, L, W )  =\nu_{1} \sin(\theta_{1}) - (d_{1} - \nu_{1}) \sin(\varphi_{1} + \omega),
\end{cases}
%i \in \mathcal{P}^{(k)},
\end{align}
where $\eta_p(\omega, L, W) = 0$, $L\cdot \cos(\omega)$, $L\cdot \cos(\omega)-W\cdot \sin(\omega)$, and $-W\cdot\sin(\omega)$ when $p\in\mathcal{P}^{(1)}$, $p\in\mathcal{P}^{(2)}$, $p\in\mathcal{P}^{(3)}$, and $p\in\mathcal{P}^{(4)}$, respectively.
%\begin{align}\label{funcLW1}
%\eta_p(\omega, L, W)= \l\{
%\begin{aligned}
%&0, && \textrm{$p\in\mathcal{P}^{(1)}$}\\
%&L\cdot \cos(\omega), && \textrm{$p\in\mathcal{P}^{(2)}$}\\
%&L\cdot \cos(\omega)-W\cdot \sin(\omega), && \textrm{$p\in\mathcal{P}^{(3)}$}\\
%&-W\cdot\sin(\omega), && \textrm{$p\in\mathcal{P}^{(4)}$}
%\end{aligned}
%\r.
%\end{align}
$\zeta_p(\omega, L, W)$ is obtained via replacing all $\cos$ and $\sin$ in $\eta_p(\omega, L, W)$ with $\sin$ and $-\cos$, respectively. The number of signal paths is given as $P=|\mathcal{P}|=\sum_{k=1}^4|\mathcal{P}^{(k)}|$.  Compared with \eqref{FGrelation}, the number of equations in \eqref{FGrelation_4MIMO} is the same as $2(P-1)$ while the number of unknowns increases from $(P+2)$ to $(P+4)$ since $L$ and $W$ are also unknown. Consequently, \eqref{FGrelation_4MIMO} has a unique solution when $2(P-1)\geq P+4$.

\vspace{-10pt}
\begin{proposition}[Sensing Feasibility Condition]\label{pro:minNumPath_4MIMO}
\emph{Consider the scenario that the HV is provisioned with a $4$-cluster array and  orthogonal waveform sets are transmitted from different clusters. To sense the position, size, and orientation of the  HV, at least six paths are required: $P\geq 6$.}
\end{proposition}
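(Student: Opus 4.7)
The plan is to generalize the equation-counting argument underlying Proposition \ref{pro:minNumPath} to the multi-cluster setting described by the system \eqref{FGrelation_4MIMO}. First I would tally the scalar equations: for each path $p \neq 1$, \eqref{FGrelation_4MIMO} contributes one cosine-row and one sine-row equation, so the total number of scalar equations is exactly $2(P-1)$, regardless of how the $P$ paths are distributed among the four clusters.

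Next I would enumerate the latent unknowns appearing in \eqref{FGrelation_4MIMO}. The multi-path parameters $\{\theta_p, \varphi_p, \lambda_p\}_{p \in \mathcal{P}}$ are observable and treated as known, so the remaining unknowns are the after-reflection distances $\{\nu_p\}_{p=1}^P$, the total propagation distances $\{d_p\}_{p=1}^P$, the orientation $\omega$, and the two rectangle dimensions $L$ and $W$. Before any reduction this yields $2P + 3$ unknowns. As in the single-cluster case, the TDoA relation \eqref{Eq:TDoAandPropagationDist} substitutes $d_p = d_1 + c\rho_p$ for $p = 2, \ldots, P$, eliminating $(P-1)$ distance unknowns and leaving
\[
P + 4 \ = \ \underbrace{P}_{\{\nu_p\}} \ + \ \underbrace{1}_{d_1} \ + \ \underbrace{1}_{\omega} \ + \ \underbrace{2}_{L, W}
\]
independent unknowns in total. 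Imposing the necessary condition that the number of equations be no smaller than the number of unknowns gives $2(P-1) \geq P + 4$, equivalently $P \geq 6$.

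The one step that requires explicit justification, and which I view as the main obstacle, is validating the TDoA substitution in the presence of multiple transmit clusters. Unlike the single-cluster case, one might worry that each cluster comes with its own clock offset; however, all four clusters are collocated on the same HV and share the same reference clock, so a single synchronization gap $\Gamma$ relative to the SV applies uniformly. Consequently $\tau_p = \lambda_p - \Gamma$ and thus $d_p = d_1 + c\rho_p$ hold across clusters, not only within a cluster. This uniformity is what makes the unknown reduction legitimate and, together with the counting argument above, delivers the bound $P \geq 6$.
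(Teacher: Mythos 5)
Your proposal is correct and follows essentially the same equation-counting argument as the paper: $2(P-1)$ equations versus $P+4$ unknowns after the TDoA substitution $d_p = d_1 + c\rho_p$, yielding $P\geq 6$. Your additional remark that a single synchronization gap $\Gamma$ applies uniformly across the four collocated clusters is a point the paper leaves implicit, but it does not change the route of the argument.
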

\vspace{-5pt}
\vspace{-20pt}
\begin{remark}[Advantage of Array-Topology  Knowledge]\emph{The separate positioning  of individual HV antenna clusters  requires at least $16$ NLoS paths (see Proposition \ref{pro:minNumPath}). On the other hand, the proposed sensing technique reduces the number of required paths to only $6$ by exploiting  the prior knowledge of the rectangular configuration  of antenna clusters.}
\end{remark}

\vspace{-10pt}
\subsubsection{Hidden Vehicle Sensing}
Consider the case where channel noise is negligible. The system of equations in \eqref{FGrelation_4MIMO} can be  rewritten in  a compact  matrix form:
\vspace{-10pt}
\begin{align}\label{ax=b_4MIMO}\tag{E5}
\hat{\mathbf{A}}(\omega) \hat{\mathbf{z}}= {\mathbf{B}}(\omega),
\end{align}
where $\hat{\mathbf{z}} =(\mathbf{v}, d_1, L, W)^{\mathrm{T}} \in \mathds{R}^{(P+3)\times 1}$ with $\mathbf{v}$
 %with $\mathbf{v} = \{\nu_1, \cdots, \nu_{P^{(1)}},\cdots, \nu_{\sum_{k=1}^2 P^{(k)}},\cdots, \nu_{\sum_{k=1}^3 P^{(k)}}, \cdots, \nu_{P}\}$
 following the index ordering of $\mathcal{P}$, and ${\mathbf{B}}(\omega)$ is given in \eqref{matrixB}. The  matrix $\hat{\mathbf{A}}(\omega)$ can be decomposed as
 \vspace{-10pt}
\begin{align}\label{matrixA_4MIMO}
\hat{\mathbf{A}}(\omega) &=
\begin{bmatrix}
  \mathbf{A}(\omega)&\mathbf{L}(\omega)&\mathbf{W}(\omega)
\end{bmatrix}
\in \mathds{R}^{2(P-1) \times (P+3)},
\end{align}
where  $\mathbf{A}(\omega)$ follows  \eqref{matrixA}. Moreover,  $\mathbf{L}(\omega)\in \mathds{R}^{2(P-1) \times 1}$ is given as  $[\mathbf{L}^{(\cos)}(\omega), \mathbf{L}^{(\sin)}(\omega)]^{\mathrm{T}}$ with
\vspace{-10pt}
\begin{align}%\label{matrixL_4MIMO}
\mathbf{L}^{(\cos)}(\omega) &= [\underbrace{0,\cdots,0}_{|\mathcal{P}^{(1)}|-1}, \underbrace{-\cos(\omega),\cdots, -\cos(\omega)}_{|\mathcal{P}^{(2)}|+|\mathcal{P}^{(3)}|}, \underbrace{0,\cdots,0}_{|\mathcal{P}^{(4)}|}  ]^{\mathrm{T}},
\end{align}
where $|\mathcal{P}^{(k)}|$ counts the number of elements in $\mathcal{P}^{(k)}$ and $\mathbf{L}^{(\sin)}(\omega)$ is obtained by replacing all $\cos(\omega)$ in $\mathbf{L}^{(\cos)}(\omega)$ with $\sin(\omega)$. $\mathbf{W}(\omega)$ can be written as  $[\mathbf{W}^{(\sin)}(\omega), \mathbf{W}^{(\cos)}(\omega)]^{\mathrm{T}}$ where
\vspace{-10pt}
\begin{align}%\label{matrixW_4MIMO}
\mathbf{W}^{(\sin)}(\omega) &= [\underbrace{0,\cdots,0}_{|\mathcal{P}^{(1)}|+|\mathcal{P}^{(2)}|-1}, \underbrace{\sin(\omega),\cdots, \sin(\omega)}_{|\mathcal{P}^{(3)}|+|\mathcal{P}^{(4)}|}]^{\mathrm{T}},
\end{align}
and $\mathbf{W}^{(\cos)}(\omega)$ is obtained by replacing all $\sin$ in $\mathbf{W}^{(\sin)}(\omega)$ with $-\cos$.

\noindent \underline{1) Computing $\omega^*$:}
Noting that (\ref{ax=b_4MIMO}) is over-determined when $P\geq 6$, the resultant discriminant of the orientation $\omega$ is similar to that in Proposition~\ref{pro:discriminant} and given as follows.

\vspace{-10pt}
\begin{proposition}[Discriminant of Orientation]\label{pro:discriminantMultiCluster}
\emph{With $P \geq 6$, the unique $\omega^*$ exists when $\hat{\mathbf{B}}(\omega^*)$ is orthogonal to the null column space of $\hat{\mathbf{A}}(\omega^*)$ denoted by $\mathsf{null}(\hat{\mathbf{A}}(\omega^*)^{\mathrm{T}})\in \mathds{R}^{2(P-1)\times (P+1)}$:
\vspace{-10pt}
\begin{align}\label{Eq:OrthogonalNullSpaceMultiCluster}
\mathsf{null}(\hat{\mathbf{A}}(\omega^*)^{\mathrm{T}})^{\mathrm{T}}\hat{\mathbf{B}}(\omega^*)=\mathbf{0}.
\end{align}}
\end{proposition}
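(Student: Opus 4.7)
The plan is to adapt the argument behind Proposition~\ref{pro:discriminant} to the augmented system \eqref{ax=b_4MIMO}. First I would observe that when $P \geq 6$, the coefficient matrix $\hat{\mathbf{A}}(\omega) \in \mathds{R}^{2(P-1)\times(P+3)}$ has at least as many rows as columns, so \eqref{ax=b_4MIMO} is over-determined. By the Fredholm alternative, such a linear system admits a solution $\hat{\mathbf{z}}$ if and only if $\hat{\mathbf{B}}(\omega)$ lies in the column space of $\hat{\mathbf{A}}(\omega)$, equivalently $\hat{\mathbf{B}}(\omega)$ is orthogonal to every vector in the left null space $\mathsf{null}(\hat{\mathbf{A}}(\omega)^{\mathrm{T}})$. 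This is exactly condition \eqref{Eq:OrthogonalNullSpaceMultiCluster}, so the proposition reduces to showing that there is a unique $\omega$ at which \eqref{ax=b_4MIMO} is consistent.

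Second, for existence I would appeal to the physical construction that produced \eqref{FGrelation_4MIMO}. At the true orientation $\omega^{*}$, choose $\hat{\mathbf{z}}^{*}=(\mathbf{v}^{*},d_{1}^{*},L^{*},W^{*})^{\mathrm{T}}$ to be the vector of actual post-reflection propagation distances $\{\nu_{p}^{*}\}_{p\in\mathcal{P}}$, the true full propagation distance $d_{1}^{*}$ of a reference path, and the true HV dimensions $L^{*},W^{*}$. Since \eqref{FGrelation_4MIMO} was derived by matching the coordinate of the originating vertex $\bp^{(k)}$ against the reference vertex $\bp^{(1)}$ under the rectangular constraint, this choice satisfies every equation exactly. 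Hence $\hat{\mathbf{A}}(\omega^{*})\hat{\mathbf{z}}^{*}=\hat{\mathbf{B}}(\omega^{*})$ is consistent and \eqref{Eq:OrthogonalNullSpaceMultiCluster} follows.

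Third, for uniqueness I would argue by contradiction. Suppose some $\omega^{\prime}\neq\omega^{*}$ also satisfies \eqref{Eq:OrthogonalNullSpaceMultiCluster}. Then there is a solution $\hat{\mathbf{z}}^{\prime}=(\mathbf{v}^{\prime},d_{1}^{\prime},L^{\prime},W^{\prime})^{\mathrm{T}}$ of \eqref{ax=b_4MIMO} at $\omega^{\prime}$, and unwinding the definition of each row of \eqref{FGrelation_4MIMO} exhibits four vertices $\{\bp^{(k)\prime}\}_{k=1}^{4}$ lying on a rectangle of orientation $\omega^{\prime}$ and dimensions $L^{\prime},W^{\prime}$, from which each observed path departs at the measured AoD $\varphi_{p}$ and arrives with the measured AoA $\theta_{p}$ after traversing the measured TDoA-consistent distance. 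Since the AoAs, AoDs, and TDoAs already pin down every scatterer position and every ray direction, this alternative configuration must produce the same scatterer coordinates as the true one; combined with the rigidity of the rectangular constraint, this forces $\omega^{\prime}=\omega^{*}$, $L^{\prime}=L^{*}$, $W^{\prime}=W^{*}$, contradicting $\omega^{\prime}\neq\omega^{*}$.

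I expect the last step to be the real obstacle. The existence half is essentially the Fredholm alternative applied to a geometrically consistent system, but the uniqueness half requires ruling out non-trivial rectangular configurations that happen to satisfy all $2(P-1)$ equations for a wrong orientation. In principle this can fail on a measure-zero locus of pathological multi-path geometries (for instance, nearly collinear scatterers or symmetric reflection patterns), so a clean uniqueness proof will likely need a mild genericity assumption on the observed $\{\theta_{p},\varphi_{p},\lambda_{p}\}$, or a careful injectivity argument for the map $(\omega,L,W,\{\nu_{p}\},d_{1})\mapsto\hat{\mathbf{B}}(\omega)-\hat{\mathbf{A}}(\omega)\hat{\mathbf{z}}$ on a neighborhood of the true parameters. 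Once that is secured, the remainder is routine linear-algebra bookkeeping using the explicit block structure of $\hat{\mathbf{A}}(\omega)$ and ${\mathbf{B}}(\omega)$ in \eqref{matrixA_4MIMO} and \eqref{matrixB}.
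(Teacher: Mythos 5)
Your proposal follows essentially the same route as the paper, which in fact offers no formal proof of this proposition at all: it merely notes that \eqref{ax=b_4MIMO} is over-determined for $P \geq 6$ and declares the discriminant ``similar to that in Proposition~\ref{pro:discriminant}'', whose own justification is the same informal combination of the Fredholm alternative (consistency of an over-determined system $\Leftrightarrow$ orthogonality of $\mathbf{B}$ to the left null space of $\hat{\mathbf{A}}$) and geometric consistency at the true parameters that you spell out. Your explicit acknowledgement that the uniqueness half is unproven without a genericity assumption on the multi-path geometry is a gap present in the paper as well, so nothing in your argument falls short of what the authors actually establish.
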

%\vspace{-5pt}
\noindent %The proof is straightforward and omitted here for brevity.
Given  this discriminant, a simple 1D search can be performed over the range $[0, 2\pi]$ to find  $\omega^*$.

\noindent \underline{2) Computing $\{\bp^{(k)}\}_{k=1}^4$:}
Given the $\omega^*$, (\ref{ax=b_4MIMO}) can be  solved by
\vspace{-10pt}
\begin{align}\label{LSresult4MIMO}
\hat{\mathbf{z}}^* = \l[ \hat{\mathbf{A}}(\omega^*)^{\mathrm{T}} \hat{\mathbf{A}}(\omega^*)  \r]^{-1} \hat{\mathbf{A}}(\omega^*)^{\mathrm{T}} \hat{\mathbf{B}}(\omega^*).
\end{align}
The positions of  HV antenna clusters,  say $\{\bp^{(k)}\}_{k=1}^4$,  can be  computed  by substituting \eqref{Eq:OrthogonalNullSpaceMultiCluster} and (\eqref{LSresult4MIMO}) into \eqref{Eq:TDoAandPropagationDist} and \eqref{FGrelation_4MIMO}. Extending the above sensing  technique  to the case with channel  noise is  straightforward by modifying \eqref{Eq:OrthogonalNullSpaceMultiCluster} to a minimization problem as in Section~\ref{subsection:NoiseHVSensing}.

\vspace{-20pt}
\subsection{Case 2: Coupled   HV Antenna Clusters}\label{sec:KarraysSame}
\vspace{-5pt}
It is desired to reduce the number of orthogonal waveform sets used by a HV so as to facilitate multi-access by dense HVs. Thus, in this section, we consider the resource-limited  case of coupled HV antenna clusters where  a identical waveform set is shared and transmitted by all HV antenna clusters. The design of HV-sensing is more challenging since the SV is incapable of grouping the signal paths according to their originating HV antenna clusters. For tractability, the objectives of HV-sensing for this scenario is redefined as: 1) positioning of the centroid of HV multi-cluster array  denoted by $\bp_{0}=(x_0, y_0)$; 2) sensing the HV size by estimating the   maximum distance  between HV antenna clusters and  $\bp_{0}$, denoted by $R=\max_{k}{|\bp^{(k)}-\bp_0|}$; 3) estimating the HV orientation $\omega$. It follows that the \textbf{sensing problem} can be formulated as
\vspace{-10pt}
\begin{equation}
\begin{aligned}
 \quad  \bigcup\nolimits_{p \in \mathcal{P}} \{\theta_p, \phi_p, \lambda_p\} \Rightarrow \{\bp_{\textrm{0}}, R, \omega\}.
\end{aligned}
\end{equation}
To solve the problem, we adopt the following two-step approach:

\noindent \textbf{Step 1}:  By assuming that  all signals received at SV originate from the same transmitting location, it is treated  as the HV array  centroid and estimated together with the orientation $\omega$  using the technique in Section \ref{sec:singleCluster}.

\noindent \textbf{Step 2}: Given  $\omega$ and $\bp_{0}$, the size parameter $R$ can be estimated by solving  optimization problems based on bounding the HV array by either a disk or a box.

The techniques for Step 2 are designed in following sub-sections.

\begin{figure*}[t]
\centering
\subfigure[Sensing disk (2D).]{\includegraphics[width=4cm]{./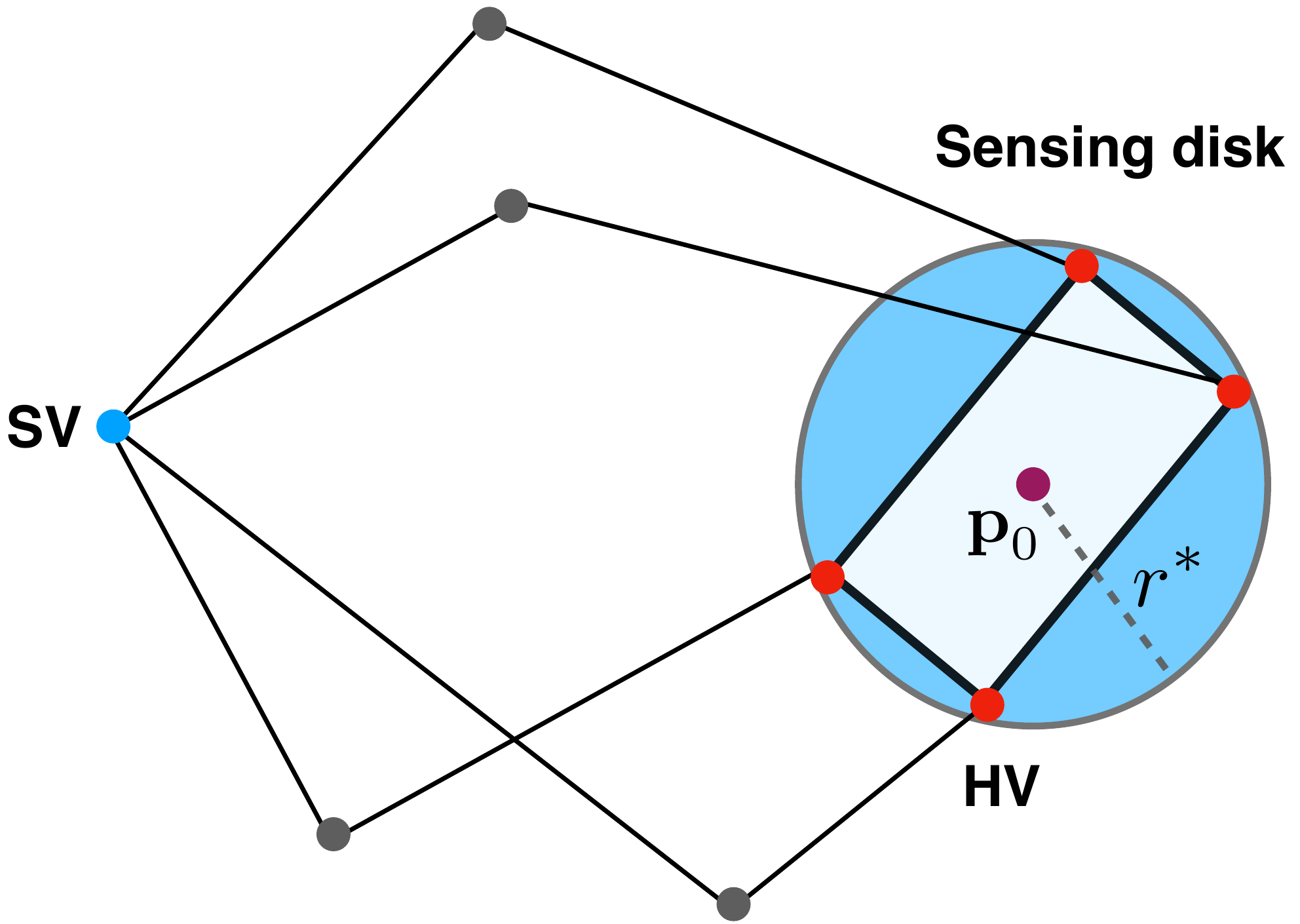}}
\subfigure[Sensing sphere (3D).]{\includegraphics[width=4cm]{./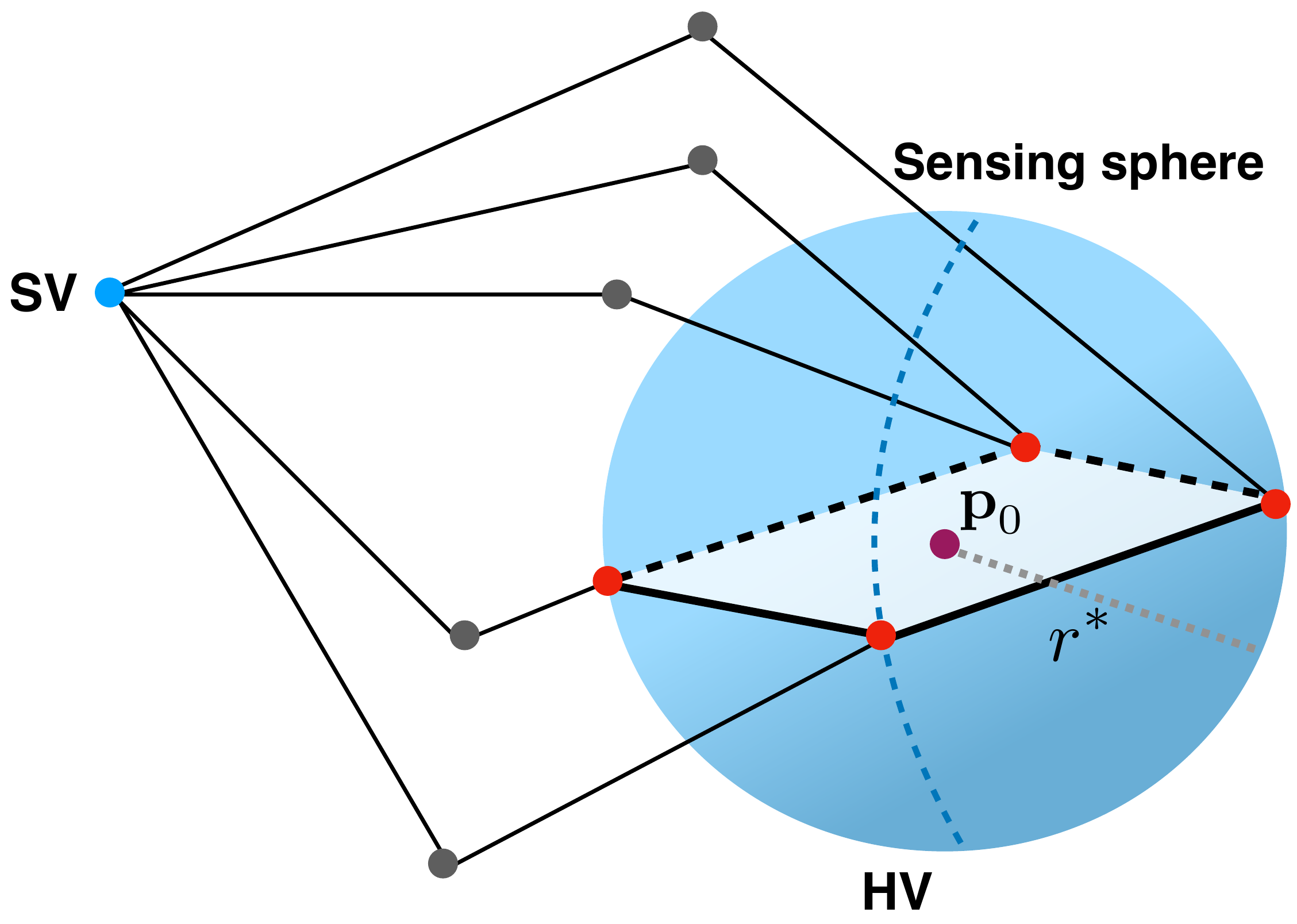}}
\subfigure[Sensing box  (2D).]{\includegraphics[width=4cm]{./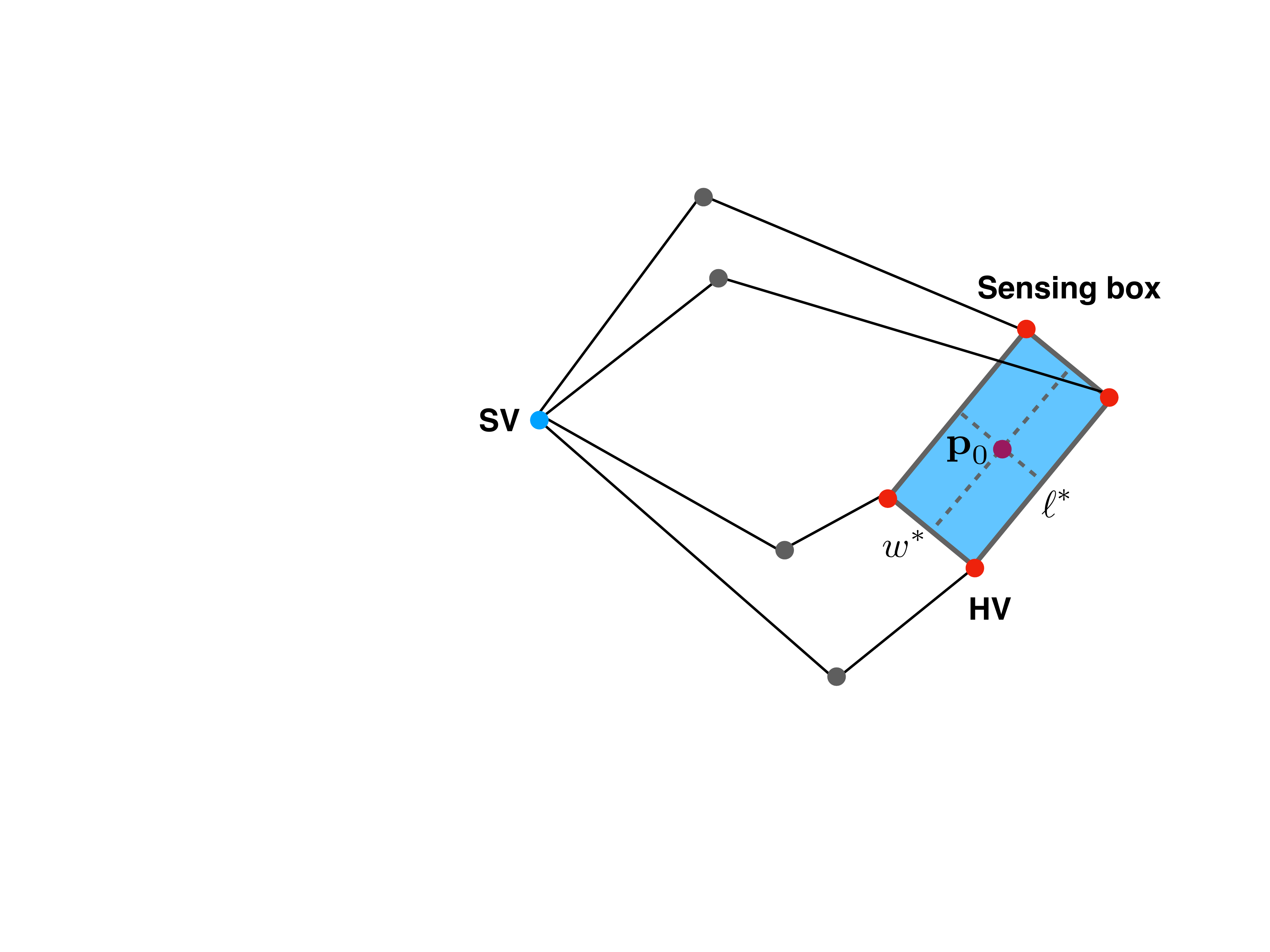}}
\subfigure[Sensing cuboid (3D).]{\includegraphics[width=4.1cm]{./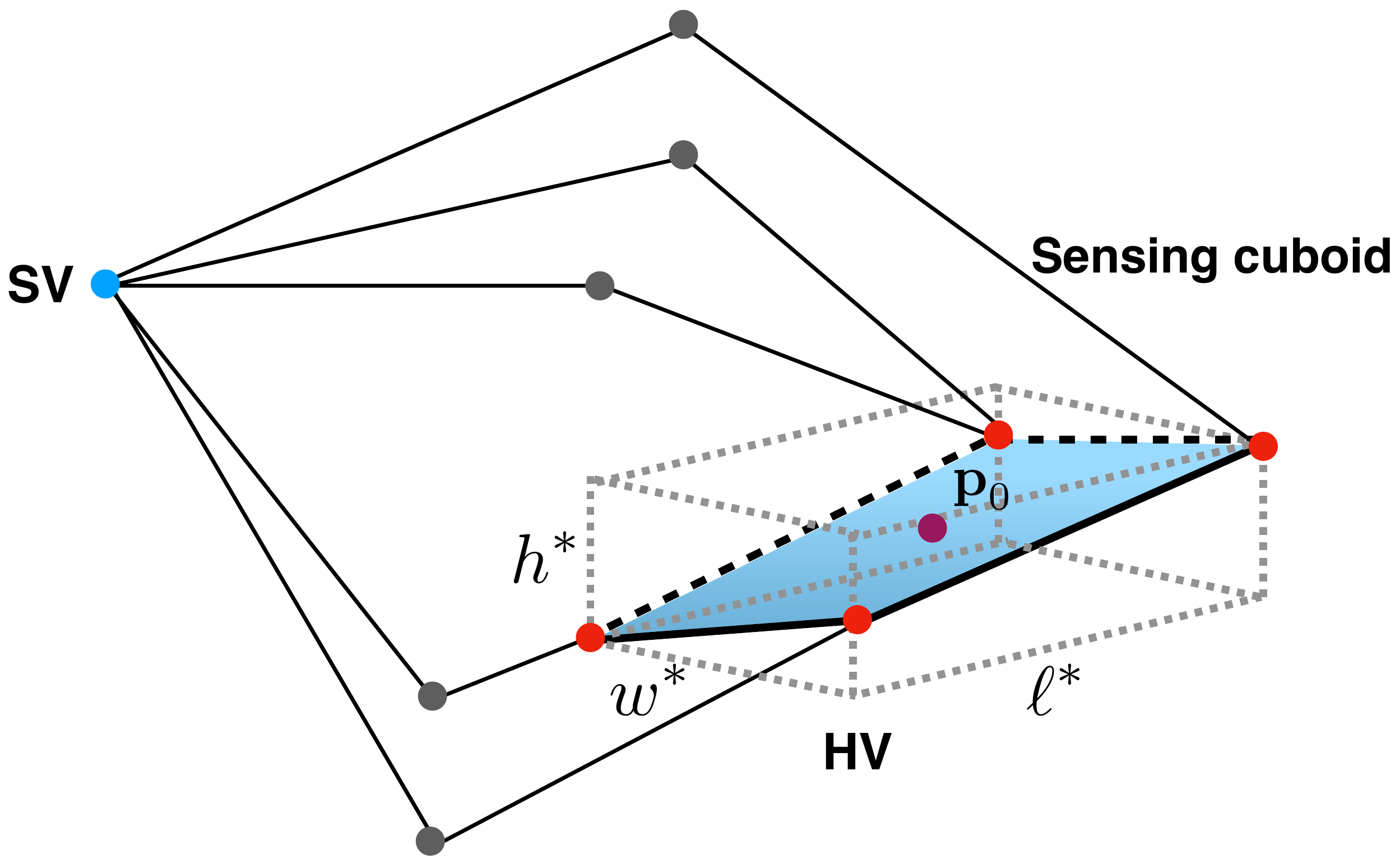}}
\caption{Different approaches of HV-sensing for the scenario of a 4-cluster HV array and an identical   waveform set for transmission by different HV antenna clusters.}\label{noIndexEst}
\vspace{-30pt}
\end{figure*}

%\vspace{-10pt}
\subsubsection{HV Size Sensing by Disk Minimization}\label{sec:SCM}
Note that the HV array is outer bounded by a disk. Then the problem of estimating the HV size parameter $R$ at SV can be translated into the  optimization problem of minimizing the bounding-disk radius.
As shown in Fig. \ref{noIndexEst}(a),
we define a \emph{sensing disk} $\mathcal{C}(\bp_{0}, r)$  centered at the estimated centroid $\bp_{0}$ with the  radius $r$:
\vspace{-10pt}
\begin{align}\label{sensingDisk}
\mathcal{C}(\bp_{0}, r)=\left\{(x, y)|\l(x - x_{0} \r)^2 + \l(y - y_{0} \r)^2 \leq r^2\right\}. %~~\forall p \in \mathcal{P}
\end{align}
A constraint is applied that all HV antennas, or equivalently the origins of all signal paths received at the SV,  should lie within the disk. Then estimating the HV size $R$ can be translated into the following problem of \emph{disk  minimization}:
\vspace{-10pt}
\begin{equation}\label{CircleMinProb}\tag{E6}
\begin{aligned}
R = &\min_{ d_1, r, \{\nu_p,x_p,y_p\} } r \\
\text{s.t.}&\quad  \l(x_p - x_{0} \r)^2 + \l(y_p- y_{0} \r)^2\leq r^2, \quad  0 < \nu_p < d_1+c\rho_p,  \\
&\quad (x_p, y_p) \text{ satisfies \eqref{TXposition2D} with } d_p=d_1+c\rho_p, ~~\forall p \in \mathcal{P},
\end{aligned}
\end{equation}
where the first constraint is as mentioned above and  the second represents the distance after the reflection $\nu_p$ cannot exceed the total propagation distance $d_p$ represented in terms of  $d_1$ and TDoA $\rho_p$ as $d_p=d_1+c\rho_p$ with $\{\rho_p\}$ being the TDoAs [see \eqref{Eq:TDoAandPropagationDist}]. The values of $\{x_p,y_p\}$ are directly calculated by plugging the optimized $d_1$ and $\{\nu_p\}$ into \eqref{FGrelation}, corresponding to the optimal radius $r^*$ according to the first constraint of \eqref{CircleMinProb}. One can observe that Problem~\eqref{CircleMinProb} is a problem of \emph{second-order cone programming} (SOCP). Thus, it is a convex optimization problem and can be efficiently solved numerically e.g., using the efficient MatLab toolbox such as CVX.

Analyzing the problem structure can shed light on the number of required paths for HV-sensing in the current scenario. The existence and uniqueness of the optimal solution $r^*$ for Problem~\eqref{CircleMinProb}  can be explained intuitively by considering the feasible range of  $d_1$. Let $\mathcal{S}_{p}(r)$ represent the feasible range of the optimization variable $d_1$ for path $p$ when the disk radius is given as $r$:
\vspace{-10pt}
\begin{align}\label{setofD1_SCM_Individual}
%\mathcal{S}_{p}(r) &= \l\{ d_1\l| \text{constraints for each path $p$ in \ref{CircleMinProb}} \r.\r\},
\mathcal{S}_{p}(r) &= \l\{ d_1\l| \text{all constraints for  path $p$ in \eqref{CircleMinProb}} \r.\r\}.
\end{align}
 Then the feasible range of $d_1$, denoted by $\mathcal{S}(r)$, is the intersection of the feasible range of $d_1$ for every path $p$, i.e., $\mathcal{S}(r) =\bigcap_{p \in \mathcal{P}} \mathcal{S}_{p}(r)$. This is because all the paths share the same $d_1$ and thus the feasible range of $d_1$ should satisfy all paths' constraints in \eqref{CircleMinProb} simultaneously.
Next, it is straightforward to show the following monotonicity of $\mathcal{S}(r)$:  $\mathcal{S}(r_1)\subseteq \mathcal{S}(r_2)$ if $r_1\leq r_2$ with  $\mathcal{S}(0)=\emptyset$. Based on the monotonicity,  there always exists an  optimal and unique solution  $r^*$ for Problem~\eqref{CircleMinProb} such that $\mathcal{S}(r) \neq  \emptyset$  if $r\geq r^*$ or otherwise $\mathcal{S}(r) =  \emptyset$. In other words,
\vspace{-10pt}
\begin{align}\label{Optimal_r}
r^* = \inf \l\{ r>0\l| \mathcal{S}(r) \neq \emptyset\r. \r\}=\sup\l\{ r >0\l| \mathcal{S}(r) = \emptyset \r.\r\}.
\end{align}
The value $r^*$ corresponds to the critical case where there exist two feasible range sets $\mathcal{S}_{p}(r^*)$ and $\mathcal{S}_{p'}(r^*)$ only contact each other at their boundaries such that $\mathcal{S}(r)$ contains a single feasible point $d_1^*$ that corresponds to $r^*$. This leads to the following proposition.

\vspace{-10pt}
\begin{proposition}[HV Size Sensing  by Disk Minimization]\label{pro:optimaRD1_SCM}
\emph{Given the solution $r^*$ for Problem~\eqref{CircleMinProb},  there always exist at least two paths, say $p_1$ and $p_2$, whose originating positions lie  on the boundary of  the minimized disk  $\mathcal{C}(\bp_0, r^*)$:
\vspace{-10pt}
\begin{align}\label{Optimal_Structure:SCM}
\l(x_{p_1} - x_{0} \r)^2 + \l(y_{p_1} - y_{0} \r)^2
=\l(x_{p_2} - x_{0} \r)^2 + \l(y_{p_2} - y_{0} \r)^2= (r^*)^2.
\end{align}
}
\end{proposition}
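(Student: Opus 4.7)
The plan is to upgrade the monotonicity argument leading up to \eqref{Optimal_r} into a one-dimensional Helly argument, which pinpoints exactly how the intersection $\cS(r)$ transitions from empty to non-empty at $r=r^*$. As a preliminary, I would verify that each feasibility set $\cS_p(r)$ is a (closed) convex interval of $\mathbb{R}$. From \eqref{TXposition2D} together with $d_p = d_1 + c\rho_p$, the origin coordinates $(x_p,y_p)$ are affine in $(d_1,\nu_p)$ once $\omega$ and the measured $(\theta_p,\varphi_p)$ are fixed; hence $f_p(d_1,\nu_p) := (x_p-x_0)^2 + (y_p-y_0)^2$ is jointly convex. Combined with the convex constraint region $\{0 \le \nu_p \le d_1 + c\rho_p,\, d_1 \ge 0\}$, partial minimization in $\nu_p$ produces a convex value function $g_p(d_1) := \min_{\nu_p} f_p(d_1,\nu_p)$, so $\cS_p(r) = \{d_1 : g_p(d_1) \le r^2\}$ is a sub-level set of a convex function, that is, an interval.

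Next, I would invoke Helly's theorem in $\mathbb{R}$: a finite collection of intervals has a common point if and only if every pair of them does. Applied to the characterization in \eqref{Optimal_r}, this gives
\[
r^* = \max_{p\ne q} r_{pq}^*,\qquad r_{pq}^* := \inf\{r : \cS_p(r) \cap \cS_q(r) \ne \emptyset\}.
\]
Let $(p_1,p_2)$ be a pair attaining this outer maximum. For every $r<r^*$ the intervals $\cS_{p_1}(r)$ and $\cS_{p_2}(r)$ are disjoint, while for $r>r^*$ they overlap. Since each $g_{p_i}$ is continuous and convex, its sub-level intervals depend continuously on $r$; combined with the disjoint-to-overlap transition, the two intervals at $r=r^*$ must meet at a single common boundary point $d_1^*$ satisfying $g_{p_i}(d_1^*) = (r^*)^2$ for $i=1,2$. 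Unwinding each equality via the corresponding optimizer $\nu_{p_i}^*$ and \eqref{TXposition2D} then recovers the boundary condition \eqref{Optimal_Structure:SCM}, as desired.

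The main obstacle I anticipate is the rigorous handling of the strict inequalities $0<\nu_p<d_1+c\rho_p$ and $d_1>0$ in \eqref{CircleMinProb}: taken literally, these make $\cS_p(r)$ relatively open, and a ``just-touching'' configuration could formally place $d_1^*$ in the closure without being inside $\cS_p(r^*)$ itself. I would handle this by passing to the closed relaxation in which all inequalities are non-strict; since every constraint function and the objective are continuous and the feasible region near $r^*$ is non-empty and bounded, the relaxation attains the same optimum with $d_1^*$ a genuine feasible point, after which the boundary geometry is preserved and Proposition~\ref{pro:optimaRD1_SCM} follows.
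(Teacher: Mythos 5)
Your proposal is correct in substance but follows a genuinely different route from the paper. The paper's proof (Appendix A) is a duality argument: it writes the KKT conditions of Problem~\eqref{CircleMinProb}, observes that the stationarity equations in $d_1$ and $\{\nu_p\}$ cannot be satisfied with fewer than two strictly positive multipliers $\gamma_p$, and then reads off \eqref{Optimal_Structure:SCM} from complementary slackness. You instead formalize the paper's own informal discussion around \eqref{Optimal_r}: each per-path feasibility set $\mathcal{S}_p(r)$ is a sub-level set of a convex value function $g_p(d_1)$ (hence an interval), one-dimensional Helly reduces emptiness of $\mathcal{S}(r)=\bigcap_p \mathcal{S}_p(r)$ to pairwise emptiness, and $r^*$ is attained by a blocking pair whose intervals just touch, forcing both constraints tight at the common point $d_1^*$. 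What your approach buys is an elementary, multiplier-free argument that also yields the extra characterization $r^*=\max_{p\neq q} r^*_{pq}$, which is algorithmically suggestive; what the KKT route buys is brevity and direct transfer to the 3D sphere case and to the box minimization of Proposition~\ref{pro:optimaRD1_SBM}, where the size variable is two-dimensional and the 1D interval/Helly structure no longer applies. One caveat applies equally to both proofs: your step asserting that the touching point satisfies $g_{p_i}(d_1^*)=(r^*)^2$ for \emph{both} $i=1,2$ can fail in the degenerate configuration where one interval $\mathcal{S}_{p}(r)$ collapses to a single point interior to the other (equivalently, where only one constraint is active at the optimum); this is excluded exactly when the affine map $(\nu_p,d_1)\mapsto(x_p,y_p)$ is nonsingular, i.e., $\sin(\theta_p-\varphi_p-\omega)\neq 0$, the same implicit nondegeneracy under which the paper's claim that ``at least two $\gamma_p$ are strictly positive'' holds. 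Your handling of the strict inequalities via the closed relaxation is sound and is a point the paper glosses over.
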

\vspace{-10pt}

Instead of the earlier intuitive argument, Proposition \ref{pro:optimaRD1_SCM} can be proved rigorously using the \emph{Karush-Kuhn-Tucker} (KKT) conditions as shown in Appendix~\ref{proof:Pro:minPosiCircle}.

\vspace{-10pt}
\begin{remark}[Feasible Condition of HV Sensing by Disk Minimization]\label{Re:Feasible}\emph{Though two paths are required to determine the optimal disk radius $R = r^*$ based on Proposition \ref{pro:optimaRD1_SCM},  at least four paths are required for  estimating  the required  centroid  $\bp_0$ (see Proposition~\ref{pro:minNumPath}). }
\end{remark}
\vspace{-5pt}

\vspace{-15pt}
\begin{remark}[Extension to 3D Propagation]\label{3DPCM}
\emph{The extension to 3D propagation model in Section~\ref{Sec:3DSignalModel} is straightforward by using a sphere instead of a disk [see Fig. \ref{noIndexEst}(b)]. The  resultant  sphere minimization problem  has the same form as  Problem~\eqref{CircleMinProb} except that  the first constraint  modified as $\l({x}_{p} - {x}_{0} \r)^2 + \l({y}_{p} - {y}_{0} \r)^2 + \l({z}_{p} - {z}_{0} \r)^2 \leq r^2, ~~\forall p \in \mathcal{P}$,
%\begin{align}\label{ballConst}
%\l({x}_{p} - {x}_{0} \r)^2 + \l({y}_{p} - {y}_{0} \r)^2 + \l({z}_{p} - {z}_{0} \r)^2 \leq r^2, ~~\forall p \in \mathcal{P},
%\end{align}
where the centroid $\bp_0=({x}_{0}, {y}_{0}, {z}_{0})$ is estimated using the technique in Section~\ref{Sec:3DSignalModel}. Again, the problem can be optimally solved since it still follows SOCP structure. }
\end{remark}

\vspace{-10pt}
\subsubsection{HV Size Sensing by Box Minimization}\label{sec:SBM}
In the preceding sub-section, the HV size is estimated by bounding the HV array by a disk and then minimizing it. In this sub-section, the disk is replaced by a box and the HV size estimation is translated into the problem of box minimization. Compared with disk minimization, the current technique improves the estimation accuracy since a vehicle typically has a rectangular shape. Let $L$ and $W$ be the length and width of the rectangular where the HV antenna clusters are placed at its vertices (see Fig. \ref{noIndexEst}(c)). Then the problem of HV size sensing is to estimate both $L$ and $W$. Recall that the HV array centroid $\bp_0$ and orientation $\omega$ are estimated in \textbf{Step 1} of the proposed sensing approach as mentioned.  Given $\bp_0$ and $\omega$, we define a \emph{sensing box} for bounding the HV array, denoted as  $\mathcal{B}(\bp_{0}, \omega, \ell, w)$,  as an $\omega$-rotated rectangle centered at  $\bp_{0}=(x_0, y_0)$ and having the  length $\ell$ and width $w$:
\vspace{-10pt}
\begin{align}\label{Eq:SizingBox}
&\mathcal{B}(\bp_0, \omega, \ell, w) =\left\{(x, y)\left|-\frac{1}{2}
\begin{bmatrix}
\ell, w
\end{bmatrix}^{\mathrm{T}}
 \preceq
\bR(\omega) \begin{bmatrix}
  x - x_{0}, y - y_{0}
\end{bmatrix}^{\mathrm{T}} \preceq \frac{1}{2}
\begin{bmatrix}
\ell, w
\end{bmatrix}^{\mathrm{T}} \right.\right\},%~~\forall p \in \mathcal{P}
\end{align}
 where $\bR(\omega)$ is the counterclockwise rotation matrix with the rotation angle  $\omega$ given as
 \vspace{-10pt}
 \begin{align}
\bR(\omega)=  \begin{bmatrix}%\label{lengthWidthCon}
  \cos(\omega)&\sin(\omega)\\
  -\sin(\omega)&\cos(\omega)
\end{bmatrix},
 \end{align}
and $\preceq$ represents an element-wise inequality.
Like disk minimization  in the previous subsection, finding the correct $L$ and $W$ is transformed into the following \emph{box minimization} problem:
\vspace{-10pt}
\begin{equation}\label{BoxMinProb}\tag{E7}
\begin{aligned}
\{L, W\} &=\arg\min_{ d_1, \ell, \omega, \{\nu_p, x_p, y_p\} } (\ell^2+w^2) \\
\text{s.t.}\quad &
-\frac{1}{2}
\begin{bmatrix}
\ell, w
\end{bmatrix}^{\mathrm{T}}
 \preceq
\bR(\omega) \begin{bmatrix}
  x_p - x_{0}, y_p - y_{0}
\end{bmatrix}^{\mathrm{T}} \preceq \frac{1}{2}
\begin{bmatrix}
\ell, w
\end{bmatrix}^{\mathrm{T}}, ~~\forall p \in \mathcal{P},\\
&\quad 0 < \nu_p < d_1+c\rho_p, ~~\forall p \in \mathcal{P},\\
\end{aligned}
\end{equation}
where the first constraint represents that all origins of signal paths $\{x_p, y_p\}$ should be inside $\mathcal{B}(\bp_0, \omega, \ell, w)$ defined in \eqref{Eq:SizingBox} and
the second one is the same as in \eqref{CircleMinProb}. Problem \eqref{BoxMinProb} can be solved by  \emph{quadratic programming} (QP), which is a convex optimization problem and can be efficiently solved using a  software  toolbox such as MatLab CVX. A result similar to that in Proposition \ref{pro:optimaRD1_SCM} can be obtained for HV size sensing by box minimization as shown below.

\vspace{-10pt}
\begin{proposition}[HV Size Sensing  by Box  Minimization]\label{pro:optimaRD1_SBM}
\emph{Given the solution $\{\ell^*, w^*\}$ for Problem \eqref{BoxMinProb},  there always exist at least two paths, say $p_1$ and $p_2$, whose originating positions lie  on two different vertices  of  the minimized box: 
\vspace{-5pt}
\begin{align}
\bR(\omega) \begin{bmatrix}\label{Optimal_structure_SBM}
  x_{p_1}^* - x_{p_2}^*, y_{p_1}^* - y_{p_2}^*
\end{bmatrix}^{\mathrm{T}} =
\begin{bmatrix}
\ell^*, w^*
\end{bmatrix}^{\mathrm{T}}
\mathrm{or}
\begin{bmatrix}
-\ell^*, w^*
\end{bmatrix}^{\mathrm{T}}
\mathrm{or}
\begin{bmatrix}
\ell^*, 0
\end{bmatrix}^{\mathrm{T}}
\mathrm{or}
\begin{bmatrix}
0, w^*
\end{bmatrix}^{\mathrm{T}}.
\end{align}}
\end{proposition}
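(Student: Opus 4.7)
The plan mirrors the KKT-based argument that yields Proposition \ref{pro:optimaRD1_SCM}, referenced in Appendix \ref{proof:Pro:minPosiCircle}. My first observation is that Problem \eqref{BoxMinProb} is a convex quadratic program: the objective $\ell^2 + w^2$ is convex, and all constraints are linear once the path origins $(x_p, y_p)$ are expressed through \eqref{FGrelation} as affine functions of $(d_1, \nu_p)$ for fixed $\omega$ and $\bp_0$. For bookkeeping I would change variables to the rotated coordinates $(u_p, v_p)^{\mathrm{T}} = \bR(\omega)((x_p, y_p) - (x_0, y_0))^{\mathrm{T}}$, so that the box constraints collapse to $|u_p| \leq \ell/2$ and $|v_p| \leq w/2$. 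Since Slater's condition holds and the problem is convex, the KKT conditions are both necessary and sufficient, so any optimizer $(\ell^*, w^*, d_1^*, \{\nu_p^*\})$ must satisfy them.

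Next I would form the Lagrangian with multipliers $\alpha_p^+, \alpha_p^-$ attached to the upper/lower bounds on $u_p$, multipliers $\beta_p^+, \beta_p^-$ attached to the upper/lower bounds on $v_p$, and additional multipliers for the bounds $\nu_p \geq 0$ and $\nu_p \leq d_1 + c\rho_p$. Stationarity in $\ell$ and $w$ yields
\begin{align}
\sum_p (\alpha_p^+ + \alpha_p^-) = 4\ell^*, \qquad \sum_p (\beta_p^+ + \beta_p^-) = 4w^*. \nonumber
\end{align}
Because $\ell^*, w^* > 0$ for any non-degenerate HV, at least one $\alpha$- and at least one $\beta$-multiplier must be strictly positive. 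Complementary slackness then exhibits a path $p_\alpha$ with $u_{p_\alpha}^* = \pm\ell^*/2$ and a path $p_\beta$ with $v_{p_\beta}^* = \pm w^*/2$; i.e., each lies on an edge of the optimal box.

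The third step strengthens this to the existence of \emph{two} paths at distinct vertices. I would combine the remaining stationarity conditions (in $d_1$ and in each $\nu_p$) with complementary slackness. Because $u_p$ and $v_p$ are simultaneously affine in the same pair $(d_1,\nu_p)$, the per-path $\nu_p$-stationarity couples the $\alpha$- and $\beta$-multipliers of that path, while the $d_1$-stationarity imposes one global balance across all paths. Enumerating the sign patterns of the nonzero multipliers, one finds that the only configurations consistent with both $\ell^*$ and $w^*$ being simultaneously minimal are those in which at least two distinct paths, say $p_1$ and $p_2$, saturate one $u$-constraint and one $v$-constraint each, placing each at a vertex of the box. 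Writing the difference $(x_{p_1}^* - x_{p_2}^*, y_{p_1}^* - y_{p_2}^*)$ in rotated coordinates then immediately reproduces one of the four patterns in \eqref{Optimal_structure_SBM}: $(\pm\ell^*, w^*)$ corresponds to $p_1, p_2$ at diagonally opposite vertices, while $(\ell^*, 0)$ and $(0, w^*)$ correspond to adjacent vertices sharing a horizontal or vertical edge, respectively.

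The main obstacle is the case analysis in the third step: since the $\ell$- and $w$-constraints are coupled only through the shared variables $d_1$ and $\{\nu_p\}$, ruling out pathological configurations (for instance, all active $\alpha$-multipliers sitting on the same edge, or active $u$- and $v$-constraints attached to entirely separate paths that together fail to pin both dimensions) requires using $d_1$-stationarity to transfer multiplier mass between opposite edges and $\nu_p$-stationarity to preclude a single active path from supporting both a positive $\ell^*$ and $w^*$. A subtlety is that the $\nu_p$-bound multipliers could in principle absorb some of this balance; I would dispose of this by arguing, as is standard for this family, that at optimum the $\nu_p$'s lie strictly inside their open intervals so those multipliers vanish, leaving the pure box-constraint balance from which the two-vertex conclusion follows.
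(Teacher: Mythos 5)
Your proposal follows essentially the same route as the paper's proof in Appendix~\ref{proof:Pro:minPosiBox}: write the KKT system of the convex program \eqref{BoxMinProb}, use complementary slackness to force box constraints to be active, and use the coupling of the two rotated-coordinate multipliers through the shared variables $d_1$ and $\nu_p$ to conclude that the active paths sit at vertices, yielding \eqref{Optimal_structure_SBM}. Your explicit stationarity conditions in $\ell$ and $w$ are a useful addition the paper leaves implicit (its displayed conditions alone do not force any multiplier to be nonzero), while the final sign-pattern case analysis you only sketch is treated no more completely in the paper's own argument.
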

\begin{proof}
See Appendix~\ref{proof:Pro:minPosiBox}.
\end{proof}

\vspace{-15pt}
\begin{remark}[Feasible Condition of HV-Sensing by Box Minimization]\label{Re:Feasible:1}\emph{A similar remark as Remark~\ref{Re:Feasible} for disk minimization also applies to the current technique. Specifically, though two paths are required to determine the optimal box length $L = \ell^*$ and width $W = w^*$  based on Proposition \ref{pro:optimaRD1_SBM},  at least four paths are required for  estimating  the required HV  centroid  $\bp_0$ and orientation $\omega$  (see Proposition~\ref{pro:minNumPath}). }
\end{remark}
\vspace{-5pt}

\vspace{-15pt}
\begin{remark}[Sensing Box Minimization for Decoupled Antenna Clusters]\label{remark:SBMwithIndex}
\emph{The technique of HV size  sensing by box minimization developed for the case of coupled HV antenna clusters can be also modified for use in the case of decoupled clusters. Roughly speaking, the modified technique involves separation minimization of four boxes corresponding to the positioning  of four antenna clusters. As the modification is straightforward, the details are omitted for brevity. The resultant advantage with respect to the original sensing technique proposed in Section \ref{sec:KarraysDifferent} is to reduce the minimum number of required paths from $6$ (see Proposition \ref{pro:minNumPath_4MIMO}) to $4$. }
\end{remark}

\vspace{-20pt}
\begin{remark}[Extension to 3D Propagation]\label{3DPBM}
\emph{Similar to  Remark \ref{3DPCM} for disk minimization, the technique of HV size sensing by box minimization originally designed for 2D propagation can be extended to  3D propagation model by using a cuboid instead of a  box, yielding the problem of  \emph{cuboid minimization} as illustrated in  Fig. \ref{noIndexEst}(d). Compared with \eqref{BoxMinProb},  the objective function of the cuboid minimization  is $\ell^2+w^2+h^2$ where the new variable $h$ is added to represent the height of the cuboid. In addition, the first constraint in \eqref{BoxMinProb} is modified  as
\vspace{-10pt}
\begin{align}
-\frac{1}{2}
\begin{bmatrix}
\ell, w, h
\end{bmatrix}^{\mathrm{T}}
 \preceq
\bR_{\textrm{3D}}(\omega, \varrho)
\begin{bmatrix}
  {x}_{p} - {x}_{0}, {y}_{p} - {y}_{0}, {z}_{p} - {z}_{0}
\end{bmatrix}^{\mathrm{T}}
\preceq \frac{1}{2}
\begin{bmatrix}
\ell, w, h
\end{bmatrix}^{\mathrm{T}}
,~~\forall p \in \mathcal{P}
\end{align}
where $\bR_{\textrm{3D}}(\omega, \varrho)$ is the 3D counterclockwise rotation matrix with the rotation angles  $\omega$ and $\varrho$ as
\begin{align}
\bR_{\textrm{3D}}(\omega, \varrho)=\begin{bmatrix}\label{lengthCon}
  \cos(\omega)&-\sin(\omega)\cos(\varrho)&\sin(\omega)\sin(\varrho)\\
  \sin(\omega)&\cos(\omega)\cos(\varrho)&-\cos(\omega)\sin(\varrho)\\
  0&\sin(\varrho)&\cos(\varrho)
\end{bmatrix},
\end{align}
and the centroid $\bp_0=({x}_{0}, {y}_{0}, {z}_{0})$ can be obtained by  the technique in Section~\ref{Sec:3DSignalModel}. The cuboid minimization is still QP and the solution approach is similar to  that for the 2D counterpart.}
\end{remark}

\vspace{-20pt}
\begin{remark}[Size Underestimation]\label{remark:SizeUnderestimation}
\emph{The objective functions of disk minimization in \eqref{CircleMinProb} and the box minimization in \eqref{BoxMinProb} are one-dimension distances [i.e., the radius in \eqref{CircleMinProb} and the diagonal distance in \eqref{BoxMinProb}], while two distances, the length and width of vehicles, are required to represent the area of the corresponding vehicle. In other words, the resultant  size estimation may be smaller than the real one, called \emph{underestimation}, which makes it challenging to guarantee auto-driving's safety. However, the underestimation can happen only when the estimated coordinates of all paths, i.e., $\{x_p, y_p\}$, are closer to the centroid due to the first constraint. 
As a result, the resultant sizes are less likely to be underestimated, which is verified by simulation in Fig. \ref{Sizing}(b) in Sec. \ref{sim:Size}.}
\end{remark}

%{\color{blue}\begin{remark}[Sensing Accuracy Improvement]\label{remark:improveAccuracy}
%\emph{The sensing accuracy can be further improved by either adding more antennas in each cluster array or using larger transmission bandwidth. Deploying more antennas per cluster improves the accuracy of AoA/AoD estimations and using larger bandwidth results in much higher resolution for estimating the ToA, both of which will benefit the signal processing performance.}
%\end{remark}}

\vspace{-30pt}
\section{Coping with  Insufficient Multi-Path}\label{sec:practicalIssue}
\vspace{-5pt}

The HV-sensing techniques designed in the preceding sections require at least four propagation paths to be effective. In practice, it is possible to happen that  the number of observed (i.e., detectable) paths may  be insufficient, i.e., $P<4$,  due to either sparse scatterers or the fact that most paths are severely attenuated. To address this practical issue, two solutions are proposed in the following sub-sections,  called \emph{sequential path combining} and \emph{random  directional beamforming}. For simplicity, we focus on the case of single-cluster HV array while  the extension to the case of multi-cluster array is straightforward.

\vspace{-20pt}
\subsection{Sequential Path Combining}\label{sec:SequentialCombine}
\vspace{-10pt}

%\begin{figure}[t]
%\centering
%\includegraphics[width=7cm]{./Fig/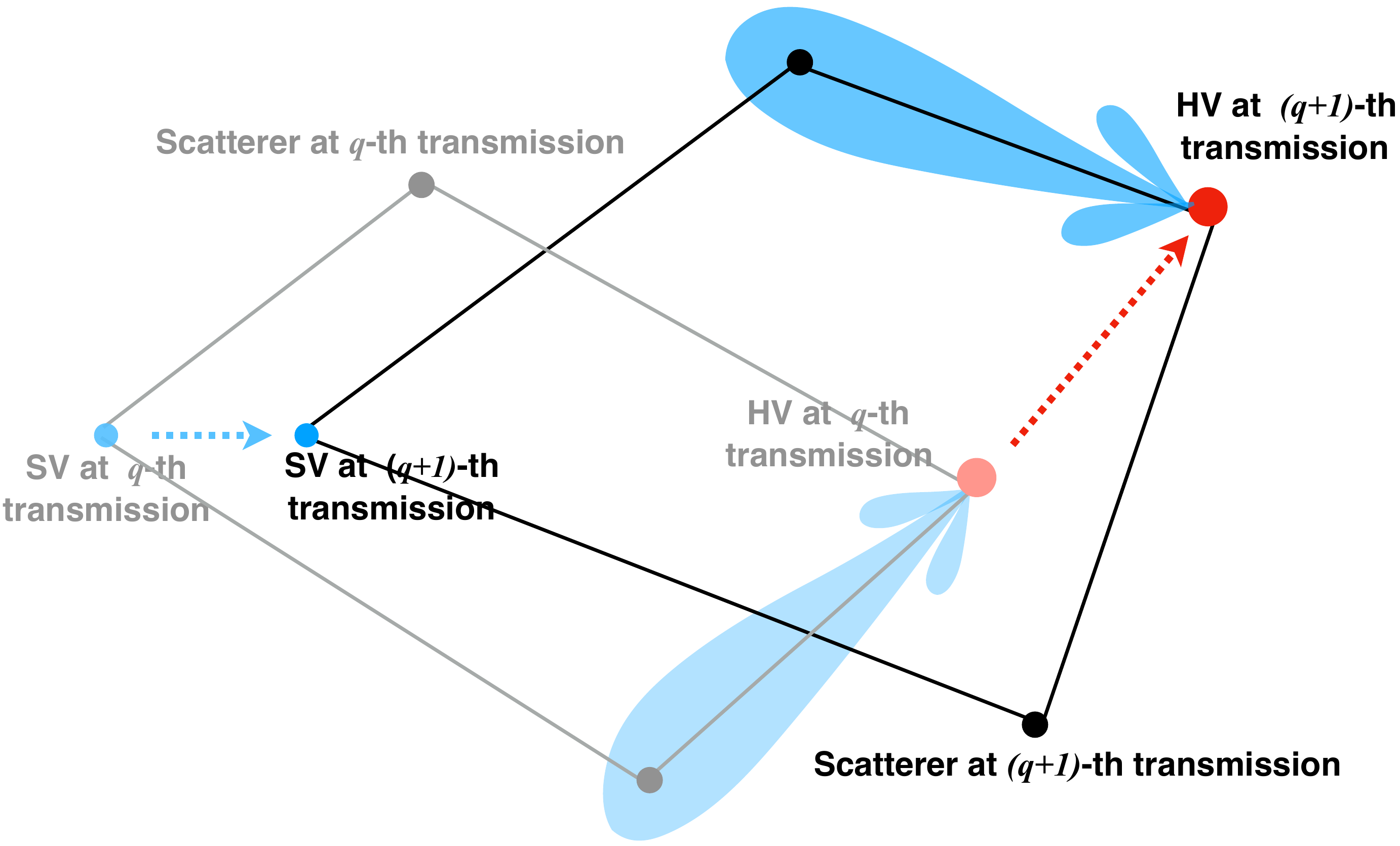}
%\caption{Proposed solutions for coping with  insufficient number of propagation  paths.}\label{insuffPaths}
%\vspace{-30pt}
%\end{figure}

As shown in Fig. \ref{insuffPaths}, the technique of sequential path combining implemented at the  SV  merges paths from repeated  transmissions of HV till a sufficient  number of paths is identified for the purpose of subsequent  HV-sensing. Let $Q$ denote the number of  HV's repetitive  transmissions with a constant interval denoted by $\Delta$. The interval is chosen to be much larger than the coherence time, enabling SV to differentiate the arrival paths according to their transmission time instants. In addition,  the total transmission duration $Q\Delta$ should be sufficiently small enough to guarantee the constant velocities of HV and SV within the duration. 
Assume  that the relative orientation of  driving direction and velocity  of HV with  respect to  SV, namely  $\omega$ and $v$, remain  constant within  the entire duration of $Q$ intervals $Q\Delta$. Let  $\mathcal{P}_q$ denote the set of observed paths of the $q$-th transmission. Then the following system of equations are formed:
\vspace{-10pt}
\begin{align}%\label{periTransmission}
\begin{cases}
\nu_{p} \cos(\theta_{p}) - (d_{p} - \nu_{p}) \cos(\varphi_{p} + \omega)+ v(q-1)   \cos(\omega)= \nu_{1} \cos(\theta_{1}) - (d_{1} - \nu_{1}) \cos(\varphi_{1} + \omega),  \\
\nu_{p} \sin(\theta_{p}) - (d_{p} - \nu_{p}) \sin(\varphi_{p} + \omega)+ v(q-1)  \sin(\omega) =\nu_{1} \sin(\theta_{1}) - (d_{1} - \nu_{1}) \sin(\varphi_{1} + \omega). \nn
\end{cases}
\end{align}
where $ p \in \mathcal{P}_q$ and $q = 1, 2, \cdots, Q$.  They can be solved following a  similar procedure as  in Section~ \ref{sec:KarraysDifferent}. Let $P_{1:q}$ be the total number of paths identified due to  the $q$  transmissions, i.e.,  $P_{1:q}=|\mathcal{P}_1|+|\mathcal{P}_2|+\cdots+ |\mathcal{P}_q|$. Noting that the number of equations above  is  $2(P_{1:q}-1)$ and the number of unknowns are $(P_{1:q}+3)$ including $\{\nu_p\}$, $d_1$, $\omega$ and $v$.
As a result, the condition for the SV collecting sufficient paths for HV sensing is $2(P_{1:q}-1)\leq (P_{1:q}+3)$ or equivalently $P_{1:q} \geq 5$. So path combining over multiple sequential transmissions overcomes the practical  limitation  of insufficient  paths. %{\color{blue} Furthermore, the trajectory of HV can be tracked simultaneously.}

\begin{figure}[t]
\centering
\begin{minipage}{0.45\textwidth}
\centering
\includegraphics[width=7cm]{./insuffPaths.pdf}
\caption{Proposed solutions for coping with  insufficient number of propagation  paths.}\label{insuffPaths}
\end{minipage}
\begin{minipage}{0.5\textwidth}
\centering
\subfigure[Highway scenario.]{\includegraphics[width=5cm]{./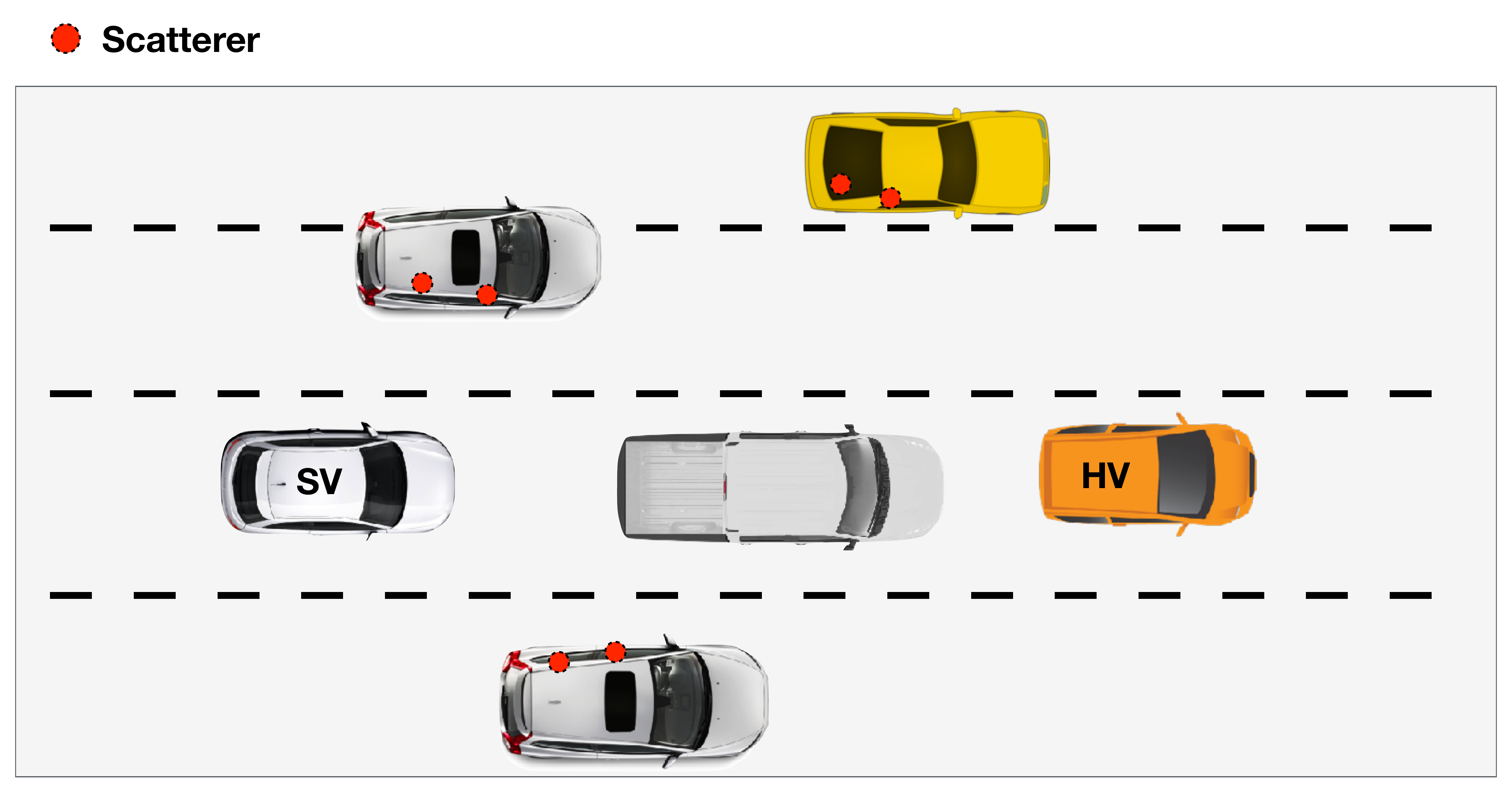}}
\subfigure[Rural scenario.]{\includegraphics[width=5cm]{./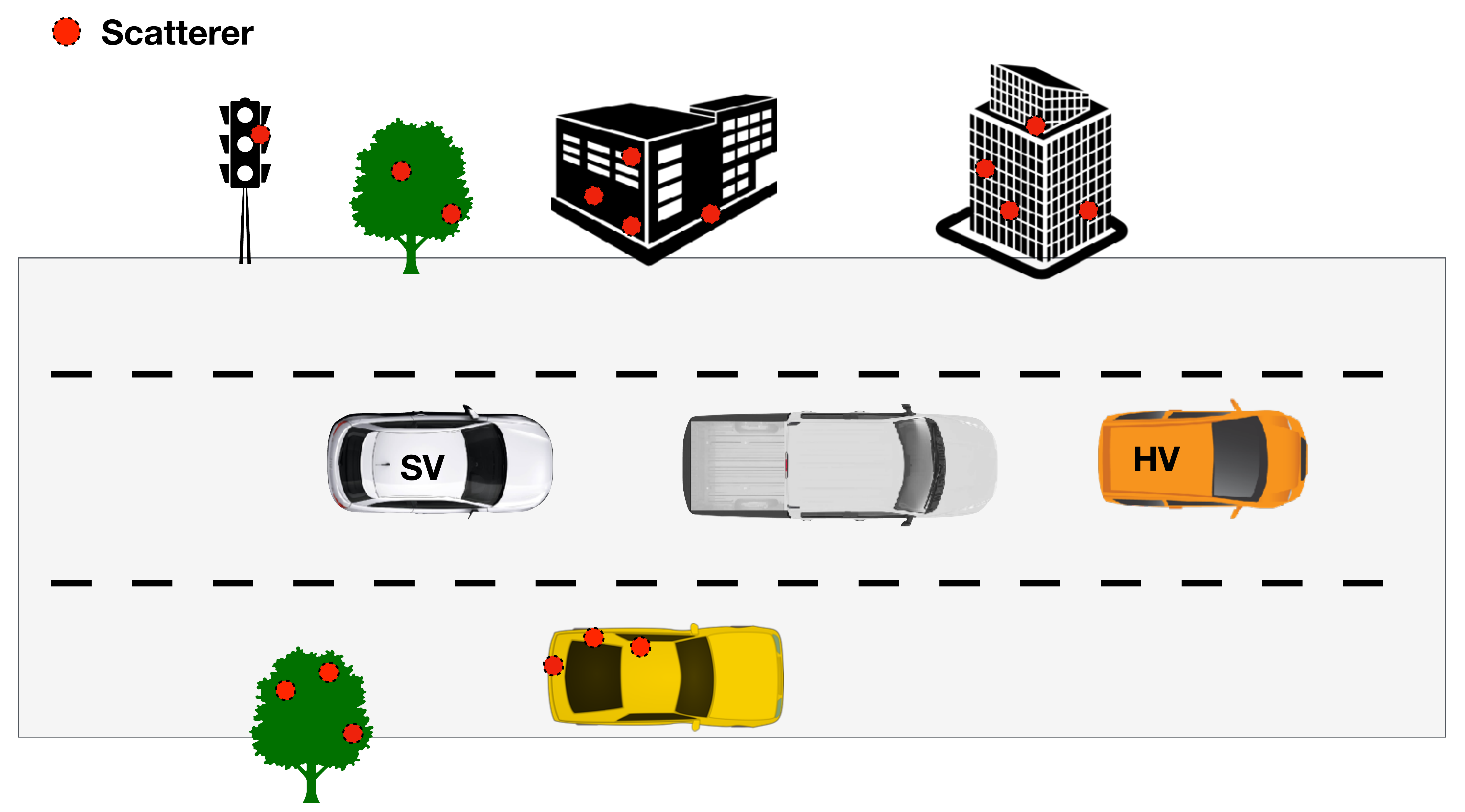}}
\vspace{-10pt}
\caption{Two typical driving scenarios considered in simulation.}\label{highwayRural}
\end{minipage}
\vspace{-35pt}
\end{figure}

%\begin{figure}[t]
%\centering
%\subfigure[Highway scenario.]{\includegraphics[width=6cm]{./Fig/highway.pdf}}
%\subfigure[Rural scenario.]{\includegraphics[width=6cm]{./Fig/rural.pdf}}
%\vspace{-10pt}
%\caption{Two typical driving scenarios considered in simulation.}\label{highwayRural}
%\vspace{-30pt}
%\end{figure}

\vspace{-20pt}
\subsection{Random Directional Beamforming}  \label{sec:beamforming}
\vspace{-10pt}

To further enhance the effectiveness of sequential path combining, a directional beam can be randomly steered at HV over sequential transmissions. Its purpose is to  reveal some paths that are otherwise hidden to SV due to severe attenuation. The  beam width can be set as ranging from $90^\circ$ to $30^\circ$ with gain ranging from $3$ dB to $10$ dB,  %\cite{RagHeath:SystematicCodebookDesignBeam},
which helps reach faraway scatterers by focusing the transmission power in their directions and thereby  mitigating  path loss  \cite{viswanath2002opportunistic}. Note that a single trial of  randomly steered beam may not find enough paths. Thus, it is important to combine the technique with sequential path combining designed in the preceding sub-section for the former to be effective. Their integrated operation is illustrated  in Fig.~\ref{insuffPaths} and its effectiveness is  verified by simulation in the sequel.

\vspace{-15pt}
\section{Simulation Results}\label{sec:simulation}
\vspace{-10pt}

In this section, the performance of the proposed vehicular sensing techniques are evaluated by  realistic simulation. Consider the V2V channel model.
%The geometry-based stochastic V2V channel model \cite{karedal2009geometry, wang2009vehicle, boban2014geometry} is one of the most representative and widely used V2V channel models where the scatterers are randomly distributed.
We adopt the geometry-based stochastic channel model given in \cite{karedal2009geometry} for modeling the practical scatterers distribution and V2V propagation channel, which has been validated by real measurement data.  Two types of scatterers, namely mobile scatterers (e.g., from moving vehicles) and static scatterers (e.g., from road signs and buildings), are simulated. As illustrated in Fig.~\ref{highwayRural}(a), in the highway scenario, most of the scatterers are mobile scatterers. On the other hand, in the rural scenario illustrated in Fig.~\ref{highwayRural}(b), the scatterers include vehicles on the road as well as relatively denser stationary objects off the road. 
The locations of mobile and static scatterers are randomly distributed over the entire area depending on their densities described in \cite[Table 1]{karedal2009geometry}. Given the scatterer location, the corresponding AoA and AoD are determined without explicit distribution. If more than two scatterer locations are close to each other, the corresponding paths are unresolvable due to similar  AoAs and AoDs. Instead, they are observed as a single signal path with higher power.  HV and SV are assumed to drive in opposite directions. The key simulation parameters and their values are summarized in Table \ref{Table:Simulation} unless stated otherwise. We use MatLab R2015b and its CVX tool box for solving optimization problems and simulations.

%We set the carrier frequency $f_c = 5.9$ GHz, the transmission bandwidth $B_s = 100$ MHz, $M_r = M_t = 20$, and the per-cluster transmission power  is $23$ dBm. We consider the case of $4$-cluster array with the antenna clusters following  the  rectangular configuration. The size of vehicle  is $L\times W = 3\times 6~\textrm{m}^2$ and inter-vehicle distance is $50~\textrm{m}$ unless stated otherwise. For each figure, the results of both HV (NLoS) sensing and adjacent vehicle (LoS) sensing are plotted.

\begin{table}[t]
\centering
\caption{Simulation Settings}
\vspace{-10pt}
\setlength{\tabcolsep}{2pt}
\footnotesize
\begin{tabular}{c|c}
\toprule
\textbf{Simulation parameter} &\textbf{Value} \\
\midrule
Carrier frequency $f_c$,  transmission bandwidth $B$ &  $5.9$ GHz, $100$ MHz \\
Number of transmit, receive antennas per cluster $\{M_t, M_r\}$ & $\{20,20\}$ \\
Transmit power, noise power & $23$ dBm,  $-70$ dBm \\
Size of vehicle $L\times W$, inter-vehicle distance & $3\times 6~\textrm{m}^2$, $50~\textrm{m}$ \\
%Inter-vehicle distance & $50~\textrm{m}$ \\
Relative velocity between HV and SV $v$ (driving in opposite directions) & $200~\textrm{km}/\textrm{h}$ \\
\bottomrule
\end{tabular}
\vspace{-10pt}
\label{Table:Simulation}
\end{table}

\vspace{-15pt}
\subsection{Vehicular Positioning}
\vspace{-5pt}
The metric for measuring positioning accuracy is defined as the average Euclidean squared distance of estimated  positions of vehicle antenna clusters   to their true locations: $\frac{1}{4} \sum_{k=1}^{4}\| \bp^{*(k)} - \bp^{(k)}\|^2$, named \emph{average positioning error}. Note that the metric also indirectly measures the accuracy of estimated vehicle size and orientation that are determined by the clusters positions.

\begin{figure}[t]
\centering
\subfigure[Effect of number of observed signal paths.]{\includegraphics[width=8cm]{./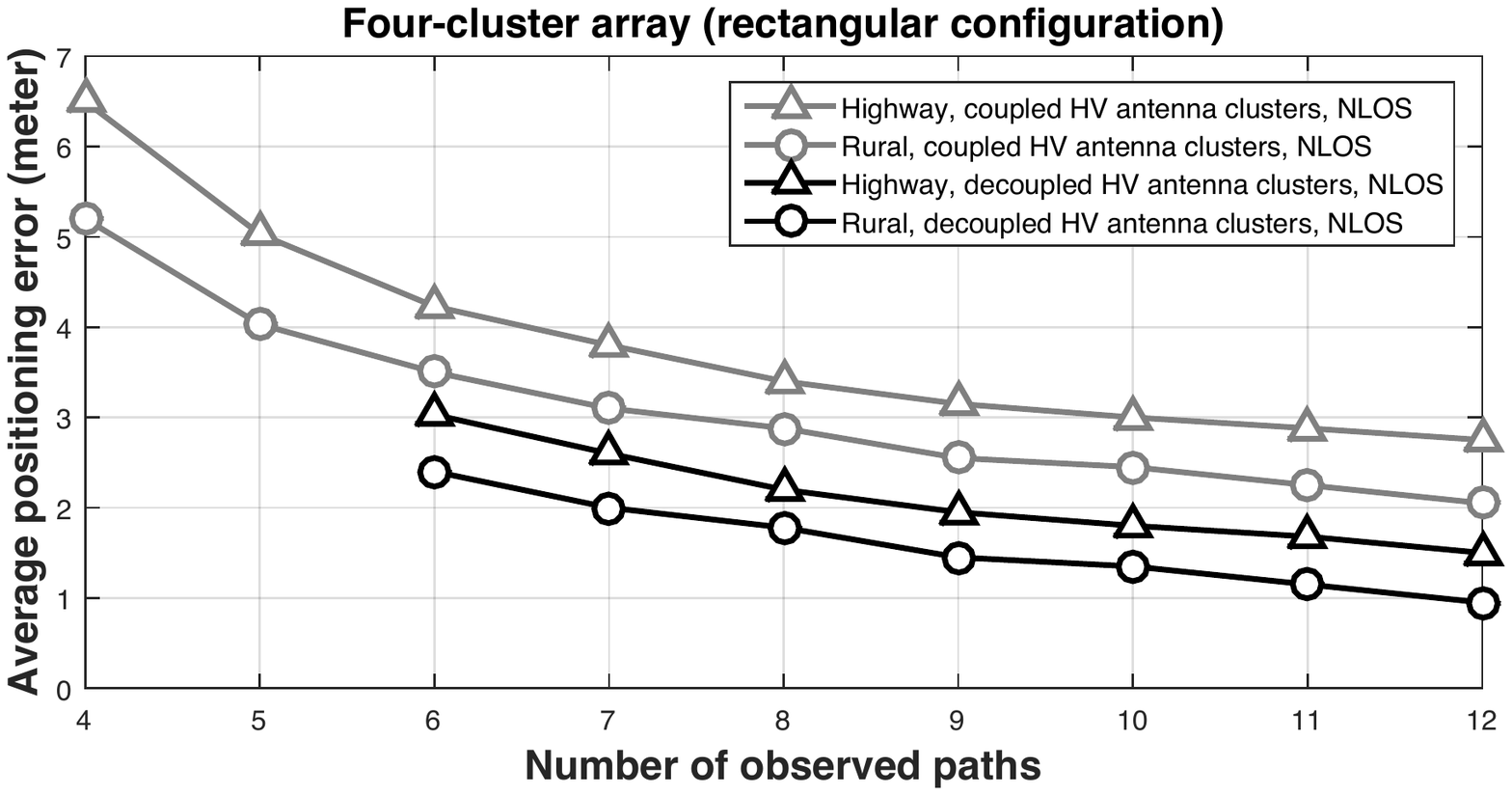}}
\subfigure[Effect of inter-vehicle distance.]{\includegraphics[width=8cm]{./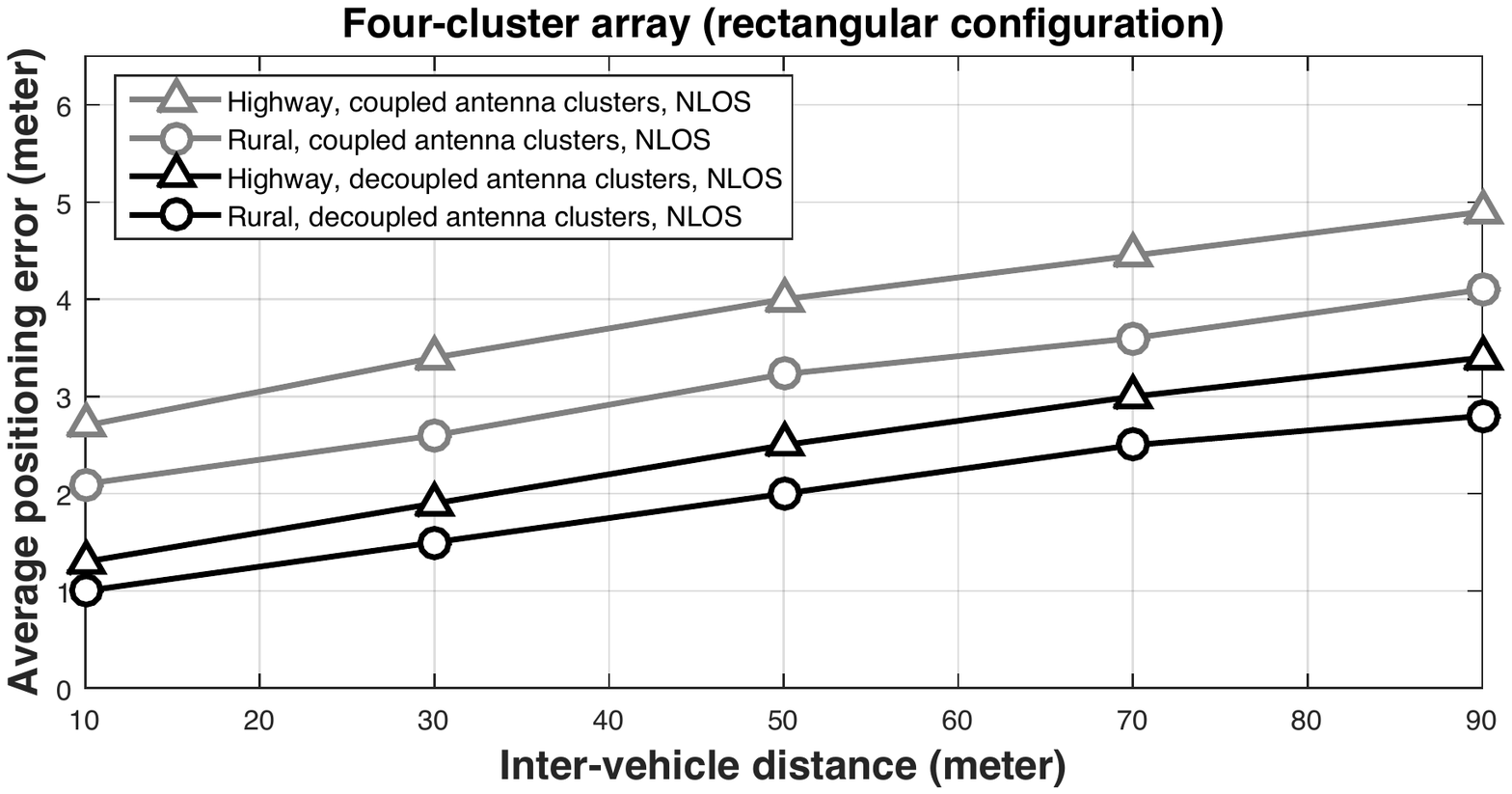}}
\vspace{-10pt}
\caption{Vehicular sensing accuracy under different number of signal paths and the inter-vehicle distance. }\label{positionError2D}
\vspace{-25pt}
\end{figure}

Fig.~\ref{positionError2D} shows the performance of the proposed vehicular sensing techniques in the  2D propagation model. The curves of average positioning error versus varying number of observed signal paths and inter-vehicle distance are plotted  in Fig.~\ref{positionError2D}(a) and \ref{positionError2D}(b), respectively. For the case of decoupled antenna clusters  using different orthogonal waveform sets, the LS estimator given in Section~\ref{sec:KarraysDifferent} is applied, which is feasible if  $P\geq6$ (see Proposition~\ref{pro:minNumPath_4MIMO}). On the other hand, for the case of coupled clusters, the technique of sensing box minimization  in Section~\ref{sec:SBM} is used requiring the number of observed paths $P\geq4$. Several key observations can be made as follows. First, from Fig.~\ref{positionError2D}(a),  receiving more observed paths at SV can dramatically decrease the positioning error and the positioning accuracy is significantly higher in the case of decoupled clusters than the other case.
%It is shown that, if the sufficient paths are collected (e.g., 12 paths), the average positioning errors are less than 0.4 m and 0.2 m for hidden/NLOS vehicle positioning and adjacent/LOS vehicle positioning, respectively, when inter-vehicle distance is 50 m in rural scenario.
Second, the positioning accuracy in the rural scenario is better than that in the highway scenario due to the following two reasons: 1) the signal propagation loss in the highway scenario is higher than that in the  rural counterpart as  typically longer  distance between vehicles and scatterers adds to  the difficulty of accurate sensing; 2) more paths exist   in the  rural scenario  due to  denser scatterers, which help improve the  positioning~accuracy.

Next, in Fig.~\ref{positionError2D}(b), it is shown that the average positioning error increases as the inter-vehicle distance grows  because larger path loss degrades sensing performance.  The gap of positioning error between the highway and rural scenarios increases with the inter-vehicle distance as the path loss scales up faster in the former than the latter. %Furthermore, it is observed that the proposed technique is feasible for both NLoS and LoS positioning and the performance for LoS positioning is much better than that of NLoS. It is make sense because the LoS signals lead to more accurate AoA/AoD/ToA estimations due to their lower power attenuation.

\begin{figure}[t]
\centering
\subfigure[Effect of portion of multi-bounce signal path.]{\includegraphics[width=8cm]{./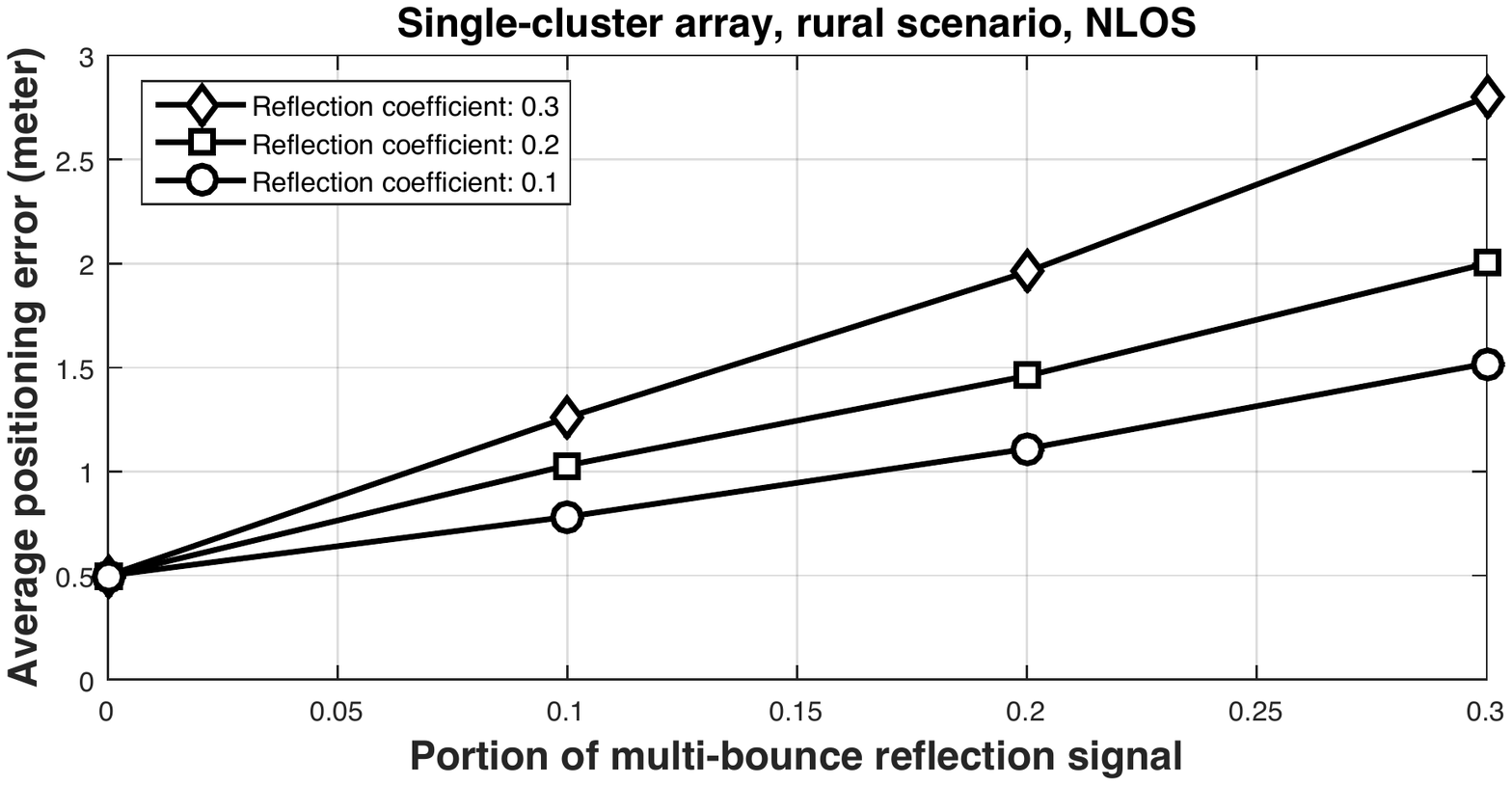}}
\subfigure[Effect of reflection coefficient.]{\includegraphics[width=8cm]{./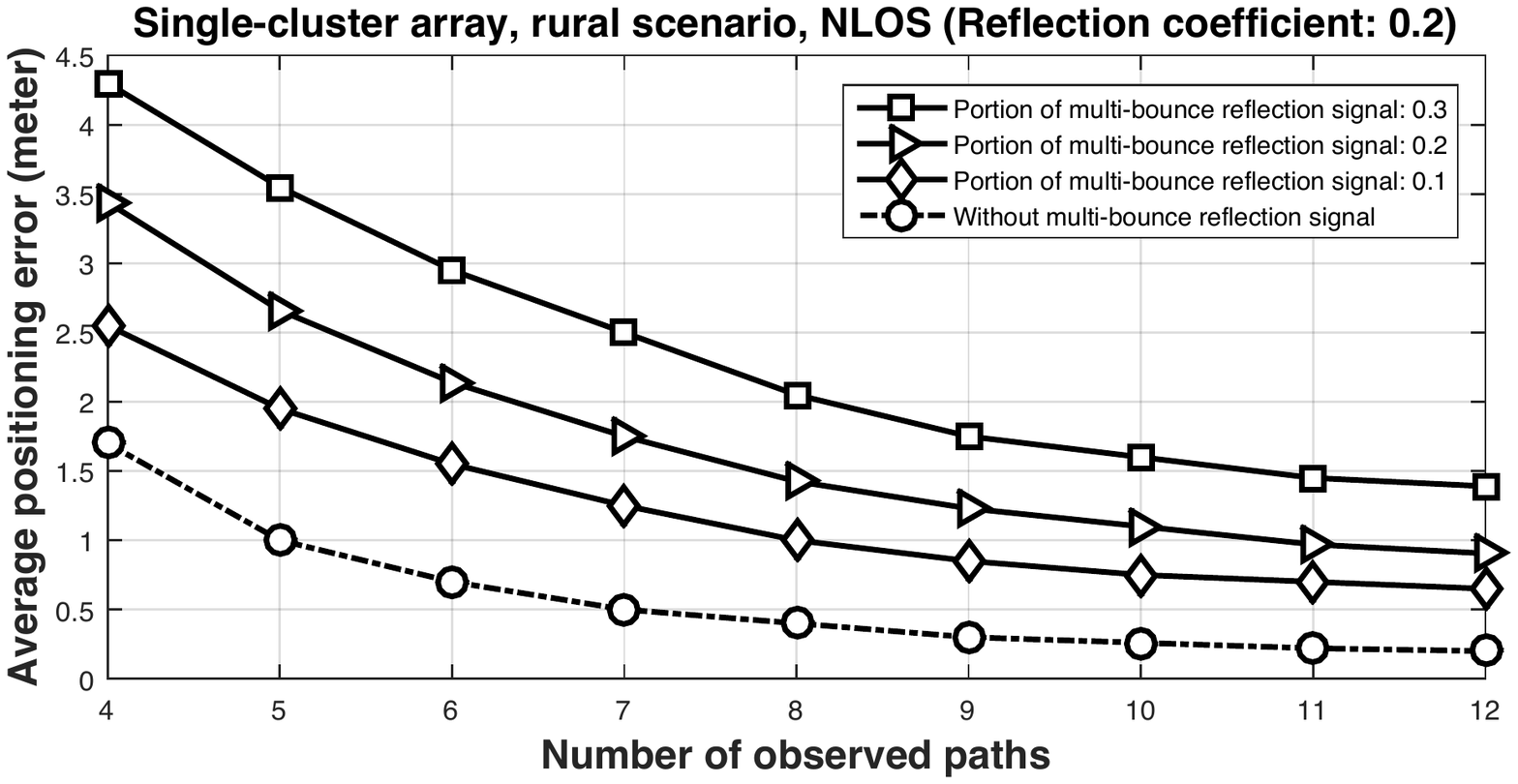}}
\vspace{-10pt}
\caption{Effects of the multi-bounce scattering on the positioning accuracy under different reflection coefficients. }\label{multiBounce}
\vspace{-30pt}
\end{figure}

Recall that our positioning technology is developed under the assumption of single-bounce scattering (see Assumption \ref{assump:SingleBounce}).
In rich scattering scenarios, however, signals are likely to reflect more than two bounces, defined as multi-bounce scattering, which results in the degradation of the positioning accuracy because the triangular geometry introduced in Section \ref{subsec:ScenarioDescription} is not satisfied any more. To investigate this effect, we consider an additional simulation scenario where single-bounce and multi-bounce scattering coexist. The multi-bounce signal paths are more attenuated than single-bounce paths
by multiplying the receive power with an additional reflection coefficient. Fig. \ref{multiBounce}(a)  presents the positioning error with respect to the fraction of multi-bounce signal paths, showing that the performance degradation is reduced as the additional reflection coefficient reduces. The reason is that multi-bounce paths usually have insufficient signal strengths, making the SV difficult to observe them. As a result, the dominant  observed paths are single-bounce signals from which the AoA, AoD, and ToA are accurately estimated, yielding the acceptable positioning performance with marginal degradation. Fig. \ref{multiBounce}(b) shows that the positioning error can be significantly decreased when more number of paths are observed. For instant, in case with the reflection loss of multi-bounce signals being $0.1$, the positioning error is less than $0.6$ m when $12$ paths are observed.

\vspace{-15pt}
\subsection{Comparison between 2D and 3D Propagations }
\vspace{-5pt}
Average positioning errors for 2D and 3D propagations are compared in Fig.~\ref{2D3D} for varying number of signal paths and inter-vehicle distance. We simulate the case that vehicle uses single-cluster array for transmission. First of all, the main trends  shown in Fig.~\ref{2D3D} are same as plotted in Fig.~\ref{positionError2D}. Specifically, according to  Fig.~\ref{2D3D}(a), the positioning for 2D and 3D propagation  are feasible when $P \geq 4$ and $P \geq 3$, respectively, aligned with Propositions \ref{pro:minNumPath} and \ref{pro:minNumPath3D}, respectively. Next, it is observed from both Fig.~\ref{2D3D}(a) and (b) that, compared with 2D propagation, the positioning accuracy for the 3D case   is worse and the error gap between the  highway and rural scenarios becomes larger. Due to relatively low scatterer height compared with  the inter-vehicle distance, most elevation angles are around $\frac{\pi}{2}$, and the resultant positioning accuracy tends to  be  sensitive to noisy angle detection. In addition, angle detection for 3D propagation  is  less accurate on the  highway  due to the larger propagation loss.

\begin{figure}[t]
\centering
\subfigure[Effect of number of observed signal paths.]{\includegraphics[width=8cm]{./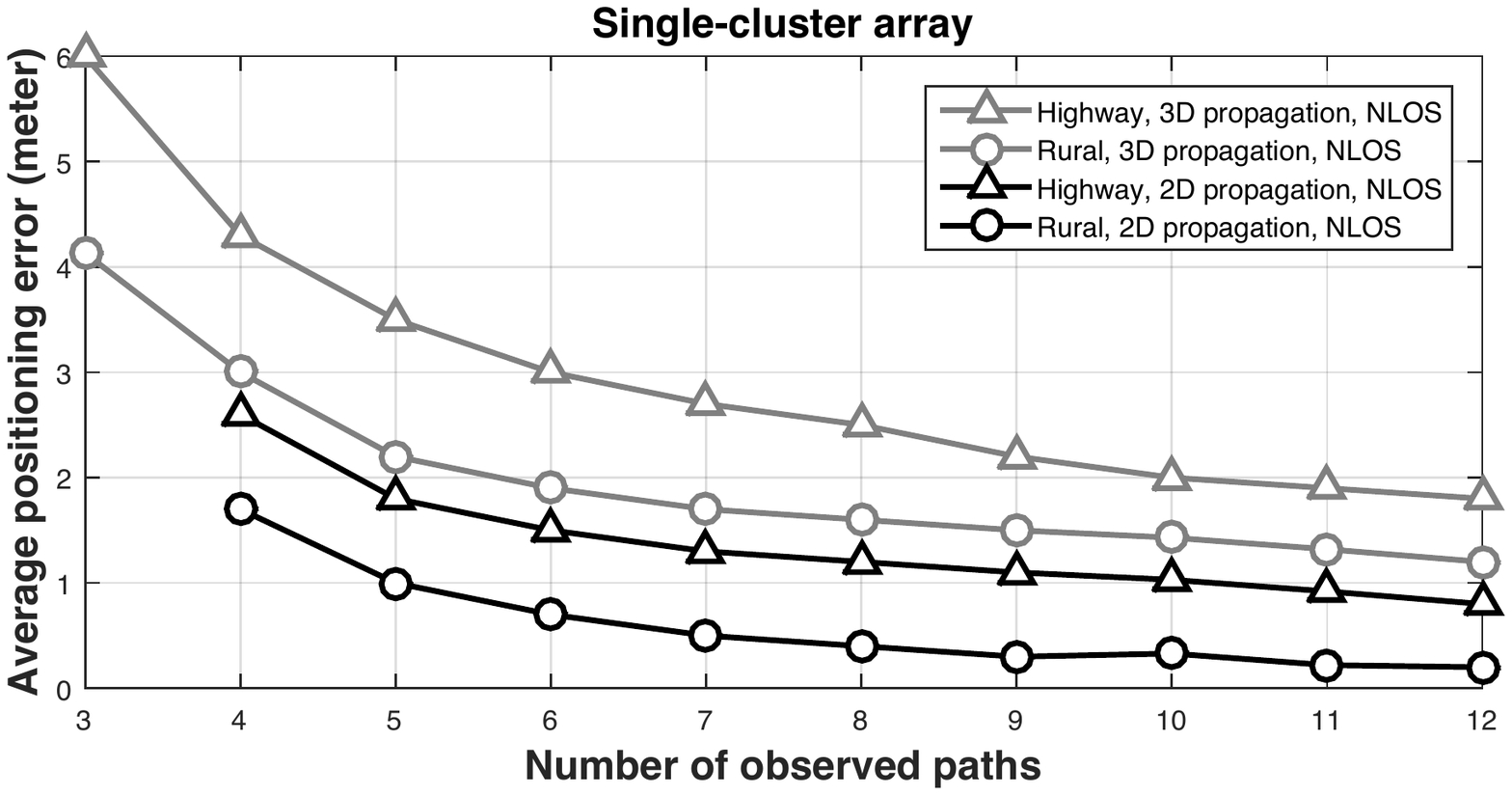}}
\subfigure[Effect of inter-vehicle distance.]{\includegraphics[width=8cm]{./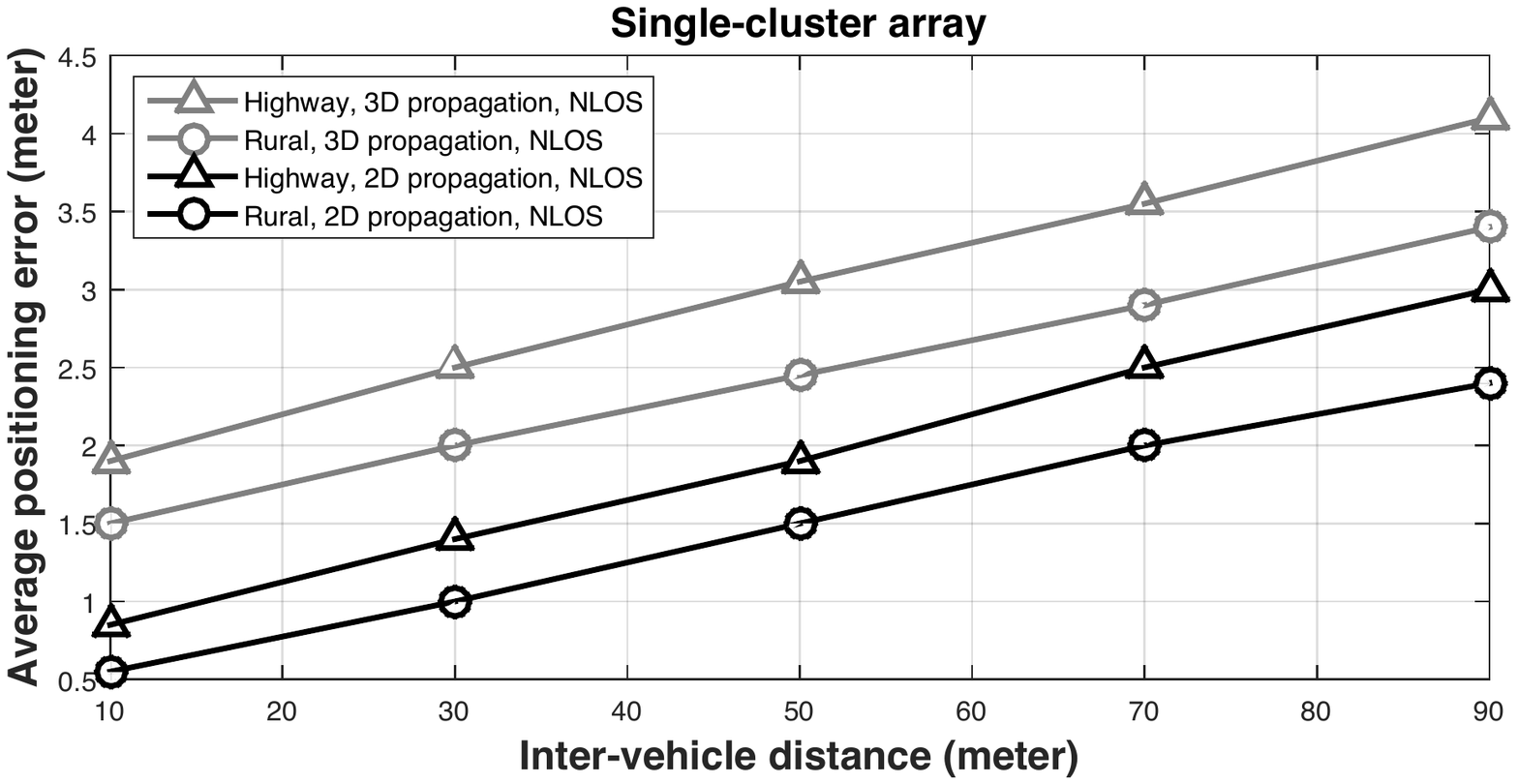}}
%\subfigure[Effect of HV-SV relative velocity on positioning accuracy.]{\includegraphics[width=8.1cm]{./Fig1/}}
\vspace{-15pt}
\caption{Comparison of positioning accuracy between 2D and 3D propagations.}\label{2D3D}
\vspace{-20pt}
\end{figure}

Fig.~\ref{LoSComparison} verifies the feasibility of the proposed technique for sensing a vehicle in LoS condition. It is shown that the LoS path contributes to reducing positioning errors in both 2D and 3D propagation models due to its stronger signal strength than other NLoS paths, leading to more accurate TDoA/AoA/AoD detections as mentioned in Remark \ref{remark:LOS}. Fig. \ref{SNR} shows the effect of signal transmission power on positioning accuracy. It is observed that the positioning error can be reduced with higher transmission power because   more accurate TDoA/AoA/AoD detections are available.  Last, Fig. \ref{velocity} shows the effect of HV-SV relative velocity on positioning accuracy. Higher relative velocity results in larger Doppler shift to the signals. Therefore, the waveform orthogonality of signals might not be perfectly guaranteed, which leads to higher positioning error. Nevertheless, the positioning accuracy degradation is marginal within the practical velocity range (i.e., less  than 250 km/h).

\vspace{-18pt}
\subsection{Vehicular Size Sensing}\label{sim:Size}
\vspace{-7pt}

In the preceding results, we assume decoupled antenna clusters by using multiple orthogonal waveform sets for transmission. Next, we consider the case of coupled clusters and evaluate the performance of the vehicle size sensing techniques developed in Section~\ref{sec:KarraysSame}. To this end, we define a \emph{sizing error} as the area difference between the estimated size [sensing disk  in \eqref{sensingDisk} or sensing box  in \eqref{Eq:SizingBox}] and the real one. Fig. \ref{Sizing}(a) plots the average sizing errors of the disk and box minimizations versus the inter-vehicle distance.
The cases of underestimation and overestimation are separately measured, which is 
observed that the sizing errors of the both cases are increasing with inter-vehicle distance due to the same reason discussed for Fig.~9(b). Box minimization provides better sizing performance by exploiting the antenna clusters array configuration that is neglected by disk minimization. Fig. \ref{Sizing}(b) shows the both sizing methods are more likely to result in overestimation than underestimation, which is~aligned with the discussion in Remark~\ref{remark:SizeUnderestimation}.

\begin{figure}[t]
\centering
\begin{minipage}{0.495\textwidth}
\centering
\includegraphics[width=8cm]{./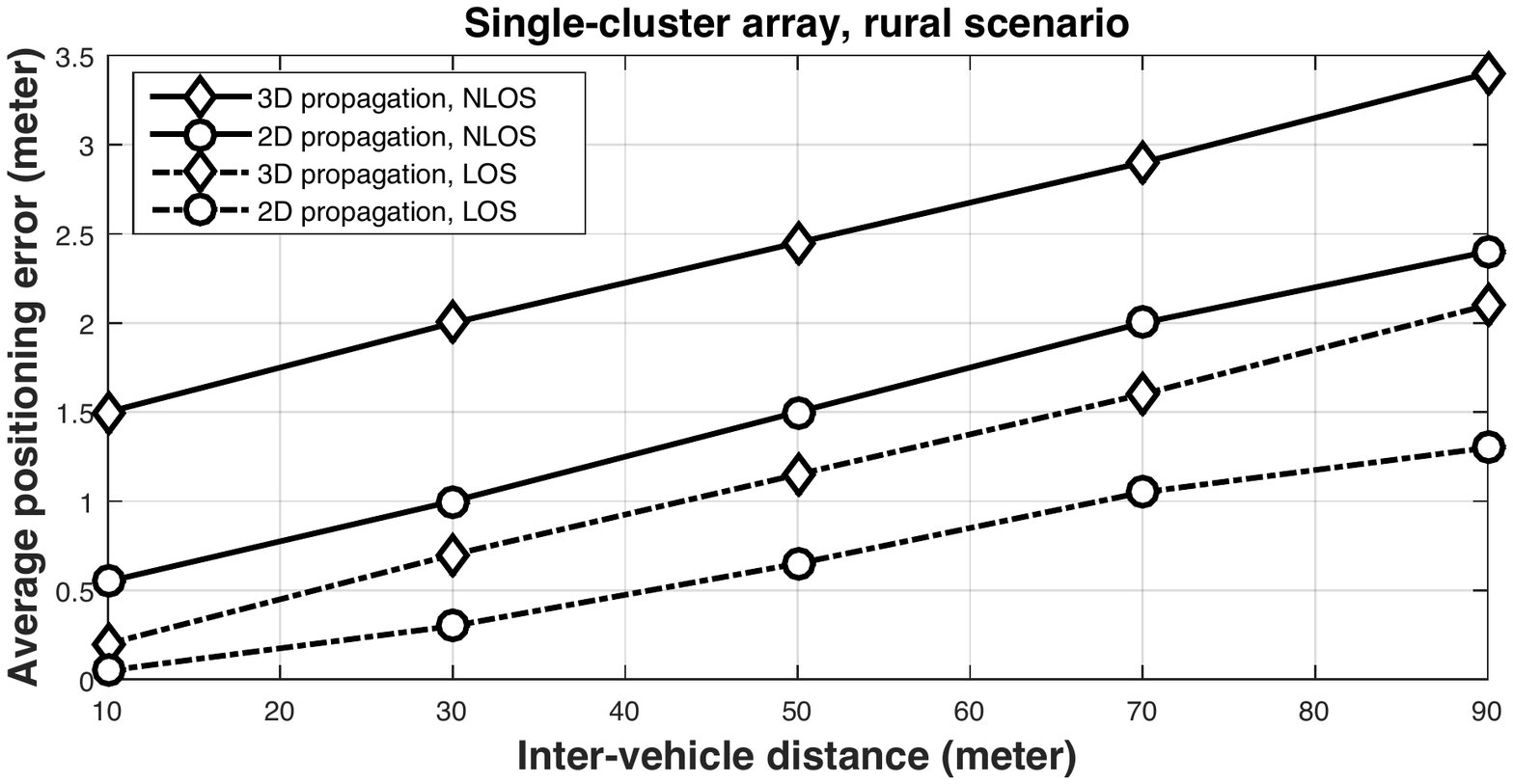}
%\vspace{-10pt}
\caption{Comparison of positioning accuracy between the cases with and without LoS path }\label{LoSComparison}
\end{minipage}
\begin{minipage}{0.495\textwidth}
%\centering
%\includegraphics[width=8.1cm]{./Fig1/}
%%\vspace{-10pt}
%\caption{Vehicle size sensing accuracy for the case of coupled antenna clusters and 2D propagation. }\label{sizingError}
%\end{minipage}
%\vspace{-30pt}
\centering
\includegraphics[width=8cm]{./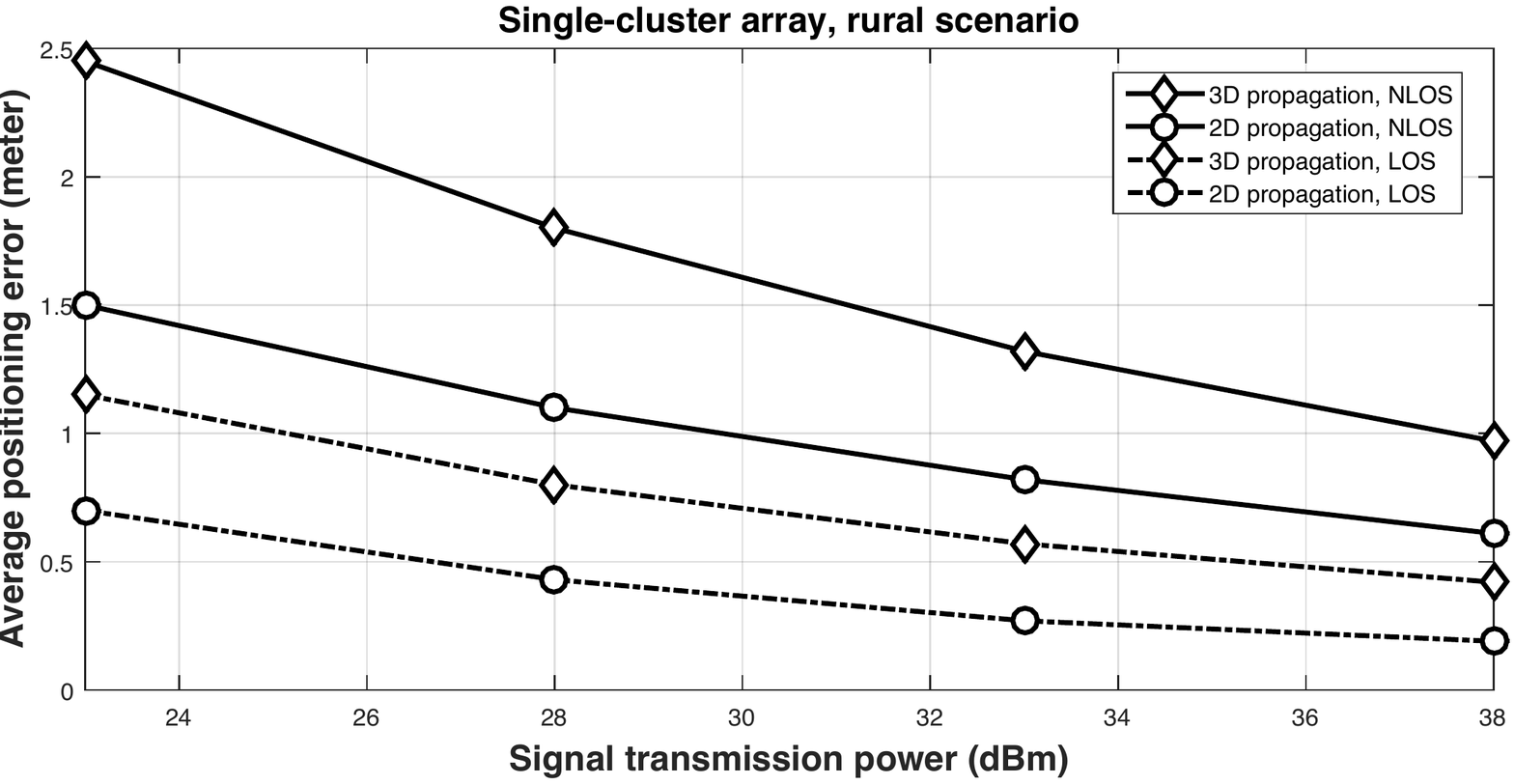}
\vspace{-10pt}
\caption{Effect of signal transmission power (i.e., SNR) on positioning accuracy.}\label{SNR}
\end{minipage}
\vspace{-20pt}
\end{figure}

\begin{figure}[t]
\centering
\includegraphics[width=8cm]{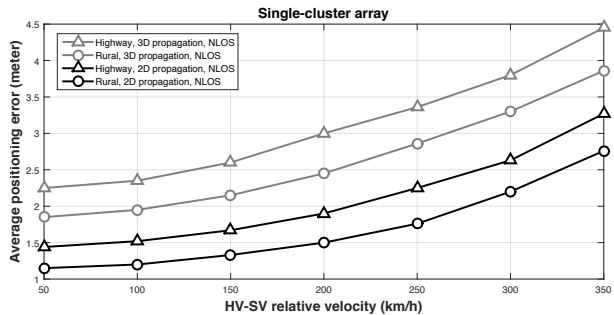}
\caption{Effect of HV-SV relative velocity on positioning accuracy.}\label{velocity}
\vspace{-20pt}
\end{figure}

%In Fig.\ref{improvingAccuracy}, we validate the idea proposed in Remark \ref{remark:improveAccuracy} that deploying more antennas per cluster or using larger transmission bandwidth to further improve the size sensing accuracy. It is observed in Fig.\ref{improvingAccuracy}(a) that the average sizing errors can be reduced by 1.5-3 m for NLoS case and 1-2 m for LoS case, respectively, by deploying 10 more antennas per cluster. In Fig.\ref{improvingAccuracy}(b), it is shown that the average sizing errors can be reduced by 1.2-2 m for NLoS case and 1-2m for LoS case, respectively, by using 50 MHz more transmission bandwidth.

%\begin{figure}[t]
%\centering
%\subfigure[Effect of number of antennas per cluster on size sensing accuracy.]{\includegraphics[width=7.1cm]{./Fig/Apr7_DisSizing_CircleBOX_4MIMO_Antennas.pdf}}
%\subfigure[Effect of number of transmission bandwidth on size sensing accuracy.]{\includegraphics[width=7cm]{./Fig/Apr7_DisSizing_CircleBOX_4MIMO_Bandwidth.pdf}}
%\caption{Further improving the size sensing accuracy by adding more antennas per cluster and using larger transmission bandwidth. }\label{improvingAccuracy}
%\vspace{-20pt}
%\end{figure}

\begin{figure}[t]
\centering
\subfigure[Effect of inter-vehicle distance on average sizing accuracy.]{\includegraphics[width=8cm]{./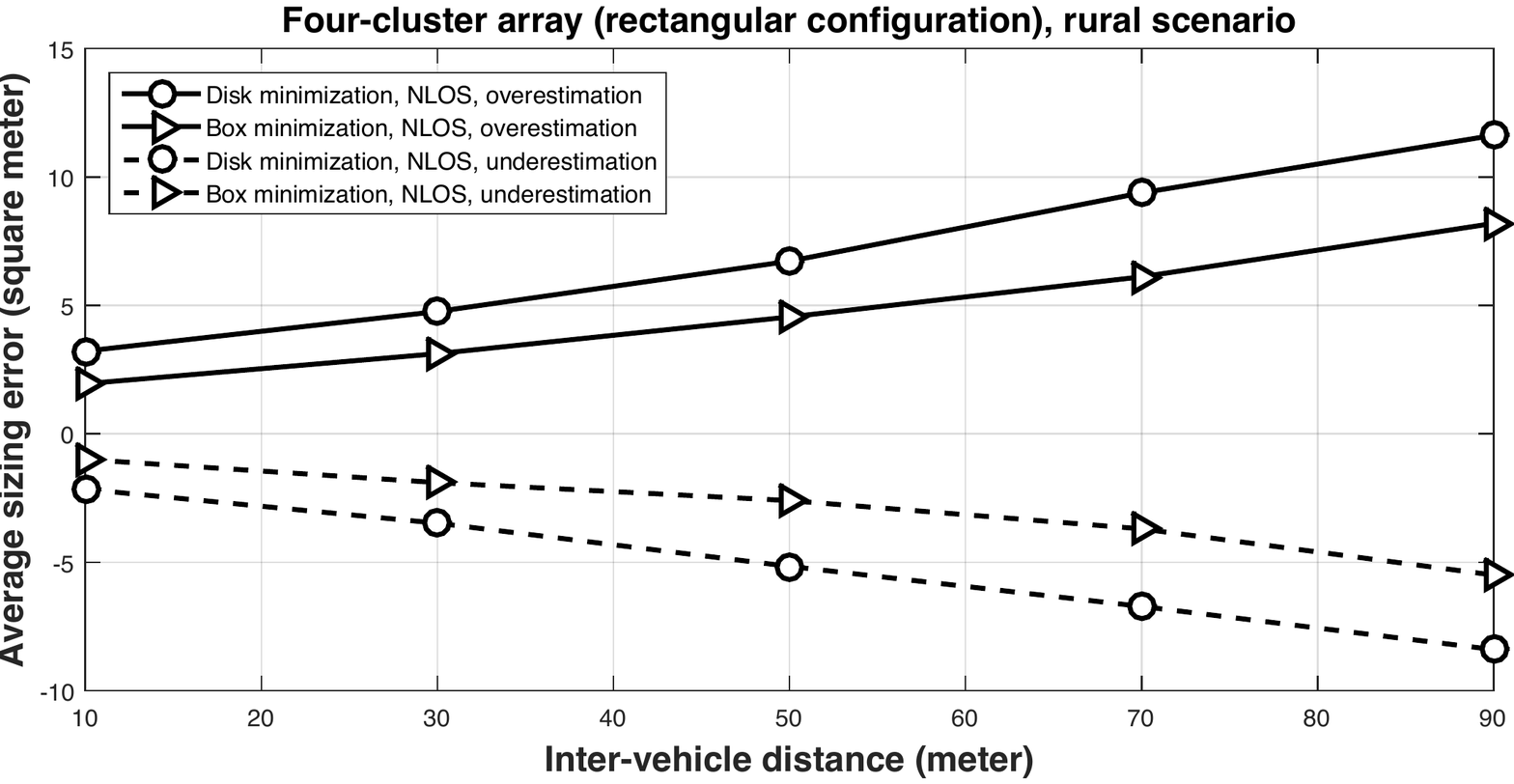}}
\subfigure[Effect of inter-vehicle distance on probability of overestimation of sizing.]{\includegraphics[width=8cm]{./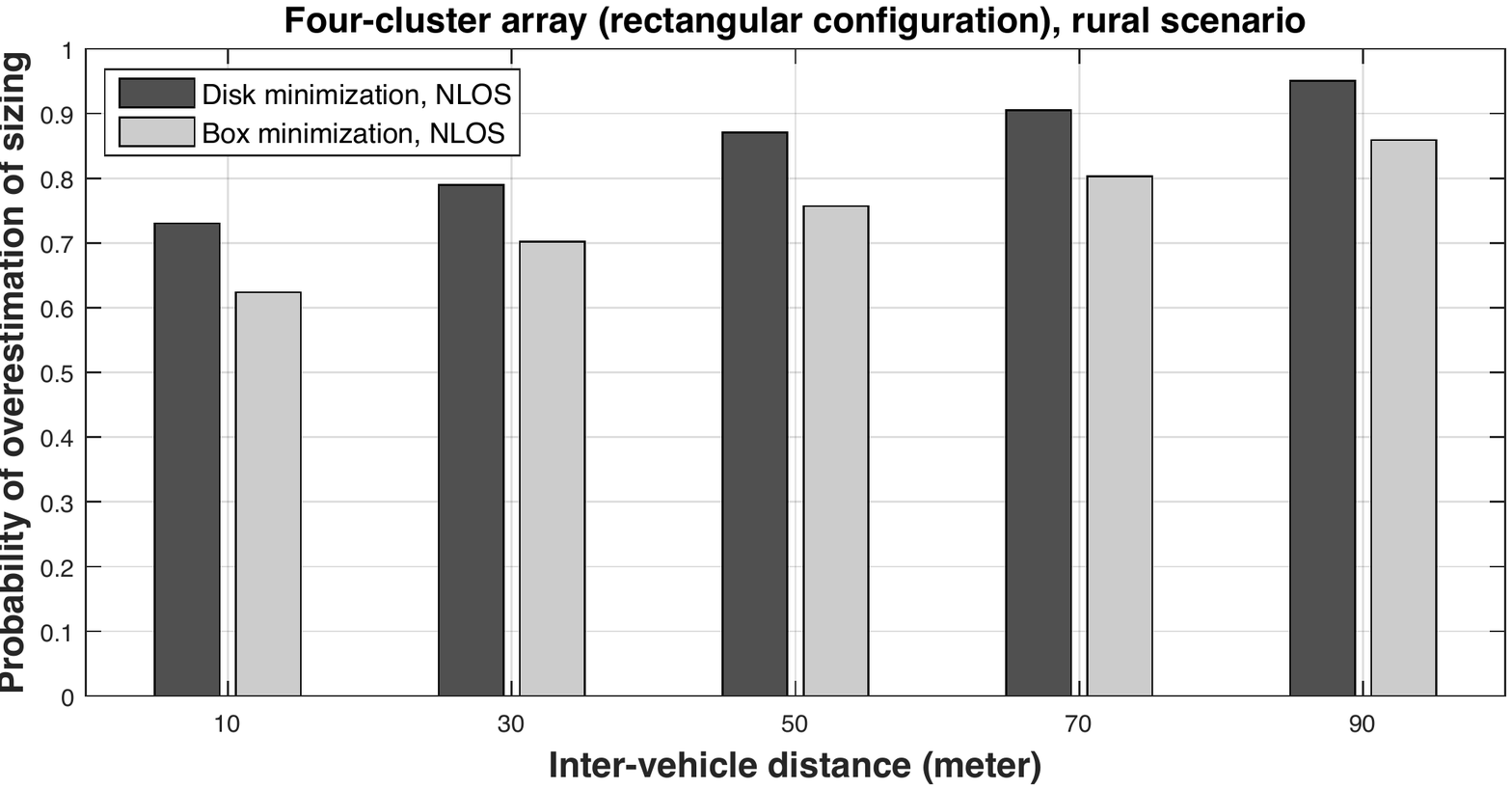}}
\vspace{-15pt}
\caption{Performance of proposed vehicular sizing approach.}\label{Sizing}
\vspace{-30pt}
\end{figure}

\vspace{-15pt}
\subsection{Coping with Insufficient Multi-Path}
\vspace{-5pt}
We evaluate the performance of the integrated solutions of  sequential path  combining and random directional beamforming both proposed in Section~\ref{sec:practicalIssue} in the case where scatterers are sparse and there are insufficient paths for vehicular sensing. The vehicle is equipped with a single cluster antenna array and transmits $Q$ repetitive signals with the transmission interval $\Delta=0.2~(\sec)$. The setting of $\Delta$ satisfies the two requirements explained in Sec. \ref{sec:SequentialCombine}, when considering the maximum coverage of V2V transmission $1$ km and assuming that the vehicle's  velocity is not changed within $2$ seconds.
We consider the random beamforming with the beam-width $\frac{2\pi}{Q}$ and $Q$ time instants sequential combining with $Q=\{1, 2, 4, 8\}$. Note that $Q=1$ corresponds to  isotropic transmission and serves as the benchmark.
The metric of  \emph{success sensing  probability} is defined as  the probability that the feasible condition of positioning is satisfied after applying sequential path  combining and random directional beamforming.
For example, in the case of isotropic transmission, at least $4$ paths should be detected at each time instant.
On the other hand, in case of random beamforming with sequential path combining, sensing  is feasible when the aggregate number of paths over  $Q$ time instants  should be larger than $5$ (see Section~\ref{sec:SequentialCombine}).
In Fig.~\ref{beamforming}(a), it is observed that the  sensing success  probability increases and approaches  to one as  $Q$ increases for both NLoS and LoS cases. Moreover, Fig.~\ref{beamforming}(b) shows that the positioning accuracy is improved
as sharper beam is used ($Q$ increases) since the angle detections become more accurate.

\vspace{-20pt}
\section{Concluding  Remarks}\label{sec:conclusion}
\vspace{-5pt}
This work presents the technologies for sensing a HV, namely its position, orientation, and size  by relying on V2V transmission and exploiting multi-path-geometry. Different techniques have been designed to support  tradeoffs   between the sensing accuracy and varying practical requirements in terms of   bandwidth,  signalling complexity, and array configuration. We have also addressed practical issue of insufficient multi-path for HV-sensing via developing effective techniques   exploiting path randomness in time domain. {The proposed technique can be used to assist the current vehicular sensing technologies  via giving the vehicle capability to sense HVs.} This work opens a new area of HV-sensing and points to many promising research topics on advanced HV-sensing such as velocity detection by estimating Doppler frequencies  and simultaneous sensing of multiple HVs.

%\vspace{-10pt}

%\cite{xiong2013arraytrack} \cite{vasisht2016decimeter}
\vspace{-20pt}

\begin{figure}[t]
\centering
\subfigure[Sensing success probability.]{\includegraphics[width=8cm]{./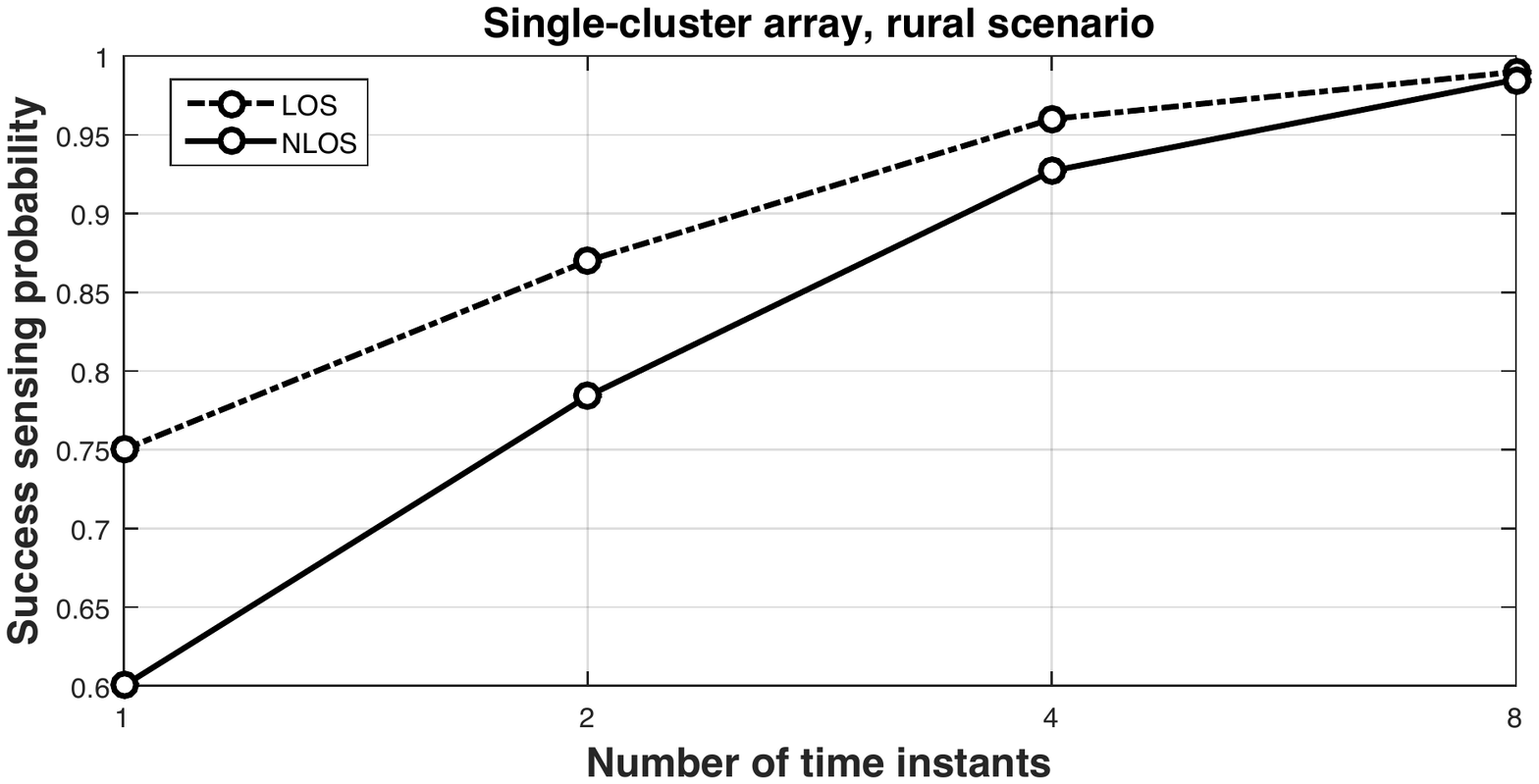}}
\subfigure[Effect of the inter-distance distance.]{\includegraphics[width=8cm]{./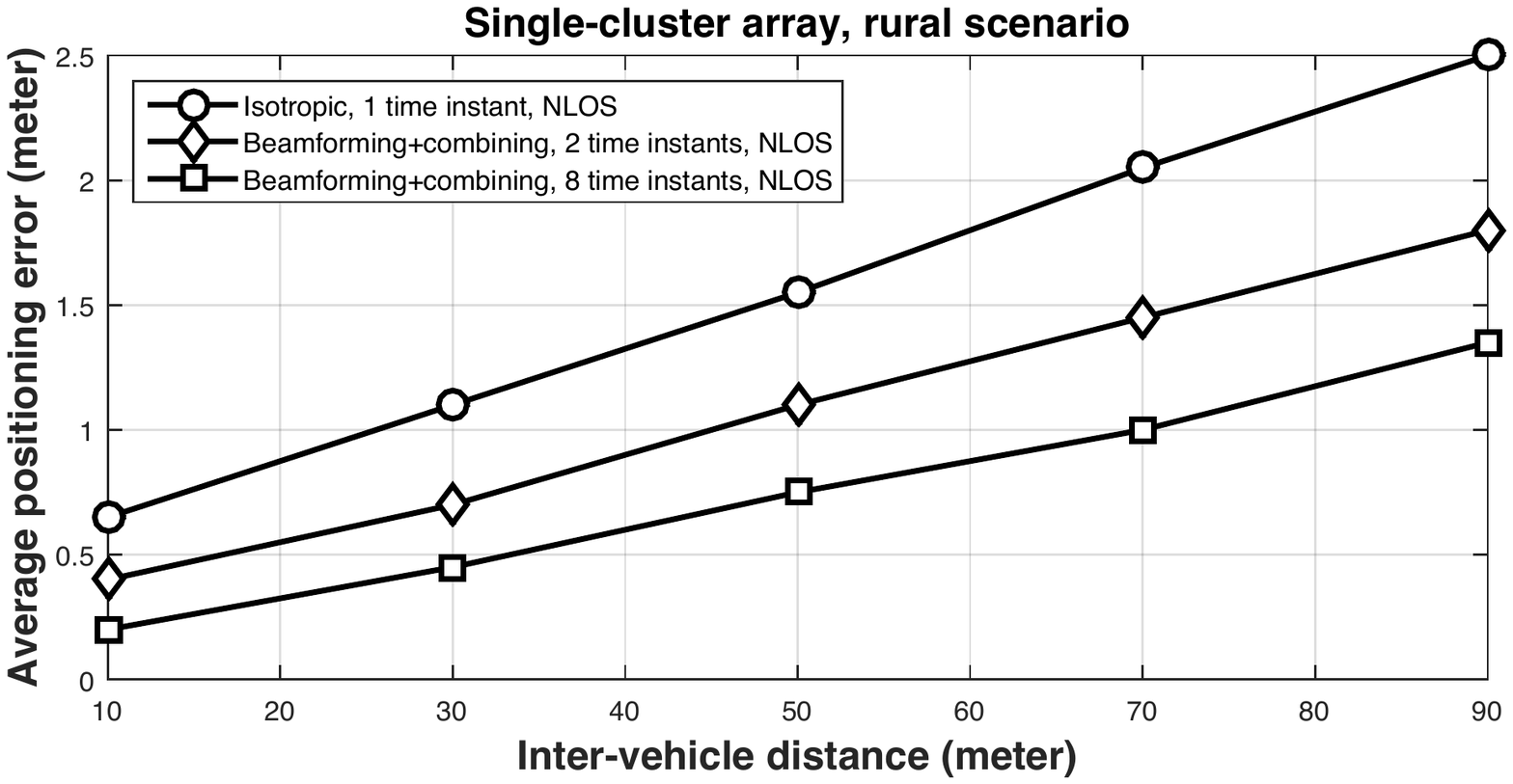}}
\vspace{-10pt}
\caption{Coping with  insufficient multi-path by sequential path combining and random directional beamforming. }\label{beamforming}
\vspace{-30pt}
\end{figure}

\vspace{-20pt}
\appendix
\vspace{-10pt}
\subsection{Proof of Proposition \ref{pro:optimaRD1_SCM}}\label{proof:Pro:minPosiCircle}
\vspace{-5pt}
The KKT conditions of Problem~ \ref{CircleMinProb} provide the following equality of the optimal solution~as
\begin{align}
&\sum_{p\in\mathcal{P}} \gamma_p\l(2\l(x_p - x_{0} \r)\frac{\partial x_p}{\partial d_1} + 2\l(y_p- y_{0} \r)\frac{\partial y_p}{\partial d_1}\r)=0,\label{KKT1_1}\\
&\gamma_p\l(2\l(x_p - x_{0} \r)\frac{\partial x_p}{\partial \nu_p} + 2\l(y_p- y_{0} \r)\frac{\partial y_p}{\partial \nu_p}\r)=0, \quad \forall p\in\mathcal{P}, \label{KKT1_2}\\
&\gamma_p \l(\l(x_p - x_{0} \r)^2 + \l(y_p- y_{0} \r)^2- r^2\r)=0, \quad \forall p\in\mathcal{P}, \label{KKT1_3}
\end{align}
where $\{\gamma_p\}$ represents the Lagragian multipliers of the first constraint in Problem \ref{CircleMinProb}. It is worth noting that at least two $\gamma_p$ should be strictly positive to satisfy \eqref{KKT1_1} and \eqref{KKT1_2}. From \eqref{KKT1_3}, it is obvious to lead \eqref{Optimal_Structure:SCM}, completing the proof.

\vspace{-20pt}
\subsection{Proof of Proposition \ref{pro:optimaRD1_SBM}}\label{proof:Pro:minPosiBox}
\vspace{-10pt}

The KKT conditions of Problem \ref{BoxMinProb} provide the following equality of the optimal solution as
\begin{align}
&\sum_{p\in\mathcal{P}}\Bigg[\bar{\gamma}_p\l(\cos(\omega) \frac{\partial x_p}{\partial d_1}+\sin(\omega)\frac{\partial x_p}{\partial d_1}\r)+\bar{\mu}_p\l(-\sin(\omega) \frac{\partial x_p}{\partial d_1}+\cos(\omega) \frac{\partial x_p}{\partial d_1}\r)\Bigg]=0,\label{KKT2_1}\\
&\Bigg[\bar{\gamma}_p\l(\cos(\omega) \frac{\partial x_p}{\partial \nu_p}+\sin(\omega) \frac{\partial x_p}{\partial \nu_p}\r)+\bar{\mu}_p\l(-\sin(\omega) \frac{\partial x_p}{\partial \nu_p}+\cos(\omega) \frac{\partial x_p}{\partial \nu_p}\r)\Bigg]=0,\label{KKT2_2}
\end{align}
where $\bar{\gamma}_p=\gamma_p^{(+)}-\gamma_p^{(-)}$ and $\bar{\mu}_p=\mu_p^{(+)}-\mu_p^{(-)}$ with Lagragian multipliers of the first constraint represented by $\gamma_p^{(+)}$, $\gamma_p^{(-)}$, $\mu_p^{(+)}$ and $\mu_p^{(-)}$, which are positive only when the corresponding equalities are satisfied. In other words, either $\gamma_p^{(+)} \l(\mu_p^{(+)}\r)$ or $\gamma_p^{(-)} \l(\mu_p^{(-)}\r)$ should be zero.

Some observations are made. First, to satisfy \eqref{KKT2_1} and \eqref{KKT2_2} simultaneously, at least two origins should be located in the boundary. Next, it is shown in \eqref{KKT2_2} such that if $\bar{\gamma}_p\neq 0$ then its counterpart multiplier $\bar{\mu}_p\neq 0$, which implies that the origin located in the boundary should be on the vertex. Last, the origin located at the vertex is equivalent to \eqref{Optimal_structure_SBM}, completing the proof.
\vspace{-10pt}

\subsection{Brief Introduction to Geometry-based Stochastic V2V Channel Model}

This section aims at summarizing some key features of our simulation channel model based on the geometry-based stochastic V2V channel model in \cite{karedal2009geometry}.   

\begin{enumerate}
\item Outline

The impulse response of the V2V channel, denoted by  $h(t)$, is expressed as the superposition of multiple signal paths, most of which are reflected by discrete scatterers including mobile and static ones as  
\begin{align}
h(t)=h_{\mathrm{LOS}}(t, \tau_{0})+\sum_{k=1}^{N_{\mathrm{MD}}} h_{\mathrm{MD}}(t,\tau_k)+\sum_{q=1}^{N_{\mathrm{SD}}} h_{\mathrm{SD}}(t,\tau_q), \nn
\end{align}
where $N_{{\mathrm{MD}}}$ and $N_{{\mathrm{SD}}}$ denote the numbers of mobile and statics scatterers (as described in Sec. VI), respectively, and $\tau$ refers to the time delay of the corresponding path. In case of NLoS/HV sensing, the term $h_{\mathrm{LOS}}(t, \tau_{0})$ is removed. Each signal path is given as 
\begin{align}
h(t, \tau)=\gamma \mathbf{b}\l(\theta\r) \mathbf{a}^{\mathrm{T}}\l(\varphi\r)\delta(t-\tau), \nn
\end{align}
where $\theta$ and $\varphi$ are the corresponding AoA and AoD, respectively, $\delta(t)$ is an impulse function of $t$, and $\gamma$ is the complex channel coefficient explained in the sequel. The  response vectors of HV and SV's antenna cluster $\mathbf{a}\l(\varphi\r)$ and $\mathbf{b}\l(\theta\r)$ are specified in (1) and (2), respectively. 

\item Layout and Scatterer Distribution

We consider a straight road of which the length is $600$ meters and the width is $8$ or $18$ meters depending on rural or highway scenarios \cite{karedal2009geometry}, respectively.   SV's location is fixed to the road's center and HV's location is changed according to the their distance. Mobile and static scatterers are randomly generated on the road and the road side, respectively. Their densities are specified in \cite[Table 1]{karedal2009geometry}. Given the locations of mobile and static scatterers, their delays $\{\tau\}$, AoAs $\{\theta\}$, and 
AoDs  $\{\varphi\}$ are directly calculated. 

\item Channel Coefficient

The complex channel coefficient $\gamma$ can be represented as the product of fast fading and path-loss. For fast fading, we consider a Rayleigh distribution with the mean specified in \cite{karedal2009geometry}. Next, consider a path-loss. Each signal path experiences different path-loss depending on its  flight distance and path-loss exponent. The path-loss exponent of a LoS path is fixed to $1.6$ for rural and $1.8$ for highway environments while that of a NLoS path is uniformly distributed as $\mathcal{U}[0, 3.5]$. 
\end{enumerate}

\bibliographystyle{ieeetr}
%\bibliography{BibDesk_File}

\end{document}